\documentclass[11pt]{amsart}
\usepackage[T1]{fontenc}
\usepackage{lmodern,microtype,amsmath,amssymb,amsthm,mathtools,enumitem}
\usepackage[margin=1in]{geometry}
\usepackage{booktabs}
\usepackage[hidelinks]{hyperref}
\newtheorem{mainthm}{Theorem}

\newtheorem{theorem}{Theorem}[section]
\newtheorem{proposition}[theorem]{Proposition}
\newtheorem{lemma}[theorem]{Lemma}
\newtheorem{corollary}[theorem]{Corollary}
\theoremstyle{definition}

\theoremstyle{remark}

\newcommand{\bits}{\{-1,1\}}
\newcommand{\E}{\mathbb E}
\renewcommand{\P}{\mathbb P}

\newcommand{\HH}{\mathsf H}
\newcommand{\I}{\mathsf I}
\newcommand{\D}{\mathsf D}
\DeclareMathOperator{\Var}{Var}
\DeclareMathOperator{\Inf}{Inf}
\DeclareMathOperator{\supp}{supp}
\newcommand{\wh}[1]{\widehat{#1}}
\newcommand{\eps}{\varepsilon}
\newcommand{\symdiff}{\mathbin{\triangle}}
\newcommand{\Lp}{L_*}
\newcommand{\Lmix}{L_{\mathrm{mix}}}
\newcommand{\Clev}{C_{\mathrm{lev}}}
\newcommand{\Creg}{C_{\mathrm{reg}}}
\newcommand{\Cmulti}{C_m}
\setlist{itemsep=3pt,topsep=5pt}
\hypersetup{pdftitle={Fourier entropy of threshold functions and their compositions},pdfauthor={Yitzchak Shmalo}}
\title[Fourier entropy of threshold functions]{Fourier entropy of threshold functions and their compositions}
\author{Yitzchak Shmalo}
\address{Einstein Institute of Mathematics, The Hebrew University of Jerusalem, Jerusalem, Israel}
\email{yitzchak.shmalo@gmail.com}
\date{11 October 2026}
\subjclass[2020]{42C10, 60C05, 68Q17, 06E30}
\keywords{Fourier entropy--influence conjecture, linear threshold function, regular Boolean function, antichain, product measure, Boolean formula}
\begin{document}
\begin{abstract}
We prove the Fourier entropy--influence inequality for every linear threshold function, with a constant below $8.937$ and no restrictions on its weights or threshold. The argument applies to regular Boolean functions and gives logarithmic bounds for their Boolean combinations and for unions with bounded variable overlap. Character multiplication preserves the estimates and yields bounds by influence minus variance. Under arbitrary product measures, symmetric thresholds satisfy the strengthened inequality needed for read-once composition, uniformly in depth and fan-in. A joint description of each Fourier set refines the overlap estimate and recovers constant bounds for addressing. Exact selector identities extend the bound to orthogonal multiplexers with shared data. Rare activation preserves the optimal constant of unrestricted FEI. The proofs combine pivotal antichains, spectral reference measures, and product-measure composition.
\end{abstract}
\maketitle
\enlargethispage{3pt}

\section{Introduction}

For a Boolean function $f:\bits^n\to\bits$, write
\[
 f(x)=\sum_{S\subseteq[n]}\wh f(S)\chi_S(x),
 \qquad
 \chi_S(x)=\prod_{i\in S}x_i,
\]
where the underlying measure is uniform unless another measure is specified.
The squared Fourier coefficients form a probability distribution. Its entropy and mean set size are
\begin{equation}\label{eq:HI}
 \HH(f)=\sum_{S\subseteq[n]}\wh f(S)^2\log\frac1{\wh f(S)^2},
 \qquad
 \I(f)=\sum_{S\subseteq[n]}|S|\wh f(S)^2.
\end{equation}
All logarithms denoted by $\log$ are to base $2$; $\ln$ denotes the natural logarithm.
The Fourier entropy--influence conjecture of Friedgut and Kalai \cite{FK96} asserts that
$\HH(f)\le C\I(f)$ for an absolute constant $C$ and every Boolean function.

A linear threshold function is a function
\begin{equation}\label{eq:ltf}
 f(x)=\operatorname{sgn}\left(\sum_{i=1}^n w_i x_i-\theta\right).
\end{equation}
We fix $\operatorname{sgn}(0)=1$; either convention can be represented by moving the threshold slightly off the finite set of attained scores. The weights may vanish, have either sign, or differ by arbitrarily large factors.

Our bounds use the quantity
\begin{equation}\label{eq:D}
 \D(f)=\sum_{i=1}^n\min\{\Inf_i(f),1-\Inf_i(f)\}
      =\min_{U\subseteq[n]}\I(\chi_U f),
\end{equation}
where $\Inf_i(f)=\P[f(X)\ne f(X^{\oplus i})]$.
Multiplication by a coordinate character permutes the Fourier coefficients, so it preserves entropy. In particular, a bound by $\D(f)$ gives FEI for every character multiple of $f$ at once.

A Boolean function is \emph{unate} if it becomes monotone after reversing some input coordinates.
For a monotone function, coordinate $j$ is at least as desirable as coordinate $i$ if
\begin{equation}\label{eq:desire}
 f(x_{j=1,i=-1})\ge f(x_{j=-1,i=1})
\end{equation}
for every assignment to the other coordinates.
We call a unate function \emph{regular} if, in a monotone orientation, its coordinates admit an order in which every earlier coordinate is at least as desirable as every later coordinate.
Thus every linear threshold function is regular: orient the weights to be nonnegative and order them by decreasing magnitude.
Here regularity refers to a coordinate order, and imposes no smallness condition on the largest weight.

Put
\begin{equation}\label{eq:constants}
 \begin{gathered}
 \Clev=\log(4+e),\qquad \Lp=1+\ln2,\qquad \Lmix=2+\gamma,\\
 \Creg=\Clev+2\log(e\pi)<8.937.
 \end{gathered}
\end{equation}
Here $\gamma$ is Euler's constant.

\begin{mainthm}\label{thm:A}
Every regular Boolean function $f$ satisfies
\begin{equation}\label{eq:mainreg}
 \HH(f)\le\Creg\,\D(f).
\end{equation}
Consequently, for every $U\subseteq[n]$, the function $g=\chi_Uf$ satisfies
\[
 \HH(g)\le\Creg\,\I(g),
 \qquad
 \HH(g)\le4\Creg\bigl(\I(g)-\Var(g)\bigr).
\]
These conclusions apply, in particular, to every linear threshold function.
\end{mainthm}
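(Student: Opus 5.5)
The plan is to prove \eqref{eq:mainreg} first for a monotone regular $f$ in its desirability order, and then read off the consequences. Reversing coordinates multiplies the Fourier coefficients by signs, so it changes neither $\HH$ nor $\D$; we may therefore assume $f$ is monotone and that $\Inf_1(f)\ge\Inf_2(f)\ge\dots\ge\Inf_n(f)$. That the influences decrease is itself a short consequence of \eqref{eq:desire}: swapping $x_i,x_j$ maps the $i$-pivotal points with $x_j$ fixed into the $j$-pivotal points.

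The entropy bound will come from Gibbs' inequality. For any probability distribution $q$ on subsets of $[n]$,
\[
\HH(f)\le\sum_S\wh f(S)^2\log\frac1{q(S)},
\]
so it suffices to build a \emph{spectral reference measure} $q$ with $\E_{S\sim\wh f^2}\log(1/q(S))\le\Creg\,\D(f)$. I will write $q(S)=q_{\mathrm{lev}}(|S|)\,q_{|S|}(S)$.
\begin{itemize}
\item The level part $q_{\mathrm{lev}}$ is a geometric-type law in the level $k=|S|$. Its cost should be at most $\Clev$ times the relevant weight. The constant $\log(4+e)$ suggests a normalization $\sum_k 2^{-\Clev k}$-style sum plus a separate charge for level $0$ controlled by $\Var$.
\item The within-level part $q_k$ is where regularity enters, through \emph{pivotal antichains}. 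At an input $x$, the coordinates pivotal among those with $x_i=1$ form an initial segment, in desirability order, of the $1$-coordinates; symmetrically on the $-1$ side. Expanding $\wh f(S)=\E[f\chi_S]$ through the derivatives $\partial_i f$, the mass $\wh f(S)^2$ should concentrate on sets that split into few such segments. This lets me take $q_k$ to be a product-type measure weighted by the influence profile.
\item A Stirling estimate on the resulting binomial counts should cost $2\log(e\pi)$ per unit of influence.
\end{itemize}

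The key step, and the one I expect to be hardest, is passing from $\I(f)$ to $\D(f)$. Coordinates with $\Inf_i(f)>1/2$ must be paid for by $1-\Inf_i(f)$. Write $U=\{i:\Inf_i(f)>1/2\}$; by monotonicity of the influences $U$ is an initial segment $\{1,\dots,m\}$, and $\D(f)=\I(\chi_Uf)$ (the identity in \eqref{eq:D}). Since $\HH(f)=\HH(\chi_Uf)$, I will run the Gibbs argument on $g=\chi_Uf$ instead. Its Fourier mass is shifted by $U$, and its influences satisfy $\Inf_i(g)=\min\{\Inf_i(f),1-\Inf_i(f)\}$. The first thing I would try is to show that the regular structure transfers: in the order where the coordinates of $U$ are listed after the rest, the pivotal antichains of $f$ become antichains of $g$ of the same segment type. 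If that holds, the reference-measure construction applies verbatim to $g$, with $\I(g)=\D(f)$ in place of $\I(f)$. This step is the main risk. If it fails, I would instead treat $U$ as a small block, since $m\le 2\sum_i\min\{\Inf_i(f),1-\Inf_i(f)\}$ by the definition of $U$, and pay its entropy directly.

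With \eqref{eq:mainreg} in hand, the consequences follow. For $g=\chi_Uf$ we have $\HH(g)=\HH(f)\le\Creg\D(f)\le\Creg\I(g)$, using $\D(f)=\min_V\I(\chi_Vf)$. For the second inequality it suffices to show $\D(f)\le4(\I(g)-\Var(g))$. Note that
\[
\I(g)-\Var(g)=\sum_{|S|\ge2}(|S|-1)\wh g(S)^2,
\]
so the claim amounts to showing that $\D(f)$ cannot be carried mostly by level-one weight of any character multiple. I would prove this using the same pivotal-segment structure: if $\wh g(\{i\})^2$ is large then $\Inf_i$ is close to $0$ or $1$ relative to the remaining mass, and $\min\{\Inf_i,1-\Inf_i\}\le 4\sum_{S\ni i,|S|\ge2}\wh g(S)^2$. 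Summing over $i$ then gives the factor $4$. Linear threshold functions are regular, as noted before \eqref{eq:constants}, which gives the final sentence of the theorem.
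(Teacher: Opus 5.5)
Your reduction to the monotone case, the Gibbs framework, and the deduction of $\HH(g)\le\Creg\I(g)$ from $\D(f)=\min_V\I(\chi_Vf)$ are fine. The core estimate, however, is missing.

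\textbf{The within-level reference.} The segment structure you note for pivotal sets is correct, but it describes inputs, not Fourier sets. The claim that $\wh f(S)^2$ concentrates on sets splitting into few segments is false. Majority is regular in every order, and its spectral weight is uniform over the $k$-subsets of each level, so its typical Fourier sets are unstructured. Nothing in your sketch bounds the cost of an influence-weighted product measure, and $2\log(e\pi)$ does not come from a Stirling count.

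The missing idea is a rank estimate for the derivative law $\nu_i(T)=\wh{d_if}(T)^2/\Inf_i(f)$. By desirability and monotonicity, on each fiber of the predecessor set $U_i$ (the $r(i)-1$ coordinates ranked before $i$), the support of $d_if$ is an antichain. Resample $U_i$ at rate $t$. Given the resampled set $R$, Sperner's theorem bounds the probability of landing in the support by $c_{|R|}=2^{-|R|}\binom{|R|}{\lfloor|R|/2\rfloor}\le(|R|+1)^{-1/2}$. This gives
\[
 \sum_T\wh{d_if}(T)^2(1-t)^{|T\cap U_i|}\le\frac{\Inf_i(f)}{\sqrt{r(i)t}},
 \qquad
 \E_{\nu_i}\ln\frac{r(i)}{|T\cap U_i|+1}\le\sum_{q\ge1}\frac{c_q}{q}=1+\ln2=\Lp .
\]
The first bound uses $\E(|R|+1)^{-1/2}\le(r(i)t)^{-1/2}$; the second uses the exact densities $c_q$ and the integral formula for harmonic numbers. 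So a Fourier set containing $i$ typically contains a number of predecessors of $i$ comparable to its rank.

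The reference matching this statistic is rank-based, not influence-based. By Euler's product for $\sinh(\pi\sqrt z)/(\pi\sqrt z)$, $Q_k(S)=\frac{(2k+1)!}{\pi^{2k}}\prod_{i\in S}i^{-2}$ is a subprobability on $k$-sets, and $-\ln Q_k(S)\le2\sum_j\ln(i_j/j)+k\ln(\pi^2/4)$ for $S=\{i_1<\dots<i_k\}$. Since $j-1=|S\cap U_{i_j}|$, averaging gives $(2\Lp+\ln(\pi^2/4))/\ln2=2\log(e\pi)$ per unit of influence. The level term $\Clev$ then completes $\Creg$; it comes from $\I(f)\ge v+\frac12v\log(1/v)$ with $v=\Var(f)$ and a geometric reference.

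\textbf{Centering.} This step is also unresolved, and your fallback is false. For $i\in U$ one has $\min\{\Inf_i,1-\Inf_i\}=1-\Inf_i<1/2$, so $|U|\le2\D(f)$ fails; for $f=x_1$, $|U|=1$ and $\D(f)=0$. Moving $U$ to the end of the order does not restore the antichain structure either. For $i\in U$, $d_i(\chi_Uf)$ equals $\chi_{U\setminus\{i\}}$ times $\frac12\bigl(f(\cdot,1)+f(\cdot,-1)\bigr)$, which is supported on the non-pivotal set of $f$ at $i$: a union of an up-set and a down-set.

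The observation you need is that a coordinate with at least one predecessor has influence at most $1/2$. Its pivotal support on the one-dimensional fiber of that predecessor is an antichain, i.e.\ at most one of the two points. So $U$ contains at most the first coordinate, whose rank cost is zero, and the order never changes. For $i\notin U$, $d_i(\chi_Uf)=\chi_Ud_if$ has the same antichain support. The noise bound above still applies to this signed function, since it is proved through $\E|d(X)||d(Y)|$.

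\textbf{The variance consequence.} Your per-coordinate factor $4$ only yields $8(\I(g)-\Var(g))$, because $\sum_i\sum_{S\ni i,|S|\ge2}\wh g(S)^2=\sum_{|S|\ge2}|S|\wh g(S)^2\le2(\I(g)-\Var(g))$. The right per-coordinate bound is $\min\{a,1-a\}\le2a(1-a)\le2\bigl(a-\wh g(\{i\})^2\bigr)$ with $a=\Inf_i(g)$, using $|\wh g(\{i\})|=|\E d_ig|\le a$. It holds for every Boolean function and needs no pivotal structure.
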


The same method allows several coordinate orders to coexist. The component functions may depend on all the same variables, and their orientations may differ.

\begin{mainthm}\label{thm:B}
Let $m\ge1$, let $h_1,\ldots,h_m$ be regular Boolean functions, let $F:\bits^m\to\bits$ be arbitrary, and put $f=F(h_1,\ldots,h_m)$. Then
\begin{align}
 \HH(f)&\le \left(11.238+\frac{17}{5}\log m\right)\I(f),\label{eq:multiI}\\
 \HH(f)&\le \left(11.460+\frac{22}{5}\log m\right)\D(f).\label{eq:multiD}
\end{align}
For every character multiple $g=\chi_U f$, the second estimate also gives
\[
 \HH(g)\le \left(11.460+\frac{22}{5}\log m\right)\I(g)
\]
and
\[
 \HH(g)\le4\left(11.460+\frac{22}{5}\log m\right)
                  \bigl(\I(g)-\Var(g)\bigr).
\]
The decimal constants in these statements are rounded upward.
\end{mainthm}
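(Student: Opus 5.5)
The plan is to rerun the spectral reference-measure argument behind Theorem~\ref{thm:A}, now with a reference measure that is a mixture over the $m$ coordinate orders. The starting point is Gibbs' inequality: for every probability distribution $q$ on subsets of $[n]$,
\[
 \HH(f)\le\sum_S\wh f(S)^2\log\frac1{q(S)}.
\]
So it suffices to build $q$ with $\log(1/q(S))\le c_1|S|+(\text{order-dependent cost of }S)$, where $c_1$ absorbs the $\frac{17}{5}\log m$ term, and then to bound the spectral average of the order-dependent cost by a constant times $\I(f)$.

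For a single regular $h$, I expect the proof of Theorem~\ref{thm:A} to encode a Fourier set by its pivotal antichain relative to the desirability order, paying for each element a cost governed by its rank in the order. The bound on that cost rests on the fact that influences are monotone along the order: earlier coordinates are at least as desirable, hence at least as influential. For $f=F(h_1,\ldots,h_m)$ I would encode $S$ as follows.

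\begin{enumerate}
\item For each $i\in S$, name a component $k(i)\in[m]$ through which $i$ acts. This costs $\log m$ plus a constant per element, and the bookkeeping of the block structure accounts for the $\frac{17}{5}$ factor.
\item Inside each block $S_k=\{i\in S:k(i)=k\}$, use the order of $h_k$ and the encoding from Theorem~\ref{thm:A}, with costs bounded by the $k$-th order's gaps.
\end{enumerate}

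The choice of $k(i)$ should be made pointwise from the derivative structure. If flipping $x_i$ changes $f$, then it changes some $h_k$ that is pivotal for $F$ at that point. This lets us charge the Fourier weight carried by $i$ to the quantities
\[
 \Inf_i^{(k)}=\P\bigl[i\text{ pivotal for }h_k\text{ and }h_k\text{ pivotal for }F\bigr].
\]
These are the natural per-order influences, and they are monotone along the $k$-th order, because composing with $F$ at a fixed setting of the other components does not break desirability.

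The main obstacle is step 2. I must bound the rank costs by $\I(f)$ rather than by $\sum_k\I(h_k)$: the inequality $\Inf_i(f)\le\sum_k\Inf_i^{(k)}$ points the wrong way, so the components' influences can exceed $f$'s. My first attempt is to define the reference measure on block $k$ through the spectral weights themselves, using the pivotal antichain of $f$ restricted to the $k$-th order, rather than through $h_k$. The cost is then a function of the positions of $S$ in that order, and it is dominated by an expression that is linear in $|S|$ averaged under $\wh f^2$. This gives exactly $\I(f)$, at the price of the constant $11.238$.

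For \eqref{eq:multiD}, let $U=\{i:\Inf_i(f)>1/2\}$ and pass to $\chi_Uf$, which has the same entropy. Write $S=T\symdiff U$, where $T$ is a Fourier set of $\chi_U f$, so that $\I(\chi_Uf)=\D(f)$ measures $|T|$. I will encode $U$ deterministically, since it is fixed, and encode the symmetric difference with the same block scheme. An extra per-element charge then raises $\frac{17}{5}$ to $\frac{22}{5}$ and $11.238$ to $11.460$.

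The consequences for $g=\chi_Uf$ follow as in Theorem~\ref{thm:A}. First, $\HH(g)=\HH(f)$ and $\D(g)=\D(f)\le\I(g)$. Second, the inequality $\D(g)\le4(\I(g)-\Var(g))$ used there for the last claim of Theorem~\ref{thm:A} applies verbatim, since it concerns only an arbitrary Boolean $g$.
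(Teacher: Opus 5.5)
You have the right outer structure: Gibbs against a reference law, centering by $U=\{i:\Inf_i(f)>1/2\}$, and the two character consequences via Lemma~\ref{lem:centering}. But the step that carries the theorem is only asserted, and the mechanism you name for it would not work.

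\textbf{The rank cost comes from antichains, not monotone influences.} Monotone influences along a desirability order cannot control a charge that depends only on rank. For example, a charge of $2\log r(i)$ per element has spectral average $2\sum_i\Inf_i(f)\log r(i)$, which for majority on $n$ bits is about $2\I\log n$. What is needed is a charge measured relative to the Fourier set, such as $\ln\bigl(r(i)/(|T\cap U_i|+1)\bigr)$ with $U_i$ the predecessors of $i$. The paper's level-$k$ cost $a\ln^+(r(i)/k)$ is at most $a$ times this, since $|T\cap U_i|+1\le k$. You also need the fact that under the derivative law $\nu_i$ the set $T$ typically contains about $r(i)$ predecessors. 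That fact comes from antichains. On each fiber, the $i$-pivotal set of a monotone regular function is an antichain in the predecessor cube (Lemma~\ref{lem:pivotal-antichain}). Sperner's bound under partial resampling then gives $\sum_T\wh d(T)^2(1-t)^{|T\cap U_i|}\le\alpha/\sqrt{rt}$ for every signed $d$ on such a support, with $\alpha=\E d^2$ (Lemma~\ref{lem:antichain-noise}). This yields the rank tail (Lemma~\ref{lem:ranktail}).

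\textbf{Combining several components.} There is no ``pivotal antichain of $f$'', since $f$ need not be unate. A component $k(i)$ chosen at an input point is also not a description of a Fourier set. Your pointwise claim even fails as stated: for $h_1=h_2=x_1$ and $F=\min$, flipping $x_1$ at $x_1=-1$ changes $f$, yet neither component is pivotal for $F$ there. The paper instead does the following.
\begin{enumerate}
\item It splits $d_if$ into disjoint-support signed pieces according to the first $h_\ell$ for which $i$ is pivotal.
\item It passes to Fourier laws by weighted Cauchy--Schwarz, $\nu_i\le\sum_\ell\nu_{i,\ell}$.
\item It bounds the minimum over orders of the relative charge, $\E_{\nu_i}G_i\le2+\gamma+2\ln m$ (Lemma~\ref{lem:multi-rank}).
\end{enumerate}
The level-$k$ weights $\max_\ell\min\{1,(k/r_\ell(i))^a\}$ have normalizer at most $(ema/(a-1))^k$. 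So at $a=6/5$ the coefficient $17/5=1+2a$ is $1$ from the normalizer plus $2a$ from the rank lemma, not from ``bookkeeping''.

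\textbf{The level term.} Because the weights depend on the level, you must also pay $H(|\mathcal S|)\le\Clev\I(f)$ (Lemma~\ref{lem:levelentropy}), which rests on edge isoperimetry. Your sufficient condition $\log(1/q(S))\le c_1|S|+\cdots$ forces $q(\varnothing)=1$ at $S=\varnothing$ if the empty set carries no order cost. For nearly constant $f$, the empty coefficient is affordable only if $q(\varnothing)=1-\Var(f)$ and $h_2(\Var(f))$ is charged to $\I(f)$.

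\textbf{The centered bound \eqref{eq:multiD}.} Your ``extra per-element charge'' hides the real work.
\begin{enumerate}
\item The rank estimate survives centering only for $i\notin U$. There $d_i(\chi_Uf)=\chi_Ud_if$ keeps the antichain supports, with changed signs.
\item Coordinates of $U$ lose pivotal support, so they receive weight $1$ and no rank cost.
\item These weight-one coordinates enter the normalizer as $(e(|U|+ma/(a-1)))^k$, so you need a bound on $|U|$.
\end{enumerate}
The unate Chow identity gives $\sum_i\Inf_i(h_\ell)^2\le1$, and $\Inf_i(f)\le\sum_\ell\Inf_i(h_\ell)$ then gives $\sum_i\Inf_i(f)^2\le m^2$. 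Hence $|U|\le4m^2-1$, and $|U|+6m\le7m^2$ for $m\ge2$; for $m=1$, only the rank-one coordinate can lie in $U$. This is exactly what turns $\ln6+\ln m$ into $\ln7+2\ln m$, that is, $17/5$ into $22/5$ and $11.238$ into $11.460$. Without it the centered normalizer is uncontrolled.
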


This includes intersections, unions, parity combinations, and arbitrary Boolean combinations of $m$ halfspaces.
It also includes every Boolean labeling of a linear score with at most $m$ transitions.
The number of input variables and the geometry of the weights do not enter the bounds.

The proof extends further to a unate function whose coordinates split into $w$ desirability chains: one adds $\log w$ to the constant in Theorem~\ref{thm:A}. For combinations of $m$ such functions, one adds $\log w$ to the constants in Theorem~\ref{thm:B}; see Theorem~\ref{thm:width}.
A different spectral reference measure gives a bound depending on variable overlap rather than the number of component functions.

\begin{mainthm}\label{thm:readk-intro}
Suppose $f$ is a union or intersection of functions $h_1,\ldots,h_N$, each an arbitrary Boolean combination of at most $M$ regular functions, and every input variable is relevant to at most $k$ of the $h_l$. Then
\[
 \HH(f)\le\left(41+7\log k+\frac{29}{5}\log M\right)\I(f).
\]
Writing $\HH^+(f)=\sum_{S\ne\varnothing}\wh f(S)^2\log(1/\wh f(S)^2)$, one also has
\[
 \HH^+(f)\le
 \left(132+\frac{118}{5}\log k+\frac{106}{5}\log M\right)
 \bigl(\I(f)-\Var(f)\bigr).
\]
The number $N$ and the arities of all gates are unrestricted.
\end{mainthm}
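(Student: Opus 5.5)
We want a bound on $\HH(f)$ for $f=\bigwedge_l h_l$ (unions follow by negation, which only changes the sign of $f$ and preserves $\HH$ and $\I$). Each $h_l=F_l(h_{l,1},\dots,h_{l,M})$ is a combination of at most $M$ regular functions. The plan is to redo the argument behind Theorem~\ref{thm:B}, but with a different spectral reference measure. Theorem~\ref{thm:B} charges roughly $\log m$ per unit of influence, where $m$ is the total number of regular pieces. Here we want the charge to be $\log k+\log M$, with no dependence on $N$.

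\textbf{Step 1: a joint description of each Fourier set.} For a set $S$ in the spectral support of $f$, record the set $L(S)=\{l: S\cap \mathrm{rel}(h_l)\neq\varnothing\}$ of blocks it touches. Inside each touched block, record which of the $M$ regular components and which desirability position is responsible for each coordinate, via the pivotal antichain structure of regular functions. The reference measure $q(S)$ is chosen as a product over $i\in S$ of weights of the following form:
\begin{itemize}
\item a factor $1/k$ for the choice of the block $l\ni i$ that "owns" $i$; this is possible since $i$ is relevant to at most $k$ blocks;
\item a factor $1/M$ for the choice of the component inside $h_l$;
\item the level/position weights already used in Theorem~\ref{thm:A}, which cost $\Clev+2\log(e\pi)$ per unit.
\end{itemize}
Gibbs' inequality then gives
\[
\HH(f)\le\sum_S\wh f(S)^2\log\frac{1}{q(S)}.
\]
The right-hand side is bounded by $\sum_S \wh f(S)^2|S|\,(c_0+\log k+\log M)$, plus an overhead for encoding which blocks are nonconstant.

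\textbf{Step 2: controlling the AND structure.} Here $f=1$ iff all $h_l=1$. Write $1-f$ via inclusion–exclusion or, better, use the product-measure composition results. Write $f=\mathrm{AND}(h_1,\dots,h_N)$, a read-once composition of the symmetric threshold $\mathrm{AND}$ with the $h_l$'s. The paper states that symmetric thresholds satisfy the strengthened inequality needed for read-once composition under arbitrary product measures. The $h_l$ overlap, so the composition is not read-once. The fix is to assign each variable to one owning block, paying the factor $\log k$, and to condition on the variables of the other blocks. We then treat each $h_l$ as a biased input and apply the strengthened (influence-minus-variance type) inequality for AND under the induced product measure with biases $p_l=\P[h_l=1]$. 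The $\I-\Var$ form is what makes the bound uniform in $N$: blocks with $p_l$ near $1$ contribute entropy comparable to $\Var$, which is dominated by the influence they carry.

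\textbf{Step 3: the second inequality.} For $\HH^+$ and $\I-\Var$, rerun the same bookkeeping while discarding the empty set. Use the Theorem~\ref{thm:A} mechanism, namely bounding by $\D$ and the conversion $\I(g)\ge\Var(g)$ type inequalities with the factor $4$, to convert. This accounts for the roughly threefold larger constants.

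The main obstacle is Step 2: making the overlap argument rigorous, since conditioning destroys exact product structure and the cross terms between blocks sharing variables must be charged to influence. I would first try an entropy chain rule over the owner assignment, $\HH(f)\le \HH(\text{owner pattern})+\sum_l\HH(\text{block part}\mid\cdot)$. The owner-pattern entropy would be bounded by $\I(f)\log k$, and each block part would be bounded by Theorem~\ref{thm:B} with $m=M$ under the induced measure.
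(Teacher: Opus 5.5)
\textbf{There is a genuine gap.} It lies in what you call ``an overhead for encoding which blocks are touched.'' That overhead is the entire difficulty, not a lower-order term.

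\textbf{Step 1.} A reference of the form $\prod_{i\in S}w(i)$ must be normalized over all coordinates, and each of the $N$ blocks has its own top-ranked coordinates. Already for a read-once union, the level-$s$ normalizer grows like $N^s$, so this reference costs about $\log N$ per element of $S$. For tribes that is unbounded relative to $\I(f)$. The factor $1/k$ pays for the owner of a coordinate that is already specified, not for specifying it.

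The paper replaces the product reference by a branching (queue) reference. A gate enters either as a root, chosen independently with probability about $\pi_l/k$, or as a child reached through an already selected variable, at cost $\log k$. The witness for a Fourier set $A$ is an owner map together with a directed spanning forest whose roots represent source strongly connected components. This makes the sets $A\cap R_r$ at distinct roots disjoint. Paying for the roots needs two estimates that your plan lacks:
\begin{enumerate}
\item The path bound $p_0\sum_lp_l\le\frac k4\I(f)$ of Lemma~\ref{union:path} pays the vacuum term $\sum_l\log(1+\pi_l/k)$.
\item The hypercontractive Lemma~\ref{union:rare} forces the pivotal spectral sample of a rare gate to contain about $b_l/\beta$ of its variables, except with probability $O(2^{-5b_l/12})$. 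So each root's rarity cost $\log(1/\pi_r)$ can be charged either to $\beta|A\cap R_r|$ or to a per-coordinate cost with an exponential tail.
\end{enumerate}

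\textbf{Step 2.} This step cannot substitute for the above. Proposition~\ref{comp:identities} uses disjointness to make the decomposition $S=\bigcup_{i\in A}T_i$ unique. With shared variables, the coefficients of $f$ are sums of interfering terms.

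Conditioning also runs the wrong way. The marginal spectral law on the free coordinates is $\E_z\wh{f_z}(\cdot)^2$. Its entropy is at least $\E_z\HH(f_z)$, and it can exceed that by up to the entropy of $z$.

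In your chain rule, an owner pattern that is a function of $S$ carries the identities of the touched blocks. That is the same unpaid information, and it is not bounded by $\I(f)\log k$. Moreover, the conditional block laws are not spectral laws of the $h_l$ under any product measure. Theorem~\ref{thm:B} is a uniform-cube result, and product-measure FEI+ is proved here only for symmetric thresholds and elementary gates.

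Your composition idea does work in the disjoint case $k=1$. There Corollary~\ref{comp:closure}, with the AND/OR constant and $\HH^+(h_l)\le4\Cmulti(\I(h_l)-\Var(h_l))$ at $m=M$, gives a bound of the same form.

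\textbf{Step 3.} This also fails as stated. The paper's centering bounds the number of coordinates of influence above $1/2$ by $4m^2$, where $m$ is the total number of regular components; here that is $NM$, which reintroduces $N$. Even granting a $\D$-bound with the ordinary constant $41+\cdots$, the factor $4$ would give $164+\cdots$, which exceeds the claimed $132+\cdots$.

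The paper instead proves $\HH^+(f)\le(66+\cdots)J(f)$ directly, where $J(f)=\sum_{|A|\ge2}|A|\wh f(A)^2\le2(\I(f)-\Var(f))$. It uses three ingredients:
\begin{enumerate}
\item Lemma~\ref{union:smallmass}, for the singleton and mass terms;
\item the vacuum bound $W\Lambda\le(\log e)J$;
\item a conditioned rank--rarity estimate, which uses that $\nu_j(\varnothing)>1/2$ forces $j$ to have small rank in some inner gate.
\end{enumerate}
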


The case $M=1$ gives a bound depending only logarithmically on the variable overlap for unions and intersections of any number of halfspaces.
The proof uses a reference distribution generated by a queue of gates.
For each Fourier set, a directed spanning forest supplies one finite run; the disjoint intersections of its root supports with that Fourier set control the cost of activating rare gates.
Signed pivotal pieces provide the same rank estimate as for one regular function, while a fixed-noise hypercontractive bound forces a rare gate's spectral sample to include enough of its variables.

For a product measure with probabilities $q_i=\P(y_i=1)$ on $\{0,1\}^n$, write $\wh f_\mu$ for the coefficients in the normalized product basis and put
\[
 \HH_\mu^+(f)=\sum_{S\ne\varnothing}\wh f_\mu(S)^2
    \log\frac{\prod_{i\in S}4q_i(1-q_i)}{\wh f_\mu(S)^2},
 \qquad
 \Delta_\mu(f)=\I_\mu(f)-\Var_\mu(f).
\]
The normalization and the treatment of deterministic inputs are given in Section~\ref{bias:normalization}. On the uniform cube, $\HH^+$ is the contribution of the nonempty Fourier coefficients to $\HH$. We say that $f$ satisfies FEI+ with constant $C$ under $\mu$ if $\HH_\mu^+(f)\le C\Delta_\mu(f)$; on the uniform cube this reads $\HH^+(f)\le C(\I(f)-\Var(f))$.

\begin{mainthm}\label{thm:formulas-intro}
Every symmetric threshold satisfies
\[
 \HH_\mu^+(f)\le C_T\Delta_\mu(f),
 \qquad C_T=80+\frac{460}{\ln2}<744,
\]
under every product measure. The same inequality holds for every read-once formula over symmetric threshold and parity gates, with arbitrary depth, fan-in, and negations. For uniform original inputs these formulas satisfy
\[
 \HH(f)\le744\I(f),\qquad
 \HH(f)\le747\bigl(\I(f)-\Var(f)\bigr).
\]
\end{mainthm}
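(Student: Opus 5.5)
The plan has three parts. First, prove FEI+ for a single symmetric threshold under an arbitrary product measure $\mu$. Second, show that FEI+ with a fixed constant is preserved under read-once composition, uniformly in depth and fan-in. Third, specialize to uniform inputs and convert $\HH^+\le C_T\Delta$ into the two stated bounds.

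\emph{Composition.} Write $f=G(h_1,\dots,h_m)$ with the $h_j$ on disjoint variable blocks. Under the product measure $\mu$, the outputs $y_j=h_j$ are independent bits with biases $p_j=\P_\mu(h_j=1)$. Let $\nu$ be the induced product measure on $\bits^m$. Expanding $G$ in the $\nu$-normalized basis and each normalized $h_j$ in its own $\mu$-basis gives the factorization
\[
\wh f_\mu(S_1\cup\dots\cup S_m)^2=\wh G_\nu(T)^2\prod_{j\in T}\frac{\wh{h_j}_\mu(S_j)^2}{4p_j(1-p_j)},
\]
where $T$ is the set of $j$ with $S_j\neq\varnothing$. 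The weights $\prod_{i\in S}4q_i(1-q_i)$ in $\HH_\mu^+$ are chosen precisely so that the entropy splits additively:
\[
\HH^+_\mu(f)=\HH^+_\nu(G)+\sum_{T}\wh G_\nu(T)^2\sum_{j\in T}\frac{\HH^+_\mu(h_j)}{\Var_\mu(h_j)}.
\]
A parallel computation gives
\[
\I_\mu(f)=\sum_T\wh G_\nu(T)^2\sum_{j\in T}\frac{\I_\mu(h_j)}{\Var_\mu(h_j)}.
\]
Here $\Var_\mu(h_j)=4p_j(1-p_j)$, and the normalization relating $\wh{h_j}$ to $\Var$ is fixed in Section~\ref{bias:normalization}.

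Now apply the inductive hypothesis $\HH_\mu^+(h_j)\le C\bigl(\I_\mu(h_j)-\Var_\mu(h_j)\bigr)$. Dividing by $\Var_\mu(h_j)$ gives
\[
\frac{\HH^+_\mu(h_j)}{\Var_\mu(h_j)}\le C\left(\frac{\I_\mu(h_j)}{\Var_\mu(h_j)}-1\right).
\]
Summing over $j\in T$ and over $T$ bounds the second term by $C\bigl(\I_\mu(f)-\I_\nu(G)\bigr)$. The first term is at most $C\bigl(\I_\nu(G)-\Var_\nu(G)\bigr)$ by FEI+ for the gate under $\nu$. Adding the two, $\HH^+_\mu(f)\le C\bigl(\I_\mu(f)-\Var_\mu(f)\bigr)$, since $\Var_\mu(f)=\Var_\nu(G)$. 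So the constant does not grow with depth. This is why the ``$-\Var$'' strengthening is needed: plain FEI would lose an additive $C\,\Var$ at every level.

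\emph{Gates.} Negations change only signs, so they are free. A parity gate has a single nonempty coefficient, at $T=[m]$, with squared value $\prod_j4p_j(1-p_j)$. Hence $\HH^+_\nu=0$, and $\Delta_\nu\ge0$ holds trivially. Deterministic inputs are handled by the conventions of Section~\ref{bias:normalization}.

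\emph{Symmetric threshold, the main obstacle.} Take $G=\mathbf 1[\sum y_j\ge t]$ under a product measure with arbitrary biases $p_j$. I would use the pivotal/antichain machinery from the regular case. Order the coordinates by bias; a symmetric threshold is regular with respect to every order. Then bound each normalized coefficient $\wh G_\nu(T)^2/\prod_{j\in T}4p_j(1-p_j)$ through the probability that $T$ is contained in a pivotal configuration at level $t$. This expresses the reweighted entropy as a relative entropy against a spectral reference measure built from the Poisson-binomial distribution of $\sum y_j$.

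The delicate step is the regime of strongly skewed biases or extreme thresholds, where $\Var_\nu(G)$ is close to $\I_\nu(G)$. I would first try to control the level-$1$ mass exactly, since it accounts for most of $\Var$. I would then show that the mass on $|T|\ge2$ is at least a constant fraction of $\I-\Var$, using that $\I-\Var\ge\sum_{|T|\ge2}(|T|-1)\wh G(T)^2$. The entropy on levels $\ge2$ would then be charged against that mass with a $\log$-binomial bound, which is where constants like $460/\ln 2$ arise. The level-$1$ entropy would be bounded by $\sum_j \Inf_j\log\bigl(1/\Inf_j\bigr)$ terms, absorbed via $\I-\Var$.

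\emph{Uniform inputs.} The uniform statements follow from $\HH=\HH^++\wh f(\varnothing)^2\log\bigl(1/\wh f(\varnothing)^2\bigr)$. The empty-set term is at most a constant times $\Var(f)\le\I(f)$, which turns $744$ into $744$ and $747$ after accounting for $\Var\le\I$ and the Poincaré slack.
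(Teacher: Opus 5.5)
\textbf{Overall.} Your composition step is correct and is exactly the paper's route (Proposition~\ref{comp:identities} and Corollary~\ref{comp:closure}). The factorization of $\wh f_\mu$, the additive splitting of the corrected entropy, and the telescoping of $C\bigl(\I_\mu(h_j)/\Var_\mu(h_j)-1\bigr)$ are all right. The gap is the gate estimate, which is the real content of the theorem. Your threshold paragraph is a plan rather than a proof, and the tools it names do not carry over.

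\textbf{Threshold gate.} Three ingredients are missing.
\begin{enumerate}
\item \emph{Rank estimate.} The uniform-cube antichain machinery measures rank by coordinate count and uses Sperner density. Under a product measure a coordinate with tiny $\sigma_i$ barely spreads the pivotal slice, so the correct scale is $\Lambda=\sum_i\sigma_i^2$. One needs a refreshed slice bound, such as the fair-bit decomposition in Lemma~\ref{bias:concentration}, to obtain $\sum_{S\ni i}P(S)\ln^+(\Lambda/|S|)\le L_0\I_i$.
\item \emph{The term $V\ln(1/V)$.} Gibbs' inequality on levels leaves this term, and under biased measures there is no edge-isoperimetric bound $V\ln(1/V)\lesssim\I$: the biased dictator has $\I=V=\sigma_1^2$. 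The paper pays for it with a credit $\sum_i\I_i\ln^+(4\rho_i/V)$ built into the reference weights. Proving that credit (Lemma~\ref{bias:credit}) uses the decomposition $\ln(1/\alpha)=\mathrm{TC}(\mu_A)+\sum_iD_{\rm KL}(\mu_A^{(i)}\Vert\mu_i)$ for the minority event $A$, an exponential tilt to mean $t$, and Darroch's mode theorem. Nothing in your proposal plays this role.
\item \emph{Passing from $\I$ to $\Delta_\mu$.} The paper uses $J\ge\I/10$ when every $\rho_i\le9/10$, and a separate near-dictator analysis otherwise. Your proposed step, that the mass on $|T|\ge2$ is at least a constant fraction of $\I-\Var$, is false. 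For majority on $n$ uniform bits that mass is at most $1$, while $\I-\Var\asymp\sqrt n$.
\end{enumerate}

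\textbf{Parity gates.} Under a biased product measure parity does not have a single nonempty coefficient. Its spectrum is $P(S)=\sigma_S^2\prod_{i\notin S}(2q_i-1)^2$, which charges every $S$; your claim contradicts Parseval unless all inputs are balanced. So $\HH^+_\nu$ is generally positive. Parity gates inside a formula receive biased outputs from threshold children, so you need an actual bound such as $\HH_\mu^+\le(2/\ln2)\Delta_\mu$. The paper gets this from the scalar inequality $\Phi\le2R$ (Lemma~\ref{comp:elementary-gates}).

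\textbf{Uniform conversion.} For the bound $747(\I-\Var)$, the estimate $h_0(f)\le(\log e)\Var(f)$ is not enough, because $\Var$ can be much larger than $\Delta$. You need $h_0(f)\le\kappa\Delta(f)$ with $\kappa<3$. The paper derives this from the edge-isoperimetric inequality $\I(f)\ge2p\log(1/p)$ (Lemma~\ref{comp:empty-atom}). The $744\,\I$ bound does follow as you say.
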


Here a symmetric threshold is a gate $\mathbf1[\sum_i y_i\ge t]$, with its output identified with $\{-1,1\}$. Read-once means that each original variable occurs in at most one leaf. The gate estimate is uniform over arities and induced output biases. The proof and further formula consequences appear in Sections~\ref{bias:section} and~\ref{comp:section}.

The joint witness estimate in Proposition~\ref{prop:joint-witness}
retains the cost of the gates used by an entire Fourier set and gives a
constant bound for addressing inside the same queue construction.
Section~\ref{sec:selectors} derives exact selector identities and
constant-preserving closure under suitable Fourier-support separation.
It also shows that unrestricted two-cell selectors and arbitrarily
rare activations preserve the optimal constant of full FEI. These
reductions specify the scope of the structured hypotheses.

\subsection{Earlier work}

O'Donnell, Wright and Zhou \cite{OWZ11} proved FEI for symmetric and fixed-part symmetric functions and for read-once decision trees. Their method relates derivative noise stability to spectral entropy; the spectral-level argument below refines their bound. Chakraborty, Kulkarni, Lokam and Saurabh \cite{CKLS16} proved, among other results, an $O(\sqrt n)$ entropy bound for linear threshold functions. The 2018 revision of their ECCC report removes an earlier claim of FEI for all linear threshold functions. Chakraborty, Karmalkar, Kundu, Lokam and Saurabh \cite{CKKLS18} proved FEI with high probability for specified random models of linear threshold functions.

Shalev \cite{Shalev18} studied the read-$k$ DNF question and proved a min-entropy result for a restricted class of such formulas. Arunachalam, Chakraborty, Kouck\'y, Saurabh and de Wolf \cite{ACKSW21} established the min-entropy conjecture for read-$k$ DNF with fixed $k$. Theorem~\ref{thm:readk-intro} bounds the full Shannon entropy and applies more generally to unions and intersections of regular functions. Fourier concentration for small-read DNF was studied by Lecomte and Tan \cite{LT21}; Section~\ref{sec:consequences} records a direct concentration consequence of the entropy bound.

O'Donnell and Tan \cite{ODT13} introduced the product-measure strengthening of FEI used in composition and proved its closure under disjoint substitution. They obtained FEI for read-once formulas over any fixed finite gate basis. Our formula results use their composition principle; the additional input is a bound uniform in the arity of symmetric threshold gates, together with the threshold and regular-gate bounds above.

Keller, Mossel and Schlank \cite{KMS12} obtained general entropy bounds involving the coordinate influences; Han \cite{Han25} gives another proof of a bound of this form. Recent results also include the restriction methods of Gonz\'alez, MacManus and Pereyra \cite{GMP26} and Potapov's sharp bound for plateaued functions \cite{Potapov26}. The argument below instead controls a global reference measure on Fourier sets through pivotal antichains.

A complementary geometric case is the author's work on planar percolation \cite{Shmalo26}. It establishes entropy--influence comparison for crossing functions using spatial refinement and arm estimates, and also controls the entropy of pivotal sets. The spectral crossing question was stated by Garban, Pete and Schramm \cite[Section~11]{GPS10}; Kalai \cite[Conjecture~8]{Kalai18} asked for both spectral and pivotal entropy bounds.

\subsection{The argument}

The pivotal assignments of a regular function form an antichain on the coordinates preceding the pivotal coordinate. Resampling part of that predecessor set therefore gives a noise estimate relative to the mass of the derivative. In Fourier space, this bounds the expected logarithmic discrepancy between a coordinate's rank and the number of predecessors present in a spectral set.

A reference distribution on sets of a fixed size converts that rank bound into entropy. Several regular functions are handled by partitioning each pivotal set among its component functions and using the least expensive of their coordinate orders. To obtain the bound by $\D$, we multiply by the character that minimizes influence. This changes signs, but preserves the antichain supports of all derivative pieces outside the selected character. The remaining coordinates receive zero rank cost.

\section{Fourier and entropy preliminaries}\label{sec:prelim}

We record the identities used throughout. Standard background is given in \cite{OD14}.
For a function on the uniform cube, set
\[
 d_i f(y)=\frac{f(y,1)-f(y,-1)}2.
\]
For Boolean $f$, the derivative takes values in $\{0,\pm1\}$ and
\begin{equation}\label{eq:derivative}
 \|d_i f\|_2^2=\Inf_i(f),\qquad
 \wh{d_i f}(T)=\wh f(T\cup\{i\}),\qquad
 \Var(f)=\sum_{S\ne\varnothing}\wh f(S)^2.
\end{equation}
Whenever $\alpha_i=\Inf_i(f)>0$, the probability law
\begin{equation}\label{eq:derivlaw}
 \nu_i(T)=\frac{\wh{d_i f}(T)^2}{\alpha_i}
\end{equation}
is the law of the Fourier set with $i$ deleted, conditional on containing $i$.

For a set of coordinates $U$, let $Y$ be obtained from a uniform $X$ by resampling each coordinate of $U$ with probability $t$, independently. Parseval's identity gives
\begin{equation}\label{eq:partialnoise}
 \E[d(X)d(Y)]=\sum_T\wh d(T)^2(1-t)^{|T\cap U|}.
\end{equation}
The right side is nonnegative even when $d$ takes both signs.

We shall use Gibbs' inequality also for subprobability reference measures. If $P$ is a probability law and $Q$ is nonnegative, $\sum Q\le1$, and $Q>0$ on the support of $P$, then
\begin{equation}\label{eq:gibbs}
 H(P)\le \E_P\log\frac1{Q}.
\end{equation}
This follows by normalizing $Q$ and applying nonnegativity of relative entropy.

\begin{lemma}[Spectral-level entropy]\label{lem:levelentropy}
If $\mathcal S$ has law $\P(\mathcal S=S)=\wh f(S)^2$, then
\[
 H(|\mathcal S|)\le\Clev\I(f),\qquad \Clev=\log(4+e).
\]
\end{lemma}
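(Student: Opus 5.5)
The plan is to split off the event $\{\mathcal S=\varnothing\}$ and control the two resulting pieces separately. Put $p_k=\P(|\mathcal S|=k)$ and $v=1-p_0=\Var(f)$. The chain rule gives
\[
 H(|\mathcal S|)=h(v)+v\,H\bigl(|\mathcal S|\,\big|\,\mathcal S\ne\varnothing\bigr),
\]
where $h$ is the binary entropy. The two facts about $f$ used throughout are $\E|\mathcal S|=\I(f)$, immediate from \eqref{eq:HI}, and the Poincaré inequality $v\le\I(f)$, which follows from \eqref{eq:derivative} since every nonempty $S$ has $|S|\ge1$.

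For the conditional term I will apply Gibbs' inequality \eqref{eq:gibbs} on $\{1,2,\dots\}$ with a geometric reference law $Q(k)=(1-r)r^{k-1}$, where $r\in(0,1)$ is a parameter. This gives
\[
 v\,H\bigl(|\mathcal S|\,\big|\,\mathcal S\ne\varnothing\bigr)\le v\log\frac1{1-r}+(\I(f)-v)\log\frac1r .
\]
Since $v\le\I(f)$, the right side is at most $\max\{\log\frac1{1-r},\log\frac1r\}\,\I(f)$. To do better, I will keep the two contributions $v$ and $\I(f)-v$ separate and optimize jointly with the first term.

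The first term $h(v)$ is the genuine obstacle. The ratio $h(v)/v\sim\log(1/v)$ is unbounded as $v\to0$, so Poincaré alone cannot give a constant. Here I will use Booleanness through the edge-isoperimetric inequality. If $a=\min\{\P[f=1],\P[f=-1]\}$, then $\I(f)\ge2a\log(1/a)$, and also $v=4a(1-a)$. I will split into two regimes.
\begin{itemize}
\item When $v\ge1/2$, the bound $h(v)\le1\le2v\le2\I(f)$ suffices.
\item When $v$ is small, write $h(v)\le v\log(e/v)$. Substituting $v\le4a$ and using the isoperimetric bound turns this into $h(v)\le c_1\I(f)+c_2v$ for explicit constants $c_1,c_2$.
\end{itemize}
Combining gives an estimate of the form
\[
 H(|\mathcal S|)\le \alpha\,\I(f)+\beta\,v+\log\tfrac1r\,(\I(f)-v).
\]
I then choose $r$, together with the split point between the two regimes, to minimize the worst case over $0\le v\le\I(f)$.

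I expect the numerical optimization to be the delicate step: it must produce exactly $\log(4+e)$. My guess is that the optimum sits at $r=e/(4+e)$ or a similar ratio. If the two-regime argument loses too much, I will try a single Gibbs reference instead. This would be the subprobability measure $Q(0)=\lambda$, $Q(k)=\mu r^{k}$ for $k\ge1$, with $\lambda$ allowed to depend on $v$. I would then bound $p_0\log(1/\lambda)$ directly using $\I(f)\ge2a\log(1/a)$, and choose the weights so that $\sum Q\le1$ yields the constant $\log(4+e)$ after the optimization.
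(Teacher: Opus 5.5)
Your outline matches the paper's proof: the chain rule at $\{\mathcal S=\varnothing\}$, a geometric Gibbs reference on the positive integers, and edge-isoperimetry to control $h(v)$. The gap is at the step you flagged, and the large-$v$ estimate you propose there does not suffice.

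With $Q(k)=(1-r)r^{k-1}$ your bound reads $h(v)+v\log\frac1{1-r}+(\I(f)-v)\log\frac1r$. In the small-$v$ regime, $v/\I(f)$ can tend to $0$ while $h(v)/\I(f)$ tends to $2$ (subcube indicators). For a single parameter $r$ this forces $2+\log\frac1r\le\log(4+e)$, i.e.\ $r\ge 4/(4+e)$. Your guess $r=e/(4+e)$ therefore leaves the coefficient $2+\log\frac{4+e}{e}>\log(4+e)$ on $\I(f)$. At $r=4/(4+e)$, the $v$-terms cancel only if $h(v)\le 2\I(f)-(2-\log e)v$ holds in \emph{every} regime. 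Your bound $h(v)\le 2\I(f)$ for $v\ge\frac12$ lacks the negative term, and it produces $\log(4+e)\I(f)+v\log\frac4e$. No choice of $r$ repairs this, even one depending on $f$. For $\operatorname{Maj}_3$ one has $v=1$ and $\I=\frac32$, and the best geometric reference gives $2\I+\I\,h(v/\I)=3+\frac32h(\frac23)\approx4.38$. The target is $\frac32\log(4+e)\approx4.12$.

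The fix is short. Since $x\log(e/x)$ is increasing on $[0,1]$, your substitution $v\le4a$ is valid for all $a\le\frac14$, that is, for $v\le\frac34$. It gives $h(v)\le4a\log\frac e{4a}=2\cdot2a\log\frac1a-(2-\log e)4a\le2\I(f)-(2-\log e)v$. For $v\ge\ln2$, use instead $h(v)\le1\le(\log e)v=2v-(2-\log e)v\le2\I(f)-(2-\log e)v$, by Poincar\'e. Because $\ln2<\frac34$, the two regimes cover $[0,1]$. Taking $r=4/(4+e)$ then yields $(2+\log\frac{4+e}4)\I(f)=\log(4+e)\I(f)$. The paper avoids the split altogether. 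It derives $\I(f)\ge v+\frac12v\log\frac1v$ for every $v$ from $\I(f)\ge2a\log\frac1a$, via a concavity argument in $q=\sqrt{1-v}$, and this gives the same estimate for $h(v)$ directly.
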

\begin{proof}
Put $K=|\mathcal S|$ and $v=\Var(f)=\P(K>0)$.
If $v=0$, the conclusion is immediate.
As in the spectral-level argument of \cite[Theorem~5]{OWZ11},
the edge-isoperimetric inequality gives
$\I(f)\ge2p\log(1/p)$, where $p=\min\{\P(f=1),\P(f=-1)\}$.
It implies
\[
 \I(f)\ge v+\tfrac12v\log(1/v).
\]
For completeness, after writing $q=1-2p=\sqrt{1-v}$, the latter comparison is equivalent to
\[
 (q-1)\log(1-q)+(1+q)\log(1+q)-2q\ge0.
\]
The left side vanishes at $0$ and $1$ and has second derivative
$-2q/((1-q^2)\ln2)\le0$.
Since $-(1-v)\ln(1-v)\le v$,
\[
 h_2(v)\le v\log(1/v)+(\log e)v
          \le2\I(f)+(\log e-2)v.
\]
For $\rho=4/(4+e)$, use the geometric reference
$q_k=(1-\rho)\rho^{k-1}$ on the positive integers. Gibbs' inequality gives
\[
 vH(K\mid K>0)
 \le(\I(f)-v)\log\frac1\rho+v\log\frac1{1-\rho}.
\]
The coefficient of $v$ in the sum of these bounds is
$\log e-2+\log(\rho/(1-\rho))=0$.
The entropy chain rule therefore yields
\[
 H(K)\le\left(2+\log\frac1\rho\right)\I(f)
       =\log(4+e)\I(f).
\]
Endpoints follow by continuity.
\end{proof}

\begin{lemma}\label{lem:centering}
For every Boolean $f$,
\[
 \HH(\chi_U f)=\HH(f),\qquad
 \D(f)=\min_U\I(\chi_U f),\qquad
 \D(f)\le4\bigl(\I(f)-\Var(f)\bigr).
\]
Moreover, $\D(\chi_U f)=\D(f)$.
\end{lemma}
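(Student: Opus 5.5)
The plan is to handle the four assertions of Lemma~\ref{lem:centering} in turn, working from the Fourier expansion and the influence formula~\eqref{eq:derivative}.

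\emph{Entropy is unchanged.} Since $\chi_U\chi_S=\chi_{U\symdiff S}$, the coefficients satisfy $\wh{\chi_U f}(S)=\wh f(S\symdiff U)$. Thus multiplying by $\chi_U$ only permutes the squared coefficients. Entropy depends only on the multiset of masses, so $\HH(\chi_U f)=\HH(f)$.

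\emph{The minimum formula for $\D$.} Fix a coordinate $i$. Flipping $x_i$ multiplies $\chi_U$ by $-1$ if $i\in U$ and leaves it fixed otherwise. Hence
\[
 \Inf_i(\chi_U f)=
 \begin{cases}
 \Inf_i(f), & i\notin U,\\
 1-\Inf_i(f), & i\in U.
 \end{cases}
\]
To see the second case, note that $\chi_Uf$ changes under the flip exactly when $f$ does not. Summing over $i$,
\[
 \I(\chi_U f)=\sum_{i\notin U}\Inf_i(f)+\sum_{i\in U}\bigl(1-\Inf_i(f)\bigr)\ge\D(f).
\]
Equality holds when $U=\{i:\Inf_i(f)>1/2\}$, so $\D(f)=\min_U\I(\chi_Uf)$.

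\emph{Invariance of $\D$.} Since $\chi_V\chi_U f=\chi_{V\symdiff U}f$ and $V\mapsto V\symdiff U$ is a bijection on subsets, the minimum over $V$ is the same for $f$ and for $\chi_Uf$. Alternatively, the displayed formula shows that multiplying by $\chi_U$ swaps $\Inf_i$ with $1-\Inf_i$, and $\min\{a,1-a\}$ is invariant under that swap.

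\emph{The bound $\D(f)\le4(\I(f)-\Var(f))$.} This is the only step with real content, and I expect it to be the main obstacle. The first move is to write
\[
 \I(f)-\Var(f)=\sum_S(|S|-1)\wh f(S)^2+\wh f(\varnothing)^2 .
\]
The natural pointwise comparison would be $\min\{\Inf_i,1-\Inf_i\}\le 4\,\mathrm{something}_i$ for each coordinate, but that fails near $\Inf_i=1/2$ unless the right-hand quantity is large there. So the bound must be global.

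Case 1: $\I(f)\ge\tfrac54\Var(f)$. Then $\I(f)-\Var(f)\ge\I(f)/5$, which gives the weaker constant $5$ rather than $4$. A cleaner route is needed to reach $4$.

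Case 2, the cleaner route: choose $U$ so that $\I(f)-\Var(f)$ is compared with $\I(\chi_Uf)$ directly. The plan is to use the bound $\D(f)\le\I(\chi_W f)$ for a well-chosen $W$, for instance $W=\varnothing$ or $W=[n]$, together with the sets lying within Hamming distance $1$ of a heavy level.

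Concretely, I would first try to prove
\[
 \I(f)-\Var(f)\ \ge\ \tfrac14\,\D(f).
\]
For this I would start from $\Var(f)=\sum_{S\ne\varnothing}\wh f(S)^2$. The excess $\I-\Var$ counts the mass on levels $\ge2$ with weight $|S|-1\ge|S|/2$, so the mass on levels $\ge2$ contributes at least half of its own influence. The problematic mass is therefore on level $1$ and level $0$.

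For the level-$1$ part, I would use that a Boolean function with almost all of its weight on levels $0$ and $1$ is close to a dictator or antidictator, via the Friedgut--Kalai--Naor theorem. A dictator $x_j$ has $\D=0$ after multiplying by $\chi_{\{j\}}$. The idea is that the level-$1$ mass $\sum_j\wh f(\{j\})^2$ can exceed what levels $\ge2$ pay for only by at most a constant multiple of the higher-level mass plus $\wh f(\varnothing)^2$.

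Making this quantitative with constant $4$ is the step I am least sure of. The first thing I would try is the elementary identity
\[
 \min\{\Inf_i,1-\Inf_i\}\le 4\,\Inf_i(1-\Inf_i),
\]
combined with bounding $\sum_i\Inf_i(1-\Inf_i)$ by $\I-\Var$ through a Poincaré-type argument applied to $f$ and its derivatives.
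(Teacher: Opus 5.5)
Your treatment of the first three assertions is correct and matches the paper. The permutation $\wh{\chi_Uf}(S)=\wh f(S\symdiff U)$ gives entropy invariance. The coordinatewise dichotomy $\Inf_i(\chi_Uf)\in\{\Inf_i(f),1-\Inf_i(f)\}$ gives both the minimum formula and the invariance of $\D$.

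The inequality $\D(f)\le4(\I(f)-\Var(f))$, however, is not proved, and this is a genuine gap. Case~1 yields only the constant $5$, and only under $\I\ge\frac54\Var$. Case~2 is a plan rather than an argument. An FKN-type stability argument has no sharp explicit constants, so it cannot produce $4$, which is attained by the conjunction of two coordinates. Your premise that a coordinatewise comparison must fail near $\Inf_i=1/2$ is also mistaken: the right per-coordinate quantity is $\Var(d_if)=\Inf_i(f)-\wh f(\{i\})^2$, and it is large there. The missing observation is $|\wh f(\{i\})|=|\E d_if|\le\E|d_if|=\Inf_i(f)$. With $\alpha_i=\Inf_i(f)$ it gives
\[
 \min\{\alpha_i,1-\alpha_i\}\le2\alpha_i(1-\alpha_i)\le2\bigl(\alpha_i-\wh f(\{i\})^2\bigr).
\]
Summing over $i$, the right side equals $2\sum_{|S|\ge2}|S|\wh f(S)^2$. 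Since $|S|\le2(|S|-1)$ for $|S|\ge2$, this is at most $4\sum_{|S|\ge2}(|S|-1)\wh f(S)^2=4(\I(f)-\Var(f))$.

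Your final suggestion points in this direction, but it fails with the constants you wrote. The inequality $\min\{a,1-a\}\le4a(1-a)$ is loose by a factor $2$. The bound $\sum_i\Inf_i(1-\Inf_i)\le\I-\Var$ is false: for the conjunction of two bits the left side is $1/2$ and the right side is $1/4$. The correct bound $\sum_i\Inf_i(1-\Inf_i)\le2(\I-\Var)$, combined with your factor $4$, gives only $8$. Reaching $4$ requires the sharper $\min\{a,1-a\}\le2a(1-a)$, which holds because $a\le1/2$ implies $1\le2(1-a)$.
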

\begin{proof}
The coefficient identity
$\wh{\chi_Uf}(S)=\wh f(S\symdiff U)$ proves entropy invariance.
For each $i$, character multiplication either leaves $\Inf_i(f)$ unchanged or replaces it by $1-\Inf_i(f)$.
This proves the minimization and invariance statements.

Put $\alpha_i=\Inf_i(f)$ and $c_i=\wh f(\{i\})$.
Since $|c_i|=|\E d_i f|\le\E|d_i f|=\alpha_i$,
\[
 \min\{\alpha_i,1-\alpha_i\}
 \le2\alpha_i(1-\alpha_i)\le2(\alpha_i-c_i^2).
\]
Summing and using $k\le2(k-1)$ for $k\ge2$ gives
\[
 \D(f)\le2\sum_{|S|\ge2}|S|\wh f(S)^2
 \le4\sum_{|S|\ge2}(|S|-1)\wh f(S)^2
 =4\bigl(\I(f)-\Var(f)\bigr).
\]
\end{proof}

The factor $4$ in the last universal comparison is sharp: for the conjunction of two coordinates, $\D=\I=1$ and $\Var=3/4$.

\section{Antichains and spectral ranks}\label{sec:rank}

\begin{lemma}\label{lem:pivotal-antichain}
Let $f$ be monotone, and let every coordinate in $U_i$ be at least as desirable as $i$.
On each fiber obtained by fixing the coordinates outside $U_i\cup\{i\}$, the support of $d_i f$ is an antichain in the cube on $U_i$.
\end{lemma}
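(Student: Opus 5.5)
We argue by contradiction. Fix a fiber, that is, an assignment $z$ to the coordinates outside $U_i\cup\{i\}$, and regard $y\in\bits^{U_i}$ as the free variable. Since $f$ is monotone, $d_i f(y,z)=\tfrac12\bigl(f(y,z,x_i=1)-f(y,z,x_i=-1)\bigr)\in\{0,1\}$, so its support is the set of $y$ at which $i$ is pivotal. Suppose $y<y'$ in the coordinatewise order with both $y,y'$ in the support. We will show that this is impossible, which means the support contains no comparable pair.

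First we reduce to a single flipped coordinate. Pick one $j\in U_i$ with $y_j=-1$, $y'_j=1$, and let $y''$ be $y$ with coordinate $j$ raised to $1$, so $y<y''\le y'$. Pivotality at $y$ gives $f(y,z,x_i=-1)=-1$. Pivotality at $y'$ gives $f(y',z,x_i=1)=1$ and $f(y',z,x_i=-1)=-1$. By monotonicity $f(y'',z,x_i=-1)\le f(y',z,x_i=-1)=-1$, so $f(y'',z,x_i=-1)=-1$.

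The key step uses desirability. Write $w$ for the common assignment to the coordinates other than $i,j$ shared by $(y,z)$ and $(y'',z)$. The point $(y,z,x_i=1)$ has $x_j=-1$, $x_i=1$, and the point $(y'',z,x_i=-1)$ has $x_j=1$, $x_i=-1$. Since $j$ is at least as desirable as $i$, \eqref{eq:desire} gives
\[
f(w,x_j=1,x_i=-1)\ge f(w,x_j=-1,x_i=1)=f(y,z,x_i=1)=1,
\]
where the last equality is pivotality at $y$. This contradicts $f(y'',z,x_i=-1)=-1$.

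The main point needing care is this last step: we must check that the two points compared in \eqref{eq:desire} agree off $\{i,j\}$. They do, because $y$ and $y''$ differ only in coordinate $j$ and share the fiber $z$. Note also that we used only the upper endpoint $y'$ being pivotal, through monotonicity in the $x_i=-1$ slice. So the argument in fact shows slightly more: if $y$ is pivotal, then every strictly larger $y'$ has $f(y',z,x_i=-1)=1$, and hence is not pivotal.
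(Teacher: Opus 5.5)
Your proof is correct and is essentially the paper's argument: choose $j$ raised between $y$ and $y'$, apply desirability to get $f(y^{j=1},x_i=-1)\ge f(y,x_i=1)=1$, and use monotonicity to compare with $y'$. The paper pushes this value up to $f(y',x_i=-1)=1$, whereas you pull $f(y',x_i=-1)=-1$ down to $y^{j=1}$, which is the same comparison. One small slip: your closing remark says only the upper endpoint's pivotality is used, but the strengthened claim you then state (which is exactly the paper's formulation) uses only $f(y,z,x_i=1)=1$ at the lower endpoint.
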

\begin{proof}
Suppose that $y<y'$ are two pivotal assignments on $U_i$, and choose a coordinate $j$ that changes from $-1$ to $1$ between them.
Pivotality gives $f(y,i=1)=1$.
Desirability gives $f(y^{j=1},i=-1)\ge f(y,i=1)=1$.
Monotonicity then gives $f(y',i=-1)=1$, contradicting pivotality at $y'$.
\end{proof}

\begin{lemma}\label{lem:antichain-density}
An antichain in $\bits^q$ has density at most
\[
 c_q=2^{-q}\binom q{\lfloor q/2\rfloor}\le(q+1)^{-1/2}.
\]
\end{lemma}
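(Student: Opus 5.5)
The statement has two parts. The first is Sperner's theorem, normalized to density. The second is the elementary estimate $\binom{q}{\lfloor q/2\rfloor}\le 2^q/\sqrt{q+1}$.

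For the first part, I would use the LYM inequality. Every antichain $\mathcal A\subseteq\bits^q$, viewed as a family of subsets of $[q]$, satisfies $\sum_{A\in\mathcal A}\binom{q}{|A|}^{-1}\le 1$. The proof of LYM is the standard maximal-chain count. Each of the $q!$ maximal chains meets $\mathcal A$ at most once, and a set of size $k$ lies on exactly $k!(q-k)!$ of them. Since the middle binomial coefficient is the largest, every term satisfies $\binom q{|A|}\le\binom q{\lfloor q/2\rfloor}$. Hence $|\mathcal A|\le\binom q{\lfloor q/2\rfloor}$, and dividing by $2^q$ gives density at most $c_q$.

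For the numerical bound, set $a_q=c_q\sqrt{q+1}$. It suffices to show $a_q\le1$ for all $q\ge0$. I would compare consecutive values of $c_q$.
\begin{itemize}
\item For $q=2m$ odd-to-even: $c_{2m+1}=c_{2m}$, because $\binom{2m+1}{m}=\tfrac12\binom{2m+2}{m+1}$ and in fact $c_{2m+1}=2^{-2m-1}\binom{2m+1}{m}=2^{-2m}\binom{2m}{m}$.
\item From $c_{2m}$ to $c_{2m+2}$: $c_{2m+2}=c_{2m}\cdot\frac{2m+1}{2m+2}$.
\end{itemize}

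Because $c_{2m+1}=c_{2m}$ and $\sqrt{2m+1}<\sqrt{2m+2}$, it is enough to treat odd $q=2m+1$. There the target is $c_{2m}^2(2m+2)\le1$.

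Write $b_m=c_{2m}^2(2m+2)$. Then
\[
 \frac{b_{m+1}}{b_m}=\frac{(2m+1)^2}{(2m+2)^2}\cdot\frac{2m+4}{2m+2}.
\]
This ratio is at most $1$ exactly when $(2m+1)^2(2m+4)\le(2m+2)^3$. Expanding both sides gives $8m^3+24m^2+18m+4$ on the left and $8m^3+24m^2+24m+8$ on the right, so the inequality holds for every $m\ge0$. Thus $b_m$ is nonincreasing, and $b_0=c_0^2\cdot2=2$.

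This base value exceeds $1$, so the odd case needs a direct check at $m=0$, where $q=1$. There $c_1=1/2$ and $\sqrt{2}/2<1$. This apparent contradiction arises because I wrote $c_{2m+1}$ incorrectly: $\binom{1}{0}/2=1/2$, while $c_0=1$.

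I would therefore redo the parity relation carefully. The correct identity is $c_{2m+1}=2^{-2m-1}\binom{2m+1}{m}=c_{2m+2}$, since $\binom{2m+2}{m+1}=2\binom{2m+1}{m}$. So the pairs with equal $c$ are $(2m+1,2m+2)$, and the binding case is odd $q=2m+1$ with value $c_{2m+2}^2(2m+2)$.

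With $d_m=c_{2m}^2\,2m$ for $m\ge1$:
\begin{itemize}
\item $d_1=\tfrac14\cdot2=\tfrac12$.
\item The ratio is $\frac{(2m+1)^2}{(2m+2)^2}\cdot\frac{2m+2}{2m}=\frac{(2m+1)^2}{2m(2m+2)}>1$, so $d_m$ increases.
\item By Wallis' product, $c_{2m}\sqrt{\pi m}\to1$, so $d_m\uparrow 2/\pi<1$.
\end{itemize}
For even $q=2m$, the quantity $c_{2m}^2(2m+1)$ likewise increases to the limit $2/\pi$, which gives the bound. The case $q=0$ is trivial, since $c_0=1$.

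The main obstacle is this monotone-limit step, which is exactly the Wallis-product bound. If that is not acceptable, I would instead use the elementary inequality $c_{2m}\le 1/\sqrt{3m+1}$, proved by induction via $\frac{(2m+1)^2}{(2m+2)^2}\le\frac{3m+1}{3m+4}$.
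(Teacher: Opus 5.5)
Your Sperner step is fine (you prove it via LYM where the paper simply cites Sperner). Your fallback $c_{2m}\le(3m+1)^{-1/2}$ is also correct: the cross-multiplied induction inequality reduces to $19m\le20m$. Combined with $c_{2m+1}=c_{2m+2}$, it covers both parities, because $3m+1\ge2m+1$ and $3m+4\ge2m+2$.

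The Wallis-based route you present first, however, contains a false step. For even $q=2m$, the quantity $c_{2m}^2(2m+1)$ does not increase to $2/\pi$. Its successive ratio is
\[
 \frac{(2m+1)(2m+3)}{(2m+2)^2}<1,
\]
so it decreases from $1$ at $m=0$ (then $3/4$, $45/64,\ldots$) down to $2/\pi$. An increasing sequence with limit $2/\pi$ is incompatible with $c_0^2\cdot1=1$, the very case you call trivial. For the same reason, calling odd $q$ the binding case is backwards. For the common value $c_{2m+1}=c_{2m+2}$, the even index requires the stronger inequality $c^2(2m+3)\le1$, and it is the even sequence that attains $1$.

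The repair is immediate and is exactly the paper's argument. The displayed ratio bound, i.e.\ $(2m+1)(2m+3)\le(2m+2)^2$, together with $c_0=1$, gives $c_{2m}\le(2m+1)^{-1/2}$ by induction. Then
\[
 c_{2m+1}=c_{2m}\,\frac{2m+1}{2m+2}\le\frac{(2m+1)^{1/2}}{2m+2}\le(2m+2)^{-1/2}.
\]
No asymptotics are needed, and your (correct) odd-case Wallis argument becomes unnecessary. You should also remove the retracted first attempt from the write-up, including the false identity $c_{2m+1}=c_{2m}$ and the sequence $b_m$.
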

\begin{proof}
Sperner's theorem \cite{Sperner28} bounds its size by $\binom q{\lfloor q/2\rfloor}$.
For $q=2r$, put $b_r=4^{-r}\binom{2r}r$.
Since $b_0=1$ and $b_{r+1}/b_r=(2r+1)/(2r+2)$, induction gives
$b_r\le(2r+1)^{-1/2}$; the induction step is
$(2r+1)(2r+3)\le(2r+2)^2$.
For $q=2r+1$, the density is $b_r(2r+1)/(2r+2)\le(2r+2)^{-1/2}$.
\end{proof}

\begin{lemma}[Relative noise estimate]\label{lem:antichain-noise}
Let $d:\bits^N\to\{0,\pm1\}$, let $\alpha=\E d^2>0$, and suppose that the support of $d$ is an antichain on every fiber of a set $U$ of $r-1$ coordinates, in a fixed orientation.
Then, for $0<t\le1$,
\begin{equation}\label{eq:relative-noise}
 \sum_T\wh d(T)^2(1-t)^{|T\cap U|}
 \le\alpha\,\E c_B\le\frac{\alpha}{\sqrt{rt}},
 \qquad B\sim\operatorname{Bin}(r-1,t).
\end{equation}
\end{lemma}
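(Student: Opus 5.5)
The plan is to interpret the left side of \eqref{eq:relative-noise} probabilistically, via the partial-noise identity \eqref{eq:partialnoise}, and then bound the correlation fiber by fiber using Sperner's theorem. Let $X$ be uniform and let $Y$ be obtained by resampling each coordinate of $U$ independently with probability $t$. By \eqref{eq:partialnoise} the left side equals $\E[d(X)d(Y)]$. Since $|d(X)d(Y)|\le \mathbf 1[d(X)\ne0]\,\mathbf 1[d(Y)\ne0]$, this is at most $\P[d(X)\ne0,\ d(Y)\ne0]$.

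To bound this probability, condition on three things: the coordinates of $X$ outside $U$ (these are shared by $Y$), the random set $R\subseteq U$ of resampled coordinates, and the values of $X$ on $U\setminus R$ (also shared). What remains free is $X_R$ and an independent fresh copy $Y_R$, both uniform on $\bits^R$. On the resulting subcube the support of $d$ is a set $A\subseteq\bits^R$. This set is the restriction of an antichain of the full fiber on $U$, and restricting an antichain to a subcube (fixing some coordinates) gives an antichain again. By independence of $X_R$ and $Y_R$, the conditional probability equals $\P[X_R\in A]\,\P[Y_R\in A]$. Bounding the second factor by $c_{|R|}$ through Lemma~\ref{lem:antichain-density}, and then averaging the first factor over everything, the total is at most $\E\bigl[\mathbf 1[d(X)\ne0]\,c_{|R|}\bigr]$. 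Since $R$ is independent of $X$, this equals $\alpha\,\E c_B$ with $B=|R|\sim\operatorname{Bin}(r-1,t)$. This gives the first inequality.

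For the second inequality, use $c_B\le(B+1)^{-1/2}$ and Jensen's inequality (the square root is concave) to get $\E(B+1)^{-1/2}\le\sqrt{\E[1/(B+1)]}$. The standard binomial identity gives
\[
\E\frac1{B+1}=\frac{1-(1-t)^{r}}{rt}\le\frac1{rt}.
\]
This yields $\alpha\,\E c_B\le\alpha/\sqrt{rt}$.

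The main point needing care is the conditioning step. One must check that the conditional law of $(X_R,Y_R)$ really is a product of two independent uniform laws on the same set $A$. This holds because $R$ is chosen independently of $X$, and the non-resampled coordinates of $U$ are common to $X$ and $Y$. One must also check that the antichain hypothesis survives the passage to the subcube: a chain inside the subcube on $R$ is still a chain in the fiber cube on $U$, so this is immediate.
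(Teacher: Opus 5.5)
Your proof is correct and follows the paper's argument. You write the left side as $\E[d(X)d(Y)]$ via \eqref{eq:partialnoise}, bound the conditional probability that $d(Y)\ne0$ by $c_{|R|}$ using Lemma~\ref{lem:antichain-density} on the restricted antichain, and use the independence of $R$ from $X$. You then finish with the binomial identity for $\E[1/(B+1)]$ and Jensen, where the paper uses the equivalent Cauchy--Schwarz step.
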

\begin{proof}
Resample the coordinates of $U$ at rate $t$, and denote the resampling mask by $R$.
Given $X$ and the resampling mask $R$, the assignments for which $d(Y)\ne0$ form an antichain in a cube of dimension $|R|$.
Lemma~\ref{lem:antichain-density} bounds their conditional probability by
$c_{|R|}\le(|R|+1)^{-1/2}$.
Since $|d|=d^2$ and $R$ is independent of $X$,
\[
 \E[d(X)d(Y)]
 \le\E[|d(X)|\,|d(Y)|]
 \le\alpha\,\E c_{|R|}
 \le\alpha\,\E(|R|+1)^{-1/2}.
\]
Here $|R|\sim\operatorname{Bin}(r-1,t)$, and
\[
 \E\frac1{|R|+1}
 =\int_0^1(1-t+ts)^{r-1}\,ds
 =\frac{1-(1-t)^r}{rt}\le\frac1{rt}.
\]
Cauchy--Schwarz and \eqref{eq:partialnoise} finish the proof.
\end{proof}

The lemma applies to arbitrary signs on any subset of a pivotal antichain. This point will allow both Boolean combinations and character multiplication.

\begin{lemma}[Rank tail and mean]\label{lem:ranktail}
Under the hypotheses of Lemma~\ref{lem:antichain-noise}, let
\[
 \nu(T)=\frac{\wh d(T)^2}{\alpha},
 \qquad
 F(T)=\ln\frac r{|T\cap U|+1}.
\]
Then $F\ge0$ and, for $u\ge0$,
\begin{equation}\label{eq:ranktail}
 \nu(F>u)\le\min\{1,\sqrt{2e}\,e^{-u/2}\},
 \qquad
 \E_\nu F\le\Lp=1+\ln2.
\end{equation}
\end{lemma}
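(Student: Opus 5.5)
Nonnegativity is immediate. Since $|U|=r-1$, we have $|T\cap U|+1\le r$, and hence $F(T)\ge0$. For the rest, write $K=|T\cap U|$. The one input I plan to use is Lemma~\ref{lem:antichain-noise}. Dividing \eqref{eq:relative-noise} by $\alpha$ gives, for every $0<t\le1$,
\[
 \E_\nu(1-t)^{K}\le\frac1{\sqrt{rt}} .
\]
I will turn this exponential-moment bound into both the tail estimate and the mean estimate, in the manner of a Chernoff bound.

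For the tail, fix $u\ge0$ and put $m=re^{-u}$. The event $\{F>u\}$ is the event $\{K+1<m\}$.

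If $m\le1$, the event is empty, because $K+1\ge1$. So we may assume $m>1$.

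On the event, $K\le m-1$, so $(1-t)^K\ge(1-t)^{m-1}$. Markov's inequality then gives
\[
 \nu(F>u)\le(1-t)^{-(m-1)}(rt)^{-1/2}.
\]
I choose $t=1/(2m)\le\tfrac12$. The inequality $-\ln(1-s)\le s/(1-s)$ yields
\[
 (m-1)\Bigl(-\ln\bigl(1-\tfrac1{2m}\bigr)\Bigr)\le\frac{m-1}{2m-1}\le\tfrac12 .
\]
Therefore
\[
 \nu(F>u)\le e^{1/2}\Bigl(\frac{r}{2m}\Bigr)^{-1/2}=\sqrt{2e}\,e^{-u/2}.
\]
The bound by $1$ is trivial, which proves the tail estimate.

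For the mean, the obvious route is $\E_\nu F=\int_0^\infty\nu(F>u)\,du$. However, integrating the tail bound gives only $\ln(2e)+2=3+\ln2$, so the stated constant $\Lp=1+\ln2$ must come from a sharper argument. I plan to obtain it through Jensen's inequality applied at the level of square roots:
\[
 \E_\nu F=2\,\E_\nu\ln\sqrt{\frac r{K+1}}\le2\ln\E_\nu\sqrt{\frac r{K+1}} .
\]
It therefore suffices to show
\[
 \E_\nu(K+1)^{-1/2}\le\sqrt{2e}\,r^{-1/2}.
\]
To prove this, I will represent $(K+1)^{-1/2}$ as a mixture of exponentials:
\[
 (K+1)^{-1/2}=\pi^{-1/2}\int_0^\infty x^{-1/2}e^{-x}\,(e^{-x})^{K}\,dx .
\]
Next, apply the noise bound with $1-t=e^{-x}$, capped by $1$:
\[
 \E_\nu(e^{-x})^K\le\min\bigl\{1,(r(1-e^{-x}))^{-1/2}\bigr\}.
\]
Finally, split the integral near $x\approx1/r$.

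This is the step I expect to be the main obstacle, since it is where the constant must come out to exactly $\sqrt{2e}$. If the integral does not close cleanly, my fallback is a different use of Jensen. One option is to go back to the sharper quantity $\E c_B$ in \eqref{eq:relative-noise}. Another is to apply Jensen to $\E_\nu e^{F/2}$ directly: the tail bound with the cap at $1$ controls this moment by integration by parts, and choosing the cutoff $u_0=\ln(2e)$, where the two parts of the minimum meet, then gives $\E_\nu F\le u_0=1+\ln2$.
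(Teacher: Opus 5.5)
Your proof of $F\ge0$ and of the tail bound is correct and is the paper's argument, with the same choice $t=1/(2m)$ and the same estimate $(m-1)/(2m-1)\le\frac12$. The mean bound, however, has a genuine gap.

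Both of your routes pass through a bound on $\E_\nu e^{F/2}=\E_\nu\sqrt{r/(K+1)}$. No bound $\E_\nu(K+1)^{-1/2}\le Cr^{-1/2}$ with an absolute $C$ holds under the hypotheses. To see this, take $r-1=2n$, $U=[2n]$, and let $d$ be the indicator of the middle layer. This is an antichain; it is the derivative of $\mathrm{Maj}_r$. Its normalized spectrum is the discrete arcsine law $\nu(K=2j)=b_jb_{n-j}$, where $b_j=4^{-j}\binom{2j}j$. Then
\[
 \E_\nu\sqrt{\frac r{K+1}}
 =\sqrt{2n+1}\,\sum_{j=0}^n\frac{b_jb_{n-j}}{\sqrt{2j+1}}
 \sim\frac{\ln r}{\pi},
\]
although $\E_\nu F$ stays bounded. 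So your target $\E_\nu(K+1)^{-1/2}\le\sqrt{2e/r}$ is false for large $r$, whichever noise bound you feed into it. Your exponential-mixture integral shows the same thing: beyond $x\approx1/r$ its integrand behaves like $(\pi r)^{-1/2}x^{-1}$, which produces a factor $\ln r$.

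The second fallback fails for the same reason. With tail $\sqrt{2e}\,e^{-u/2}$, the integrand $\frac12e^{u/2}\nu(F>u)$ is constant on $[u_0,\ln r]$. This gives only $\E_\nu e^{F/2}\le\sqrt{2e}\,\bigl(1+\tfrac12\ln\tfrac r{2e}\bigr)$, not $\E_\nu F\le u_0$. In fact the tail bound alone is consistent with a mean close to $3+\ln2$. Even the full bound $\E_\nu(1-t)^K\le(rt)^{-1/2}$ for all $t$ is consistent with $\E_\nu F$ close to $2+\ln(4/\pi)>1+\ln2$: take $K=\lfloor xr\rfloor+1$ with $x$ of density $(\pi x)^{-1/2}$ on $(0,\pi/4]$. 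So the weakened form of \eqref{eq:relative-noise} that you declared as your only input cannot give $\Lp$.

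The missing idea is to keep the exact antichain densities $\E c_B$, with $B\sim\operatorname{Bin}(r-1,t)$, and to test them against a functional that is linear in the law of $K$, so that nothing is lost to Jensen. Set $M=r-1$ and let $h_k$ be the harmonic numbers. The paper writes
\[
 h_\kappa=\int_0^1\bigl(1-(1-t)^\kappa\bigr)t^{-1}\,dt,
\]
inserts $\E_\nu(1-t)^\kappa\le\E c_B$, and uses $\binom Mq\int_0^1t^{q-1}(1-t)^{M-q}\,dt=1/q$. This gives $\E_\nu h_\kappa\ge\sum_{q=1}^M(1-c_q)/q$. Since $F\le h_M-h_\kappa$, it follows that $\E_\nu F\le\sum_{q\ge1}c_q/q$. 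Finally, $c_{2j-1}=c_{2j}=b_j$, together with $\sum_{j\ge1}b_j/(2j-1)=1$ and $\sum_{j\ge1}b_j/(2j)=\ln2$, evaluates this sum as $1+\ln2$.
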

\begin{proof}
Put $K=re^{-u}$ and $\kappa=|T\cap U|$.
If $K\le1$, the event is empty.
Otherwise $F>u$ implies $\kappa<K-1$.
Apply \eqref{eq:relative-noise} with $t=1/(2K)$ to obtain
\[
 \nu(F>u)
 \le(1-1/(2K))^{-(K-1)}\sqrt{\frac{2K}{r}}
 \le\sqrt{2e}\,e^{-u/2}.
\]
Indeed,
\[
 -(K-1)\ln(1-1/(2K))
 \le\frac{K-1}{2K-1}\le\frac12.
\]
For the mean, retain the exact densities in \eqref{eq:relative-noise}.
Put $M=r-1$ and let $h_k=\sum_{j=1}^k1/j$, with $h_0=0$.
The integral representation of the harmonic numbers gives
\begin{align*}
 \E_\nu h_\kappa
 &=\int_0^1\frac{1-\E_\nu(1-t)^\kappa}{t}\,dt\\
 &\ge\sum_{q=1}^M(1-c_q)\binom Mq
                  \int_0^1t^{q-1}(1-t)^{M-q}\,dt
 =\sum_{q=1}^M\frac{1-c_q}{q}.
\end{align*}
Integral comparison gives
$\ln((M+1)/(k+1))\le h_M-h_k$ for $0\le k\le M$.
Consequently,
\begin{equation}\label{eq:harmonic-rank}
 \E_\nu F\le\sum_{q=1}^M\frac{c_q}{q}\le1+\ln2.
\end{equation}
To evaluate the final bound, put $b_j=4^{-j}\binom{2j}j$.
Then $c_{2j-1}=c_{2j}=b_j$, and the binomial series gives
\[
 \sum_{j\ge1}\frac{b_j}{2j-1}=1,
 \qquad \sum_{j\ge1}\frac{b_j}{2j}=\ln2.
\]
The first identity follows by evaluating the series for $1-\sqrt{1-z}$
at $z=1$; the second follows by integrating
$((1-z)^{-1/2}-1)/z$ from $0$ to $1$.
For $r=1$, the sum in \eqref{eq:harmonic-rank} is empty and $F=0$.
\end{proof}

\section{Regular functions and Boolean combinations}\label{sec:regular}

Fix an oriented desirability order for each $h_\ell$, write $r_\ell(i)$ for the rank of $i$, and let $U_i^\ell$ be its predecessors.
Irrelevant coordinates may be placed last.
For $f=F(h_1,\ldots,h_m)$ and $\alpha_i=\Inf_i(f)>0$, partition the pivotal assignments of $i$ according to the first component function for which $i$ is pivotal.
Thus
\begin{equation}\label{eq:pieces}
 d_i f=\sum_{\ell=1}^m d_{i,\ell},\qquad
 \alpha_i=\sum_{\ell=1}^m\alpha_{i,\ell},
 \qquad \alpha_{i,\ell}=\|d_{i,\ell}\|_2^2,
\end{equation}
and the pieces have disjoint supports.
Each nonzero piece is supported on pivotal antichains for $h_\ell$.

\begin{lemma}\label{lem:multi-rank}
For
\[
 G_i(T)=\min_{1\le\ell\le m}
       \ln\frac{r_\ell(i)}{|T\cap U_i^\ell|+1},
\]
the derivative law of $f$ satisfies
\begin{equation}\label{eq:multirank}
 \E_{\nu_i}G_i\le\Lmix+2\ln m,
 \qquad \Lmix=2+\gamma.
\end{equation}
The same conclusion holds if all the pieces in \eqref{eq:pieces} are multiplied by an arbitrary fixed coordinate character on the derivative cube.
\end{lemma}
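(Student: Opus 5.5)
The plan is to work with tail probabilities of $G_i$ rather than with its mean directly. The point is that $\{G_i>u\}=\bigcap_\ell\{F_\ell>u\}$, where $F_\ell(T)=\ln\bigl(r_\ell(i)/(|T\cap U_i^\ell|+1)\bigr)\ge0$. This event is therefore contained in $\{F_\ell>u\}$ for every $\ell$ simultaneously. The cross terms between pieces are handled by a projection and Cauchy--Schwarz argument.

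\textbf{Step 1: piece-wise tails.} By \eqref{eq:pieces}, each piece $d_{i,\ell}$ takes values in $\{0,\pm1\}$ and is supported on pivotal assignments of $i$ for $h_\ell$. By Lemma~\ref{lem:pivotal-antichain}, applied in the monotone orientation of $h_\ell$, this support is an antichain on every fiber of $U_i^\ell$. Since $\hat d_{i,\ell}$ is the Fourier transform on the derivative cube, the orientation flip only changes signs of characters, so Lemma~\ref{lem:antichain-noise} and Lemma~\ref{lem:ranktail} apply to $d_{i,\ell}$ with $r=r_\ell(i)$. Writing $\nu_{i,\ell}=\hat d_{i,\ell}^2/\alpha_{i,\ell}$, this gives
\[
\nu_{i,\ell}(F_\ell>u)\le\tau(u),
\]
where $\tau$ is the tail from \eqref{eq:ranktail}, or a sharper tail read off from the exact densities $\E c_B$.

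\textbf{Step 2: combining the pieces.} Let $E_u=\{T:G_i(T)>u\}$ and let $P_E$ be the Fourier projection onto the sets in $E$. By Parseval and the triangle inequality,
\[
\alpha_i\,\nu_i(E_u)=\|P_{E_u}d_if\|_2^2\le\Bigl(\sum_\ell\|P_{E_u}d_{i,\ell}\|_2\Bigr)^2\le\Bigl(\sum_\ell\sqrt{\alpha_{i,\ell}\,\tau(u)}\Bigr)^2\le m\,\alpha_i\,\tau(u),
\]
where the middle inequality uses $E_u\subseteq\{F_\ell>u\}$ and the last uses Cauchy--Schwarz with $\sum_\ell\alpha_{i,\ell}=\alpha_i$. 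Hence $\nu_i(G_i>u)\le\min\{1,m\tau(u)\}$. Integrating, with $\E_{\nu_i}G_i=\int_0^\infty\nu_i(G_i>u)\,du$, gives an additive $2\ln m$ from the range of $u$ where the minimum equals $1$, since $\tau$ decays like $e^{-u/2}$.

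\textbf{Step 3: the constant, which is the main obstacle.} The crude tail $\sqrt{2e}\,e^{-u/2}$ already gives
\[
\E_{\nu_i}G_i\le 2\ln m+\ln(2e)+2=3+\ln2+2\ln m.
\]
That is the right shape but a worse constant than $\Lmix=2+\gamma$. To reach $2+\gamma$, I would avoid the lossy choice $t=1/(2K)$ and keep the exact quantity $\E c_B\le\E(B+1)^{-1/2}$. One can then optimize $t$ separately for each threshold, or replace the indicator tail by a smooth test function. Concretely, the bound $\sum_T\hat d(T)^2(1-t)^{\kappa}\le\alpha/\sqrt{rt}$ for all $t$ is a Laplace-transform bound on the law of $\kappa/r$ under each $\nu_{i,\ell}$. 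The projection argument should be run with the same $t$-family, so that the combined tail is
\[
\min\Bigl\{1,\;m\inf_t\frac{(1-t)^{-(K-1)}}{\sqrt{rt}}\Bigr\},\qquad K=re^{-u}.
\]
The Euler constant should then appear from integrating a Gamma-type tail (an $\int\ln(1/s)e^{-s}\,ds$ term). I expect this bookkeeping to be the delicate part. If it fails, the structure of the proof is unchanged and only $\Lmix$ is affected.

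\textbf{Step 4: character multiplication.} Multiplying every piece by a fixed character $\chi_V$ on the derivative cube multiplies each $d_{i,\ell}$ pointwise by $\pm1$. The supports, the disjointness in \eqref{eq:pieces}, and the $\{0,\pm1\}$-valuedness are all preserved. Lemma~\ref{lem:antichain-noise} allows arbitrary signs on an antichain support, so Steps 1--3 apply verbatim to the modified pieces and their sum. This gives the final assertion.
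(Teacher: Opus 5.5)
Your Steps 1, 2 and 4 are sound. The projection and Cauchy--Schwarz argument of Step 2 in fact gives $\nu_i(E)\le\sum_\ell\nu_{i,\ell}(E)$ for every family $E$ of sets, which is the paper's pointwise domination $\nu_i\le\sum_\ell\nu_{i,\ell}$. Your character remark also matches the paper.

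The gap is the constant, which is part of the statement. Your argument proves only $\E_{\nu_i}G_i\le3+\ln2+2\ln m$. The lemma asserts $\Lmix+2\ln m$ with $\Lmix=2+\gamma\approx2.58$, which is smaller than $3+\ln2\approx3.69$. This value of $\Lmix$ is what produces $11.238$ and $11.460$ in Theorem~\ref{thm:B}.

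Your proposed repair does not reach it. The combined tail you display is optimized at $t=1/(2K-1)$. There it is at most $\sqrt{2e}\,e^{-u/2}$, and it converges to $\sqrt{2e}\,e^{-u/2}$ for each fixed $u$ as $r_\ell(i)\to\infty$. By dominated convergence, $\int_0^\infty\min\{1,m\tau(u)\}\,du$ therefore still tends to $3+\ln2+2\ln m$. The loss lies in converting the noise bound into tail probabilities before integrating.

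The missing idea is to bound the Laplace transform of
\[
 Z=e^{-G_i}=\max_\ell\frac{\kappa_\ell+1}{r_\ell(i)},\qquad \kappa_\ell=|T\cap U_i^\ell|,
\]
directly, and to recover $G_i$ from it by Frullani's identity.

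\begin{itemize}
\item Apply \eqref{eq:relative-noise} to the $\ell$th piece at resampling rate $1-e^{-t/r_\ell(i)}$, and use $1-e^{-x}\ge xe^{-x}$. This gives $\E_{\nu_{i,\ell}}e^{-t(\kappa_\ell+1)/r_\ell(i)}\le t^{-1/2}$.
\item Since $e^{-tZ}\le e^{-t(\kappa_\ell+1)/r_\ell(i)}$ for every $\ell$, the domination $\nu_i\le\sum_\ell\nu_{i,\ell}$ yields $\E_{\nu_i}e^{-tZ}\le\min\{1,mt^{-1/2}\}$.
\item Because $0<Z\le1$, one has $G_i=\int_0^\infty(e^{-tZ}-e^{-t})\,dt/t$ with a nonnegative integrand.
\end{itemize}

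Splitting the integral at $t=m^2$ gives
\[
 \E_{\nu_i}G_i\le\int_0^{m^2}\frac{1-e^{-t}}t\,dt+\int_{m^2}^\infty\frac{mt^{-1/2}-e^{-t}}t\,dt=2\ln m+\gamma+2,
\]
using $\int_0^A(1-e^{-t})\,dt/t-\int_A^\infty e^{-t}\,dt/t=\ln A+\gamma$. This is where Euler's constant enters. Your remark about a smooth test function points in this direction. However, the test function $e^{-tZ}$ has to replace the tail step entirely, rather than feed into a combined tail bound that is then integrated.
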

\begin{proof}
Omit zero pieces and put
$\nu_{i,\ell}(T)=\wh{d_{i,\ell}}(T)^2/\alpha_{i,\ell}$.
Weighted Cauchy--Schwarz gives the pointwise domination
\[
 \nu_i(T)
 =\frac{|\sum_\ell\wh{d_{i,\ell}}(T)|^2}{\alpha_i}
 \le\sum_\ell\frac{\wh{d_{i,\ell}}(T)^2}{\alpha_{i,\ell}}
 =\sum_\ell\nu_{i,\ell}(T).
\]
Write $\kappa_\ell(T)=|T\cap U_i^\ell|$ and
\[
 Z(T)=e^{-G_i(T)}
     =\max_\ell\frac{\kappa_\ell(T)+1}{r_\ell(i)}.
\]
For $t>0$, apply \eqref{eq:relative-noise} to the $\ell$th piece
with resampling rate $1-e^{-t/r_\ell(i)}$. It gives
\[
 \E_{\nu_{i,\ell}}e^{-t(\kappa_\ell+1)/r_\ell(i)}
 \le\frac{e^{-t/r_\ell(i)}}
          {\sqrt{r_\ell(i)(1-e^{-t/r_\ell(i)})}}
 \le t^{-1/2},
\]
where the last inequality uses $1-e^{-u}\ge ue^{-u}$.
Since $Z\ge(\kappa_\ell+1)/r_\ell(i)$ and
$\nu_i\le\sum_\ell\nu_{i,\ell}$,
\[
 \E_{\nu_i}e^{-tZ}\le\min\{1,mt^{-1/2}\}.
\]
Frullani's identity and $0<Z\le1$ now yield
\begin{align*}
 \E_{\nu_i}G_i
 &=\int_0^\infty\frac{\E_{\nu_i}e^{-tZ}-e^{-t}}t\,dt\\
 &\le\int_0^{m^2}\frac{1-e^{-t}}t\,dt
      +\int_{m^2}^\infty\frac{mt^{-1/2}-e^{-t}}t\,dt
 =2\ln m+\gamma+2.
\end{align*}
Here we used
$\int_0^A(1-e^{-t})\,dt/t-\int_A^\infty e^{-t}\,dt/t=\ln A+\gamma$.
This proves \eqref{eq:multirank}.
Character multiplication preserves the disjoint supports, their squared norms, and their antichain property, so the same proof applies.
\end{proof}

\subsection{A reference for one coordinate order}

Relabel a single coordinate order as $1,\ldots,n$.
Euler's product for the hyperbolic sine \cite[Eq.~4.36.1]{DLMF} gives
\[
 \prod_{i\ge1}\left(1+\frac z{i^2}\right)
 =\frac{\sinh(\pi\sqrt z)}{\pi\sqrt z}
 =\sum_{k\ge0}\frac{\pi^{2k}}{(2k+1)!}z^k.
\]
Comparison of coefficients shows that
\begin{equation}\label{eq:euler-reference}
 Q_k(S)=\frac{(2k+1)!}{\pi^{2k}}\prod_{i\in S}i^{-2},
 \qquad |S|=k,
\end{equation}
is a subprobability reference on the $k$-element subsets of $[n]$.
For $S=\{i_1<\cdots<i_k\}$, put
$F_{i_j}(S\setminus\{i_j\})=\ln(i_j/j)$.
Then
\begin{align}
 -\ln Q_k(S)
 &=2\sum_{j=1}^kF_{i_j}(S\setminus\{i_j\})
     +\ln\frac{\pi^{2k}(k!)^2}{(2k+1)!}\notag\\
 &\le2\sum_{j=1}^kF_{i_j}(S\setminus\{i_j\})
     +k\ln\frac{\pi^2}{4}.
 \label{eq:euler-cost}
\end{align}
The inequality $(2k+1)!\ge4^k(k!)^2$ follows by induction: the quotient
at $k+1$ divided by the quotient at $k$ is $(2k+3)/(2k+2)>1$.
The empty set has reference mass one.

\subsection{Reference measures on a Fourier level}

For $k\ge1$ and $a>1$, put
\begin{equation}\label{eq:rankweights}
 \vartheta_k(i)=\max_{1\le\ell\le m}
                  \min\{1,(k/r_\ell(i))^a\}.
\end{equation}
Give each $k$-element set $S$ weight $\prod_{i\in S}\vartheta_k(i)$ and normalize by $Z_k$.
Each rank order is a permutation, so integral comparison gives
\[
 \sum_i\min\{1,(k/r_\ell(i))^a\}\le\frac{ak}{a-1}.
\]
Consequently,
\begin{equation}\label{eq:normalizer}
 Z_k\le\frac{(\sum_i\vartheta_k(i))^k}{k!}
       \le\left(\frac{ema}{a-1}\right)^k.
\end{equation}
If $i\in S$ and $T=S\setminus\{i\}$, then
\[
 \ln\frac1{\vartheta_{|S|}(i)}
 =a\min_\ell\ln^+\frac{r_\ell(i)}{|S|}
 \le aG_i(T).
\]
Averaging under the Fourier law and using \eqref{eq:derivlaw} gives
\begin{equation}\label{eq:rankcost}
 \E\sum_{i\in\mathcal S}\ln\frac1{\vartheta_{|\mathcal S|}(i)}
 \le a(\Lmix+2\ln m)\I(f).
\end{equation}
Gibbs' inequality on each level, \eqref{eq:normalizer}, \eqref{eq:rankcost}, and Lemma~\ref{lem:levelentropy} prove
\begin{equation}\label{eq:ordinaryexact}
 \HH(f)\le
 \left(\Clev+\frac{1+\ln\frac a{a-1}+a\Lmix+(1+2a)\ln m}{\ln2}\right)\I(f).
\end{equation}
For $a=6/5$, the constant independent of $m$ is
$11.2375133696\ldots$, which proves \eqref{eq:multiI}.

\subsection{Centering the influence}

Let
\[
 U=\{i:\Inf_i(f)>1/2\},\qquad b=|U|,\qquad g=\chi_U f.
\]
Then $\I(g)=\D(f)$ and $\HH(g)=\HH(f)$.
For $i\notin U$, the derivative of $g$ is $\chi_Ud_i f$.
Lemma~\ref{lem:multi-rank} therefore applies directly to its signed pieces.

For the reference law, replace $\vartheta_k(i)$ by $1$ when $i\in U$, and leave it unchanged otherwise.
No rank cost is charged to a coordinate in $U$.
Its normalizing constant is at most
\[
 \left(e\left(b+\frac{ma}{a-1}\right)\right)^k,
\]
because $k\ge1$.
Repeating the preceding entropy argument, now for $g$, yields
\begin{equation}\label{eq:centerexact}
 \HH(f)\le
 \left(\Clev+\frac{1+\ln\!\left(b+\frac{ma}{a-1}\right)
                    +a(\Lmix+2\ln m)}{\ln2}\right)\D(f).
\end{equation}

To bound $b$, observe that for a unate function $h$,
$|\wh h(\{i\})|=\Inf_i(h)$, whence
$\sum_i\Inf_i(h)^2\le1$.
An edge on which $f$ changes changes at least one component function. Thus
\[
 \sum_i\Inf_i(f)^2
 \le m\sum_{\ell=1}^m\sum_i\Inf_i(h_\ell)^2
 \le m^2,
\]
and hence $b<4m^2$.

For $m=1$, there is a sharper observation.
If a coordinate has a predecessor in a desirability order, its pivotal support on that one-dimensional predecessor fiber is an antichain.
Its conditional pivotal probability is at most $1/2$, so its influence is at most $1/2$.
Every coordinate in $U$ therefore has rank $1$ and has rank cost zero.
For $i\notin U$, the derivative of $g$ is a signed function with the same
antichain supports as $d_i f$.
Lemma~\ref{lem:ranktail} and the derivative law therefore give
\[
 \E\sum_{i\in\mathcal S}F_i(\mathcal S\setminus\{i\})
 \le\Lp\I(g),
\]
where $\mathcal S$ has the spectral law of $g$.
Use the reference \eqref{eq:euler-reference} for this law.
Gibbs' inequality on each level, \eqref{eq:euler-cost}, and
Lemma~\ref{lem:levelentropy} yield
\begin{align*}
 \HH(f)=\HH(g)
 &\le\left(\Clev+\frac{2\Lp+\ln(\pi^2/4)}{\ln2}\right)\I(g)\\
 &=\bigl(\Clev+2\log(e\pi)\bigr)\D(f)=\Creg\D(f).
\end{align*}
This proves Theorem~\ref{thm:A}, including its character and variance-deficit consequences by Lemma~\ref{lem:centering}.

For $m\ge2$, the integer bound $b\le4m^2-1$ gives
\[
 b+6m\le7m^2,
\]
since $3m^2-6m+1>0$.
The same bound is harmless for $m=1$, for which $b\le1$.
Taking $a=6/5$ in \eqref{eq:centerexact} proves \eqref{eq:multiD}, with the exact constant
\begin{equation}\label{eq:cm}
 \Cmulti=\Clev+\frac{1+\ln7+\frac65(2+\gamma)}{\ln2}
               +\frac{22}{5}\log m
       <11.460+\frac{22}{5}\log m.
\end{equation}
Lemma~\ref{lem:centering} proves the remaining assertions of Theorem~\ref{thm:B}.
If $\D(f)=0$, the centered function is constant and all these inequalities hold with both sides zero.

\subsection{Several desirability chains}

A \emph{desirability chain partition} is a partition of the coordinates into ordered lists such that every earlier coordinate in a list is at least as desirable as every later one.
The orientation is fixed for the whole function.
Singleton lists always give such a partition.

\begin{theorem}\label{thm:width}
Let $w\ge1$. If a unate Boolean function $f$ has a desirability chain partition with at most $w$ lists, then
\begin{equation}\label{eq:widthsingle}
 \HH(f)\le(\Creg+\log w)\D(f).
\end{equation}
If each of $h_1,\ldots,h_m$ has such a partition with at most $w$ lists, then $f=F(h_1,\ldots,h_m)$ satisfies
\[
 \HH(f)\le\left(11.238+\frac{17}{5}\log m+\log w\right)\I(f)
\]
and
\[
 \HH(f)\le\left(11.460+\frac{22}{5}\log m+\log w\right)\D(f).
\]
\end{theorem}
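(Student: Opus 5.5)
We would rerun the proofs of Theorem~\ref{thm:A} and Theorem~\ref{thm:B}, replacing a single coordinate order by $w$ lists. The only new cost is a factor $w$ in the normalizer of the reference measure, which contributes $\log w$ per Fourier element. For a coordinate $i$ in list $\lambda$, let $r(i)$ be its position in that list and $U_i$ the set of earlier coordinates in the same list. Lemma~\ref{lem:pivotal-antichain} only needs every coordinate of $U_i$ to be at least as desirable as $i$, so the support of $d_if$ is an antichain on every fiber of $U_i$. Lemma~\ref{lem:antichain-noise}, Lemma~\ref{lem:ranktail} and Lemma~\ref{lem:multi-rank} then apply verbatim with these within-list ranks and predecessor sets, including after multiplication by a character.

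For the single-function bound \eqref{eq:widthsingle}, we use the within-list rank cost $F_i(T)=\ln(r(i)/(|T\cap U_i|+1))$, where $T$ is the rest of the Fourier set. We take as reference the convolution of the Euler references \eqref{eq:euler-reference} on the lists, normalized for a total level $k$ and spread uniformly over the $w$ lists. Concretely, we assign $S$ the mass
\[
 w^{-k}\prod_{\lambda}\frac{(2k_\lambda+1)!}{\pi^{2k_\lambda}}\prod_{i\in S\cap\lambda}r(i)^{-2},
 \qquad k_\lambda=|S\cap\lambda|,\ \sum_\lambda k_\lambda=k.
\]
Summing over $S$ with $|S|=k$ is bounded by $w^{-k}$ times the number of compositions $(k_\lambda)$ of $k$ weighted by products of subprobabilities. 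That is not obviously at most one, since the number of compositions is $\binom{k+w-1}{w-1}$, which exceeds $w^k$ for small $k$. So instead we bound each list's Euler mass by one and use the exponential generating identity
\[
 \sum_{\sum k_\lambda=k}\ \prod_\lambda\frac{x^{k_\lambda}}{\ldots}.
\]

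Alternatively, and more cleanly, we use the product form. Give $S$ mass $\prod_{i\in S}r(i)^{-2}$ times the factor $(2k+1)!/(\pi^{2k}w^{k})$. By Euler's product, the total mass on level $k$ is the $z^k$ coefficient of
\[
 \prod_{\lambda}\frac{\sinh(\pi\sqrt z)}{\pi\sqrt z}.
\]
This coefficient is at most $w^k\pi^{2k}/(2k+1)!$, which we obtain by comparing coefficients: the $w$-fold product of a series with coefficients $a_j=\pi^{2j}/(2j+1)!$ has coefficients at most $w^k a_k$, since $a_ja_l\le a_{j+l}$. This log-concavity-type inequality, $(2j+2l+1)!\le(2j+1)!(2l+1)!\binom{k+\cdots}{}$, is the one step needing checking. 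We expect it to follow from $\binom{2k+1}{2j+1}\ge1$ together with the count of $w^k$ terms, by bounding the product of series by the series in $wz$ termwise. Then \eqref{eq:euler-cost} becomes
\[
 -\ln Q_k(S)\le 2\sum_{i\in S}F_i+k\ln(\pi^2/4)+k\ln w.
\]
The centering step via $U=\{i:\Inf_i f>1/2\}$ is unchanged. A coordinate with a within-list predecessor has influence at most $1/2$, so coordinates in $U$ are list heads and have zero rank cost. Combining Lemma~\ref{lem:ranktail}, Gibbs' inequality per level and Lemma~\ref{lem:levelentropy} gives $\Creg+\log w$.

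For combinations of $m$ functions, we use the weights \eqref{eq:rankweights} with $r_\ell(i)$ the within-list rank in the list of $h_\ell$ containing $i$. The integral comparison becomes
\[
 \sum_i\min\{1,(k/r_\ell(i))^a\}\le\frac{wak}{a-1},
\]
since each list contributes at most $ak/(a-1)$. This multiplies the normalizer bound \eqref{eq:normalizer} by $w^k$ and adds $\log w$ to \eqref{eq:ordinaryexact}. In the centered version the normalizer becomes $\bigl(e(b+wma/(a-1))\bigr)^k$. Since $b+6wm\le w(b+6m)\le7wm^2$, this again adds exactly $\log w$, using the same bound $b<4m^2$, which does not depend on the orders. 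Lemma~\ref{lem:multi-rank} is used unchanged. The main obstacle is verifying the coefficient comparison for the multi-list Euler reference. If it fails, we would fall back on splitting level $k$ by the list profile and paying $\log\binom{k+w-1}{w-1}\le k\log w+\ldots$, which would cost an additive term that needs to be absorbed.
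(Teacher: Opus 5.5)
\textbf{There is a genuine gap in the single-function bound.} Your use of within-list ranks, the antichain and rank lemmas, the centering, and the $m$-component normalizers all match the paper. The problem is the reference you finally adopt for \eqref{eq:widthsingle}: it is not a subprobability, so that step fails.

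\begin{itemize}
\item \emph{The coefficient bound is false.} You need $[z^k]\bigl(\sinh(\pi\sqrt z)/(\pi\sqrt z)\bigr)^w\le w^k\pi^{2k}/(2k+1)!$, and this does not hold.
\item \emph{The supporting inequality is reversed.} Since $(2j+1)!\,(2l+1)!\le(2j+2l+1)!$, one has $a_ja_l\ge a_{j+l}$, not $a_ja_l\le a_{j+l}$.
\item \emph{A concrete failure at $w=2$.} Here $(\sinh x/x)^2=(\cosh 2x-1)/(2x^2)$. For long lists the level-$k$ mass of your reference therefore tends to $2^k/(k+1)$. This already exceeds one at $k=2$, where $2a_2+a_1^2=2\pi^4/45>4a_2=\pi^4/30$.
\item \emph{The general size of the loss.} The level-$k$ mass is $w^{-k}\sum_{k_\bullet}(2k+1)!/\prod_\lambda(2k_\lambda+1)!$ in the long-list limit. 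On balanced profiles the weight $(2k+1)!/\prod_\lambda(2k_\lambda+1)!$ is of order $w^{2k}$ up to polynomial factors in $k$. So the mass grows like $w^k$. Renormalizing would cost roughly another $\log w$ per element, and your route would give only $\Creg+2\log w$.
\end{itemize}

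\textbf{The fix is the reference you discarded.} Your reason for discarding it was wrong: for all $k\ge1$,
$\binom{k+w-1}{w-1}=\prod_{j=1}^{k}\bigl(1+\frac{w-1}{j}\bigr)\le w^k$.

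\begin{itemize}
\item Hence $w^{-k}$ times the product of the within-list Euler references is a subprobability on each level.
\item Applying \eqref{eq:euler-cost} in each list bounds its cost by $2\sum_{i\in S}F_i+k\ln(\pi^2/4)+k\ln w$, exactly what is needed.
\item Your fallback of paying for the list profile therefore incurs no additive term at all.
\item The paper does the same thing with the multinomial law $\frac{k!}{\prod_\lambda k_\lambda!}w^{-k}$ on the count vector $(k_1,\ldots,k_w)$. This is an exact probability, and its cost is at most $k\ln w$.
\item With either reference, Lemma~\ref{lem:ranktail}, Gibbs' inequality on each level and Lemma~\ref{lem:levelentropy} give $\Creg+\log w$ as you intended.
\end{itemize}

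\textbf{A minor point in the $m$-component part.} For $m=1$ and $w\ge2$, the bound $b<4$ allows $b=2$ or $3$, so your step $w(b+6)\le7w$ can fail. Use instead $b\le w$: only list heads can have influence above $1/2$. This gives $b+6w\le7w$ directly. Alternatively, invoke \eqref{eq:widthsingle}, as the paper does.
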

\begin{proof}
Use rank within a list and predecessors in that list.
All the antichain and rank estimates remain unchanged.
For a single component, condition on the total Fourier-set size $k$ and
let $k_1,\ldots,k_w$ be its counts in the respective lists.
Use the multinomial reference
\[
 q_k(k_1,\ldots,k_w)=\frac{k!}{\prod_{\ell=1}^wk_\ell!}\,w^{-k}
\]
for these counts, and the product of the references
\eqref{eq:euler-reference} within the lists.
This is a subprobability reference. The count-vector cost is at most
$k\ln w$, and \eqref{eq:euler-cost} bounds its remaining cost by
twice the sum of the within-list rank costs plus $k\ln(\pi^2/4)$.
High-influence coordinates have rank $1$ in their respective lists and
therefore have zero rank cost after centering.
The argument for Theorem~\ref{thm:A} proves \eqref{eq:widthsingle}.

For several components, each order system satisfies
\[
 \sum_i\min\{1,(k/r_\ell(i))^a\}\le\frac{wak}{a-1}.
\]
This adds $\log w$ to \eqref{eq:ordinaryexact}.
The centered normalizer is bounded using
$b+6mw\le7m^2w$.
For $m\ge2$ this follows from $b\le4m^2-1$ and $w\ge1$.
For $m=1$ one can use the preceding sharper argument directly.
The rest of the proof is unchanged.
\end{proof}

\subsection{Halfspaces, score transitions, and restrictions}

\begin{corollary}\label{cor:scores}
Let $z(x)=\sum_iw_ix_i$, and let $f(x)$ be a Boolean labeling of the attained values of $z$ with at most $m\ge1$ transitions in their increasing order.
Then the centered and character bounds of Theorem~\ref{thm:B} hold.
The ordinary bound improves to
\[
 \HH(f)\le\left(11.238+\frac{12}{5}\log m\right)\I(f).
\]
\end{corollary}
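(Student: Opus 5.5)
The plan is to write $f$ as a Boolean combination of $m$ halfspaces that all share one weight vector, so that Theorem~\ref{thm:B} applies directly. I then rerun the proof of \eqref{eq:multiI}, using the fact that all components share a single desirability order, to remove part of the $\log m$ term.

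\emph{Step 1 (representation).} Let $s_1<\cdots<s_p$ be the attained values of $z$. The labeling changes value at most $m$ times along this list, say between $s_{j_\ell}$ and $s_{j_\ell+1}$ for $\ell=1,\ldots,m'$, where $m'\le m$. Choose thresholds $\theta_\ell\in(s_{j_\ell},s_{j_\ell+1})$ and put $h_\ell=\operatorname{sgn}(z-\theta_\ell)$. On attained values, $f$ equals $\pm\prod_\ell h_\ell$, which is a function $F(h_1,\ldots,h_{m'})$. If $m'<m$, pad with repeated components; both constants in Theorem~\ref{thm:B} are increasing in $m$. Each $h_\ell$ is a linear threshold function, hence regular, so Theorem~\ref{thm:B} gives the centered bound \eqref{eq:multiD} and the character bounds.

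\emph{Step 2 (common order).} All $h_\ell$ have the same weights. Orient each coordinate so that $w_i\ge0$ and order coordinates by decreasing $|w_i|$. This single orientation and order is a valid desirability order for every $h_\ell$ at once. Hence $r_\ell(i)=r(i)$ and $U_i^\ell=U_i$ for every $\ell$.

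\emph{Step 3 (rerun the rank argument).} The partition \eqref{eq:pieces} is still into $m$ pieces, each supported on pivotal antichains. Lemma~\ref{lem:multi-rank} therefore still applies, with $G_i(T)=\ln\bigl(r(i)/(|T\cap U_i|+1)\bigr)$, and gives $\E_{\nu_i}G_i\le\Lmix+2\ln m$. So the rank cost \eqref{eq:rankcost} keeps its $2a\ln m$ term.

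The gain comes from the weights \eqref{eq:rankweights}. They reduce to $\vartheta_k(i)=\min\{1,(k/r(i))^a\}$, with no maximum over $\ell$. The integral comparison then gives $\sum_i\vartheta_k(i)\le ak/(a-1)$ with no factor $m$. Consequently $Z_k\le(ea/(a-1))^k$, and the $\ln m$ contributed by the normalizer in \eqref{eq:normalizer} disappears.

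\emph{Step 4 (conclusion).} Repeating the derivation of \eqref{eq:ordinaryexact} with this normalizer gives the same constant, except that $(1+2a)\ln m$ is replaced by $2a\ln m$. With $a=6/5$, the $m$-independent part is unchanged, so it is at most $11.238$. The $m$-dependent part becomes $\frac{12}{5}\log m$.

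The main point to check is Step 2: that one ordering by $|w_i|$, with the common orientation, satisfies \eqref{eq:desire} for every threshold $\theta_\ell$ simultaneously. This holds because swapping a larger weight up for a smaller one never decreases $z$, whatever the threshold. Once this is verified, the rest is bookkeeping in the existing proof.
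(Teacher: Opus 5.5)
Your proposal is correct and follows the paper's own proof. You place a threshold between consecutive attained scores with different labels and invoke Theorem~\ref{thm:B} for the centered and character bounds. The common order by decreasing $|w_i|$ then removes the maximum over $\ell$ in \eqref{eq:rankweights}, so the normalizer loses its $\ln m$ while the rank cost from Lemma~\ref{lem:multi-rank} keeps $2a\ln m$, which turns $(1+2a)\ln m$ into $\frac{12}{5}\ln m$. You also supply two details that the paper's three-line proof leaves implicit: the explicit form $f=\pm\prod_\ell h_\ell$, and the check that \eqref{eq:desire} holds for all thresholds at once.
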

\begin{proof}
Place a threshold between each pair of consecutive attained scores whose labels differ.
Then $f$ is a Boolean function of the resulting $m$ thresholds.
All these gates have the same order by decreasing $|w_i|$.
Thus the reference \eqref{eq:rankweights} needs only one order, removing the $\ln m$ term from its normalizer in \eqref{eq:ordinaryexact}.
\end{proof}

A union of $r$ intervals or rays has at most $2r$ effective transitions on the attained scores.
Thus the corollary includes finite unions of parallel slabs, with arbitrary endpoint conventions and no separation condition.
Constant labelings have zero Fourier entropy.

The classes in Theorems~\ref{thm:A}, \ref{thm:B}, and~\ref{thm:width} are closed under coordinate restrictions.
As another consequence, if fixing $k$ specified coordinates makes every restriction regular, the function is a Boolean combination of those $k$ coordinate functions and its $2^k$ regular restrictions.
Theorem~\ref{thm:B} gives a constant of order $k+1$ for this class.

\section{Read-\texorpdfstring{$k$}{k} unions of structured gates}\label{sec:union}

A union can contain arbitrarily many component functions while each input is relevant to only a few of them.
We show that this overlap parameter controls Fourier entropy.
In this section a union is true, with value $1$, when at least one component has value $1$.
Intersections follow by complementing the components and the output.

\begin{theorem}\label{union:theorem}
Let $f=h_1\vee\cdots\vee h_N$ or $f=h_1\wedge\cdots\wedge h_N$.
Suppose each $h_l$ is a Boolean function of at most $M$ regular functions, and every input coordinate is relevant to at most $k$ of the functions $h_l$, where $k,M\ge1$.
Then
\begin{equation}\label{union:ordinary}
 \HH(f)\le\left(41+7\log k+\frac{29}{5}\log M\right)\I(f).
\end{equation}
Put
\[
 \HH^+(f)=\sum_{A\ne\varnothing}\wh f(A)^2\log\frac1{\wh f(A)^2},
 \qquad
 J(f)=\sum_{|A|\ge2}|A|\wh f(A)^2.
\]
The strengthened estimate
\begin{equation}\label{union:strong}
 \HH^+(f)\le\left(66+\frac{59}{5}\log k+\frac{53}{5}\log M\right)J(f)
\end{equation}
also holds. Since $J(f)\le2(\I(f)-\Var(f))$, this implies FEI+ with coefficient
$132+\frac{118}{5}\log k+\frac{106}{5}\log M$.
If each inner function instead has a desirability chain partition with at most $w$ chains, add $\log w$ to the coefficients in \eqref{union:ordinary} and \eqref{union:strong}.
\end{theorem}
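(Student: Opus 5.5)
We follow the introduction's outline: build a reference distribution on Fourier sets from a queue of gates, and reuse the rank machinery of Section~\ref{sec:regular} inside each gate. By symmetry we treat $f=h_1\vee\cdots\vee h_N$; intersections follow by complementation, which changes no squared coefficient and no influence.

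\textbf{Pivotal decomposition.} Coordinate $i$ is pivotal for the union only if every gate other than the one containing the pivot evaluates to $-1$, and $i$ is pivotal for some gate $h_l$ with $i$ relevant to $h_l$. We partition the support of $d_i f$ by the first such gate $l$, and then by the first regular component of $h_l$ for which $i$ is pivotal. This gives at most $kM$ signed pieces with disjoint supports. Each piece is supported, on fibers, inside a pivotal antichain of a regular function.

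\textbf{Weighted Cauchy--Schwarz cost.} We apply the pointwise domination $\nu_i\le\sum\nu_{i,\ell}$ from Lemma~\ref{lem:multi-rank}, but with weights proportional to $\alpha_{i,\ell}$ instead of uniform weights. This turns the $2\ln m$ into a cost of order $\ln(kM)$. For fixed $i$, only the $k$ gates containing $i$ matter, and within each gate only $M$ orders. The rank estimate is Lemma~\ref{lem:multi-rank} with $m$ replaced by $kM$, giving expected rank cost at most $\Lmix+2\ln(kM)$.

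\textbf{Building the reference law.} A direct level-wise product reference as in \eqref{eq:rankweights} fails. Its normalizer would sum over all $N$ gates' orders and pick up $\ln N$. So we build the reference in stages.
\begin{enumerate}
\item A Fourier set $S$ is encoded by first choosing which gates it ``activates''. Gates are explored in a queue: start from gates that are a priori likely, and add a gate to the queue only when it shares a variable with an already activated gate's chosen variables. Each vertex of the overlap graph restricted to $S$ has degree governed by $k$, so a directed spanning forest explains the activated gates.
\item Each non-root activation costs $O(\log k+\log M)$ per variable of $S$ used to reach it.
\item Each root must be paid for by its own probability, namely the probability $p_l$ that $h_l=1$.
\end{enumerate}
Within an activated gate, the coordinates of $S$ are then encoded by the per-gate rank reference \eqref{eq:rankweights}, which has at most $M$ orders. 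The level entropy is handled by Lemma~\ref{lem:levelentropy}.

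\textbf{Main obstacle: root costs.} The hard step is showing that the root-activation cost $\sum_{\text{roots}}\log(1/p_l)$ is bounded in expectation by $O(\I(f))$. The plan has two parts.
\begin{enumerate}
\item Use the fixed-noise hypercontractive bound mentioned in the introduction. For a rare gate ($p_l$ small), the spectral mass of $f$ attributable to $h_l$ must lie on sets meeting $h_l$'s variables in at least about $\log(1/p_l)$ coordinates. Level-$\le c\log(1/p)$ mass of a function with density $p$ is $O(p^{1+\delta})$-small relative to its total, by the standard level-$d$ inequality.
\item Since the roots of the forest have disjoint intersections with $S$, the activation costs sum to at most a constant times $|S|$. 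Averaging over the spectral law then gives $O(\I(f))$.
\end{enumerate}
The first thing to try is to attribute each Fourier set's mass to a root gate via the derivative pieces above, and to apply the level-$d$ inequality to the restriction $\mathbf 1[h_l=1]$.

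\textbf{Assembly and the strengthened estimate.} Collecting the costs with Gibbs' inequality \eqref{eq:gibbs} gives \eqref{union:ordinary}, with constants from optimizing $a$ as in \eqref{eq:ordinaryexact}. For \eqref{union:strong} we drop the empty set and the singleton level. Singleton coefficients are bounded by influences, and their entropy is charged against sets of size at least $2$ using $|A|\le2(|A|-1)$. This yields the $J(f)$ version, and $J\le2(\I-\Var)$ gives FEI+. For the chain version, add $\log w$ by the multinomial reference of Theorem~\ref{thm:width}.
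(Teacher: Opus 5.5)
Your outline has the paper's architecture: a queue-generated reference, disjoint root intersections, and fixed-noise hypercontractivity for rare gates. But two ideas needed to close the ordinary bound are missing.

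\textbf{Root normalization.} Pricing each root at $\log(1/p_l)$ leaves out the cost of normalizing the root choice. The paper makes each gate a root independently with odds $\pi_l/k$. Every nonempty Fourier set then also pays $\Lambda=\sum_l\log(1+\pi_l/k)$, which grows with $N$: for the OR of $N$ disjoint dictators, $\Lambda\asymp N$ while $\I(f)=N2^{1-N}$. This cost is affordable only for two reasons. The reference puts total mass $\Var(f)\le4\P[f=-1]$ on nonempty sets. And the path bound $\P[f=-1]\sum_lp_l\le\frac k4\I(f)$ of Lemma~\ref{union:path} holds. Your plan has nothing playing the role of this bound.

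\textbf{Sparsely hit rare roots.} Disjointness of root intersections pays only for dense roots, those with $|A\cap R_r|\ge b_r/\beta$. For a rare root that $A$ meets in few coordinates, attributing mass ``via the derivative pieces'' fails as stated, for two reasons.
\begin{enumerate}
\item The piece index of $d_jf$ is a function of the input point, not of $A$. So the root of $A$'s forest need not be a gate whose piece carries $A$'s mass.
\item $\nu_j\le\sum_l\nu_{j,l}$ has total mass as large as the number of pieces. Averaging a cost over it loses a factor $k$ unless you work with tails capped at $1$.
\end{enumerate}
The paper resolves this as follows. It assigns each $j\in A$ to a gate $\tau(j)$ minimizing the combined rank-plus-rarity cost $G_j(l)$. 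It charges a nondense root to a variable that root owns. It then uses $G_j\le G_j(l)$ for every $l$ to obtain the capped tail \eqref{union:Gtail}. The rarity input is hypercontractivity for the fibers (over coordinates outside $R_l$) of the signed derivative piece lying in the $j$-pivotal set of $h_l$, as in Lemma~\ref{union:rare}. A level inequality for $\mathbf 1[h_l=1]$ itself says nothing about $\wh f(A)$.

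\textbf{Rank mismatch.} A gate encodes only its assigned part $A_l$, whose size can be below $|A\cap U_j|+1$. So the per-gate rank reference needs the log-sum correction \eqref{union:logsum}, which costs an extra $a|A|\log k$.

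\textbf{The strengthened estimate.} For \eqref{union:strong}, the inequality $|A|\le2(|A|-1)$ only converts $J$ into $\I-\Var$ at the very end. The real work is controlling every term by $J$. Your singleton remark is essentially the first half of Lemma~\ref{union:smallmass}. You still need three further steps:
\begin{enumerate}
\item the bound $W\log(1/W)\le8J$ for the total nonsingleton mass $W$, whose proof needs a separate near-dictator case;
\item the vacuum bound $W\Lambda\le(\log e)J$;
\item a version of the rank and rarity average restricted to $T\ne\varnothing$.
\end{enumerate}
The third step is not automatic, because the mean bound \eqref{union:Gmean} is not proportional to $\nu_j(T\ne\varnothing)$. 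When $\nu_j(\varnothing)>1/2$, the paper instead uses $\Inf_j(f)>1/2$, the unate Chow identity, and monotonicity of influences along the desirability order. These bound the rank, and hence $G_j$, pointwise, giving \eqref{union:Gconditioned}. None of these steps appears in your sketch.
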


In particular, read-$k$ unions and intersections of any number of halfspaces satisfy
$\HH(f)\le(41+7\log k)\I(f)$.
The theorem also includes read-$k$ unions of polytopes with at most $M$ facets and read-$k$ DNF/CNF formulas.
A Boolean function on $w$ variables is a Boolean combination of its $w$ coordinate functions, so arbitrary gates of width $w$ can be included by taking $M=w$.
Here a variable is counted once per outer gate, regardless of how many inner gates use it.

We first prove \eqref{union:ordinary} for regular inner functions.
Let $R_l$ be the relevant support of $h_l$.
Its regular representation may be assumed to use only $R_l$: fix all irrelevant variables identically in its inner functions, which preserves regularity and the represented function.
For a union, remove constant false gates; a constant true gate makes the assertion immediate.
Thus write
\[
 p_0=\P[f=-1],\qquad p_l=\P[h_l=1]>0,\qquad
 \pi_l=\min\{p_l,1/2\},\qquad b_l=\log(1/\pi_l).
\]

\subsection{The path bound and rare pivotal supports}

\begin{lemma}\label{union:path}
For arbitrary outer gates with the stated overlap bound,
\[
 p_0\sum_l p_l\le\frac{k}{4}\I(f).
\]
\end{lemma}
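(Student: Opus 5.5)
The plan is a canonical-path (bit-fixing) argument carried out separately for each component $h_l$, followed by a sum over $l$ that uses the overlap bound. It uses only that $f$ is a union (or, after complementing, an intersection) and that $h_l$ depends only on its relevant support $R_l$; nothing about the inner structure of $h_l$ is needed. For an intersection, replace $f$ and all $h_l$ by their negations. This swaps the roles of the values $\pm1$ and leaves the influences and the relevant supports unchanged, so it suffices to treat $f=h_1\vee\cdots\vee h_N$.

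\emph{Step 1: paths.} Fix $l$ and put $r=|R_l|$. Let $X,Y$ be independent and uniform on $\bits^n$. Let $Z$ agree with $Y$ on $R_l$ and with $X$ off $R_l$.
\begin{itemize}
\item The events $\{f(X)=-1\}$ and $\{h_l(Y)=1\}$ are independent, so their intersection has probability $p_0p_l$.
\item On that intersection, $h_l(Z)=h_l(Y)=1$ because $h_l$ depends only on $R_l$, hence $f(Z)=1$.
\end{itemize}
Order $R_l$ once and for all. Walk from $X$ to $Z$ by changing, in this order, each coordinate of $R_l$ on which $X$ and $Y$ differ. Since $f(X)=-1$ and $f(Z)=1$, the walk has a first step on which $f$ changes from $-1$ to $1$. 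We charge the pair $(X|_{R_l},Y|_{R_l})$, together with the common values outside $R_l$, to this directed boundary edge. Its direction $i$ lies in $R_l$.

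\emph{Step 2: congestion.} Fix a directed edge $(z,z^{\oplus i})$ with $i\in R_l$, $f(z)=-1$ and $f(z^{\oplus i})=1$. Count the configurations charged to it.
\begin{itemize}
\item Outside $R_l$, the values are forced to agree with $z$.
\item On the coordinates of $R_l$ before $i$, $Y$ must agree with $z$, while $X$ is free.
\item On the coordinates of $R_l$ after $i$, $X$ must agree with $z$, while $Y$ is free.
\item At $i$ itself, $X_i=z_i$ and $Y_i=-z_i$ are forced.
\end{itemize}
So at most $2^{r-1}$ configurations are charged to each such edge, out of $2^n\cdot2^r$ equally likely configurations $(X,Y|_{R_l})$. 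The number of directed edges in direction $i$ going from $f=-1$ to $f=1$ is $\Inf_i(f)\,2^{n-1}$. Hence
\[
 p_0p_l\le\sum_{i\in R_l}\frac{\Inf_i(f)\,2^{n-1}\,2^{r-1}}{2^{n+r}}
 =\frac14\sum_{i\in R_l}\Inf_i(f).
\]

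\emph{Step 3: summation.} Summing over $l$ and exchanging the order of summation, each coordinate $i$ appears in at most $k$ of the sets $R_l$. Therefore
\[
 p_0\sum_l p_l\le\frac k4\sum_i\Inf_i(f)=\frac k4\I(f).
\]

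The only delicate point is the congestion count in Step 2. The argument needs the charged edge to be exactly the first $-1\to1$ transition, so that each configuration is counted once. It also needs the factor $1/4$ to come out exactly: a factor $1/2$ from counting only the $-1\to1$ orientation of each boundary edge, and a factor $1/2$ from forcing $X_i\ne Y_i$ at the pivotal step.
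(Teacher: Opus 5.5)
Your proof is correct and is essentially the paper's argument. The paper resamples the coordinates of $R_l$ one at a time, notes that every intermediate input is uniform, so the step at coordinate $j$ is a false-to-true transition with probability $\Inf_j(f)/4$, and applies a union bound before summing over $l$ with the overlap bound; your count of $2^{r-1}$ charged configurations per directed boundary edge is the same computation written as counting, with charging to the first transition in place of the union bound.
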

\begin{proof}
For fixed $l$, independently resample all coordinates in $R_l$, one at a time in a fixed order.
The event that $f$ is initially false is independent of the event that $h_l$ is true after this resampling.
With probability $p_0p_l$ the path therefore goes from a false to a true input, so some step is a false-to-true transition.
Every intermediate input is uniform, and resampling coordinate $j$ makes that transition with probability $\Inf_j(f)/4$.
A union bound, followed by summation over $l$, proves the claim.
\end{proof}

The following fixed-noise estimate will pay for selecting rare gates.
It applies to arbitrary Boolean gates and arbitrary signed pieces.

\begin{lemma}\label{union:rare}
Let $h$ have support $R\ni j$, set $\beta=120/11$ and $b=\log(1/\min\{\P[h=1],1/2\})$, and let $d\in\{0,\pm1\}$ vanish outside the $j$-pivotal support of $h$.
Suppose $\|d\|_2>0$ and write $\nu(T)=\wh d(T)^2/\|d\|_2^2$.
Then
\begin{equation}\label{union:raretail}
 \nu\bigl(|T\cap R|+1<b/\beta\bigr)
 \le 2^{3/5}\,2^{-5b/12}.
\end{equation}
\end{lemma}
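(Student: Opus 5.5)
The plan is a fixed-noise hypercontractive estimate applied only to the coordinates of $R$, combined with the fact that the $j$-pivotal support of a rare gate is small. Write $\alpha=\|d\|_2^2=\P[d\ne0]$ and $p=\P[h=1]$.

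First, dispose of the trivial range. If $p\ge1/2$, then $b=1$ and the right side of \eqref{union:raretail} is $2^{3/5-5/12}>1$, so there is nothing to prove. Assume $p<1/2$, so that $b=\log(1/p)$.

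\textbf{Step 1: the mass of $d$ is small.} The $j$-pivotal set of $h$ is closed under flipping $x_j$. Each flipped pair contains exactly one point with $h=1$. Hence the pivotal set has measure at most $2p$, and since $d$ vanishes outside it,
\[
 \alpha\le2p=2^{1-b}.
\]

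\textbf{Step 2: partial hypercontractivity.} Let $T_\rho^R$ be the noise operator acting with correlation $\rho$ on the coordinates of $R$ only, and as the identity elsewhere. By the argument of \eqref{eq:partialnoise},
\[
 \sum_T\wh d(T)^2\rho^{|T\cap R|}=\langle d,T_\rho^Rd\rangle .
\]
The operator $T_{\sqrt\rho}^R$ is a tensor product of one-coordinate hypercontractive operators and identities. By the Bonami inequality (applied to the $R$-factors, with Minkowski's integral inequality for the identity factors), it maps $L_{1+\rho}$ to $L_2$ contractively. Therefore
\[
 \langle d,T_\rho^Rd\rangle=\|T^R_{\sqrt\rho}d\|_2^2\le\|d\|_{1+\rho}^2=\alpha^{2/(1+\rho)},
\]
where the last equality uses $|d|\in\{0,1\}$. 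Signs of $d$ play no role, which is why the lemma holds for arbitrary signed pieces.

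\textbf{Step 3: Markov and the choice of $\rho$.} On the event $|T\cap R|+1<b/\beta$ we have $\rho^{|T\cap R|}\ge\rho^{b/\beta-1}$. Dividing the Step 2 bound by $\alpha$ gives
\[
 \nu\bigl(|T\cap R|+1<b/\beta\bigr)\le\rho^{1-b/\beta}\,\alpha^{(1-\rho)/(1+\rho)}.
\]
Take $\rho=1/5$, so that $(1-\rho)/(1+\rho)=2/3$. Combining with $\alpha\le2^{1-b}$ yields
\[
 \nu\bigl(|T\cap R|+1<b/\beta\bigr)\le\tfrac15\,2^{2/3}\;2^{-b\,(2/3-(\log5)/\beta)} .
\]
Now check the exponent. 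With $\beta=120/11$ one has $(\log5)/\beta=11\log5/120\approx0.2128$, so $2/3-0.2128\approx0.4538\ge5/12$. Since $b\ge1$, the exponential is at most $2^{-5b/12}$. The constant satisfies $2^{2/3}/5<2^{3/5}$, which gives \eqref{union:raretail}.

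The main obstacle is Step 2: getting the normalization against $\|d\|_2^2$ rather than against the pivotal measure. A direct argument through indicator functions of the support would only control the ratio by $\P[\text{piv}]/\alpha$, which can be large. Applying hypercontractivity to $d$ itself, using $\|d\|_{1+\rho}^{2}=\alpha^{2/(1+\rho)}$, avoids this loss. The remaining work is only to confirm the partial-noise version of hypercontractivity, which is standard by tensorization.
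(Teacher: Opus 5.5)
There is a genuine gap in your Step~2. The operator $T^R_{\sqrt\rho}$ is not a contraction from $L_{1+\rho}$ to $L_2$. The identity factors on the coordinates outside $R$ are contractions $L_{1+\rho}\to L_{1+\rho}$ but not $L_{1+\rho}\to L_2$. Tensorizing fiberwise only gives $\|T^R_{\sqrt\rho}d\|_2^2\le\E_z\|d_z\|_{1+\rho}^2$, a mixed norm that dominates $\|d\|_{1+\rho}^2$ instead of being dominated by it. For instance, if $R=\{j\}$, the operator is the identity on the derivative cube, and your inequality would read $\alpha\le\alpha^{2/(1+\rho)}$, which is false for $\alpha<1$.

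The derived bound $\E_\nu\rho^{|T\cap R|}\le\alpha^{(1-\rho)/(1+\rho)}$ also fails within the lemma's hypotheses. Take $d=\mathbf 1_P(x_{R\setminus\{j\}})\mathbf 1_E(x_{R^c})$, where $P$ is the $j$-pivotal set of $h$ (density $\pi$) and $E$ has density $\epsilon$. The left side is at least $\pi$, while the right side is $(\epsilon\pi)^{2/3}\to0$ as $\epsilon\to0$.

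A symptom is that you never use the hypothesis that $h$ has support $R$, and the lemma is false without it. With $R=\{j\}$ and $b>\beta$, the event has $\nu$-probability $1$, while the right side of \eqref{union:raretail} is below $1$.

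The repair is to fiber over the coordinates outside $R$, as the paper does. Fix $z=x_{R^c}$; the fiber $d_z$ is a $\{0,\pm1\}$-valued function on the cube of $R\setminus\{j\}$. Because $h$ depends only on $R$, its $j$-pivotal set is a cylinder over $R\setminus\{j\}$. Hence every fiber density satisfies $\alpha_z\le\Inf_j(h)\le2^{1-b}$, not merely the average $\alpha=\E_z\alpha_z$. Parseval in $z$ gives $\sum_T\wh d(T)^2\rho^{|T\cap R|}=\E_z\|T_{\sqrt\rho}d_z\|_2^2$. Bonami on each fiber bounds this by
\[
 \E_z\alpha_z^{2/(1+\rho)}\le(2^{1-b})^{(1-\rho)/(1+\rho)}\,\alpha .
\]
Dividing by $\alpha$ yields exactly the estimate you feed into Step~3. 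From there your choice $\rho=1/5$, the Markov step, and the arithmetic go through unchanged. The paper takes $\rho=1/4$, which gives the factor $2^{(3/5)(1-b)}$, and uses the cruder threshold $|T\cap R|<b/\beta$.
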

\begin{proof}
The event is empty if $b\le\beta$.
Otherwise $\P[h=1]=2^{-b}$ and $\Inf_j(h)\le2^{1-b}$.
Fix only the coordinates outside $R$, obtaining a fiber $d_z$ on $R\setminus\{j\}$.
Its support density $\alpha_z$ is at most $2^{1-b}$.
The hypercontractive inequality \cite{Bonami70,Beckner75,OD14} at noise correlation $\rho=1/4$ gives
\[
 \|T_{\sqrt{\rho}}d_z\|_2^2
 \le\|d_z\|_{1+\rho}^2=\alpha_z^{8/5}.
\]
Average over the outside fibers, apply Parseval there, and divide by $\E_z\alpha_z$.
It follows that
\[
 \E_{T\sim\nu}\rho^{|T\cap R|}
 \le 2^{(3/5)(1-b)}.
\]
On the event in \eqref{union:raretail}, $|T\cap R|<b/\beta-1\le b/\beta$.
Thus its probability is at most
$\rho^{-b/\beta}2^{(3/5)(1-b)}=2^{3/5}2^{-5b/12}$, since $3/5-2/\beta=5/12$.
\end{proof}

\subsection{A reference measure generated by a queue}

Fix
\[
 a=\frac65,\qquad \beta=\frac{120}{11},\qquad \lambda=\frac1k,\qquad B_*=\sqrt{2e}.
\]
For each regular inner gate $h_{l,r}$, order its relevant coordinates by desirability.
Write $r_{l,r}(j)$ for the rank and $U_j^{l,r}$ for the predecessors of $j$.
For $s\ge1$ and $i\in R_l$, set
\[
 \theta_s^l(i)=\max_{r:\,i\in\supp(h_{l,r})}
                  \min\{1,(s/r_{l,r}(i))^a\}.
\]
The relevant supports of the inner functions cover $R_l$, so this maximum is nonempty.
Let $Z_s^l$ be the elementary symmetric polynomial of degree $s$ in these weights.
As in the rank reference of Section~\ref{sec:rank},
\begin{equation}\label{union:normalizer}
 \sum_{i\in R_l}\theta_s^l(i)\le\frac{Msa}{a-1},
 \qquad
 Z_s^l\le\left(\frac{eMa}{a-1}\right)^s.
\end{equation}

We define a subprobability law $Q_1$ on subsets of the input coordinates.
Independently choose each gate $l$ as a root with probability
$\lambda\pi_l/(1+\lambda\pi_l)$, and put the roots in a queue in a fixed order.
For each occurrence of a gate $l$ removed from the queue, perform these choices:
\begin{enumerate}[label=(\roman*)]
\item Choose $s\ge1$ with probability $1/[s(s+1)]$.
If $s>|R_l|$, the run fails; otherwise choose an $s$-subset $B\subseteq R_l$ with probability
$\prod_{i\in B}\theta_s^l(i)/Z_s^l$.
\item Choose $c\ge0$ with probability $1/[(c+1)(c+2)]$, and choose a multiset of $c$ elements of $B$ uniformly.
For each occurrence of a selected variable $j$, choose uniformly one of the at most $k$ gates containing $j$ and append it to the queue.
Process the multiset occurrences in a fixed coordinate order, with repeated occurrences ordered consecutively.
\end{enumerate}
If the queue becomes empty, output the union of all selected subsets.
Let $Q_1(A)$ be the probability of a successful finite output equal to $A$.
A run may fail or continue forever; its mass is then absent from $Q_1$.
Consequently $\sum_AQ_1(A)\le1$, which is all that Gibbs' inequality requires.
Repeated visits are treated as distinct occurrences in the random process.

\subsection{A witness run and its cost}

Fix $A\ne\varnothing$ with $\wh f(A)\ne0$, and put $\kappa_l=|A\cap R_l|$.
Call $l$ dense if $\kappa_l\ge b_l/\beta$.
For each $j\in A\cap R_l$, put
\[
 F_l^j(A)=\min_{r:\,j\in\supp(h_{l,r})}
          \ln\frac{r_{l,r}(j)}{|A\cap U_j^{l,r}|+1},
 \qquad
 G_j(l)=\frac a{\ln2}F_l^j(A)+b_l\mathbf1[l\text{ is not dense}].
\]
Every $j\in A$ belongs to some $R_l$.
Choose $\tau(j)$ minimizing $G_j(l)$, with a fixed tie rule, and set
\[
 A_l=\{j\in A:\tau(j)=l\},\qquad
 \mathcal F=\{l:A_l\ne\varnothing\},\qquad
 G_j=G_j(\tau(j)).
\]
The nonempty $A_l$ partition $A$.

Consider the directed graph on $\mathcal F$ having an edge $l\to l'$ when
$l\ne l'$ and $A_l\cap R_{l'}\ne\varnothing$.
Choose one representative from each source strongly connected component as a root.
Every vertex is reachable from these roots, so choose a directed spanning forest, and attach each nonroot $l'$ to its parent $l$ through an element of $A_l\cap R_{l'}$.
This gives a finite run producing $A$: prescribe these roots, choose $B_l=A_l$, and prescribe exactly the forest children.
Every gate in $\mathcal F$ is visited once.
If $c_l$ is its number of children, then
\begin{equation}\label{union:forestcount}
 |\mathrm{roots}|+\sum_{l\in\mathcal F}c_l=|\mathcal F|\le|A|.
\end{equation}

The sets $A\cap R_r$ for distinct roots $r$ are pairwise disjoint.
Indeed, if $j$ belonged to two such sets and $l=\tau(j)$, then $l$ has an edge to each root except when it equals that root.
Since the root components receive no edges from other components, $l$ would lie in both root components, a contradiction.
This includes the case where $l$ itself is one of the roots.

The root selection costs
\[
 \Lambda+\sum_{\mathrm{roots}\ r}(\log k+b_r),
 \qquad
 \Lambda=\sum_l\log(1+\lambda\pi_l)
 \le\frac{\log e}{k}\sum_l\pi_l.
\]
By the disjointness just proved, dense-root rarity costs sum to at most $\beta|A|$.
Each nondense root owns at least one variable, so its rarity cost can be paid from
$\sum_{j\in A}b_{\tau(j)}\mathbf1[\tau(j)\text{ not dense}]$.

For $i\in A_l$, set $s_l=|A_l|$ and choose an inner order attaining $F_l^i(A)$.
Because $|A\cap U_i^{l,r}|+1\le\kappa_l$,
\[
 \ln^+\frac{r_{l,r}(i)}{s_l}
 \le F_l^i(A)+\ln\frac{\kappa_l}{s_l}.
\]
The log-sum inequality and the read bound give
\begin{equation}\label{union:logsum}
 \sum_{l\in\mathcal F}s_l\log\frac{\kappa_l}{s_l}
 \le |A|\log\frac{\sum_{l\in\mathcal F}\kappa_l}{|A|}
 \le |A|\log k.
\end{equation}
Thus \eqref{union:normalizer} bounds the subset-selection cost by
\[
 |A|\log\frac{eMa}{a-1}
 +a|A|\log k+\frac a{\ln2}\sum_{j\in A}F_{\tau(j)}^j(A).
\]

The remaining choices have elementary costs.
The choices of $s_l$ cost at most $|A|\log3$, since $s(s+1)\le3^s$ for $s\ge1$.
The choices of $c_l$ cost at most $|A|(1+\log3)$, since $(c+1)(c+2)\le2\,3^c$.
There are $\binom{s+c-1}{c}\le2^{s+c}$ multisets, so their total cost is at most $2|A|$.
Finally, root and child gate labels together cost at most $|A|\log k$, by \eqref{union:forestcount}.
Putting
\begin{equation}\label{union:K0}
 K_0=\frac{120}{11}+3+2\log3+\log(6e),
\end{equation}
we conclude that
\begin{equation}\label{union:runbound}
 \log\frac1{Q_1(A)}
 \le \Lambda+|A|\left(K_0+\frac{11}{5}\log k+\log M\right)
               +\sum_{j\in A}G_j.
\end{equation}

\subsection{Averaging the rank and rarity costs}

For a coordinate $j$ with $\alpha_j=\Inf_j(f)>0$, let
$\nu_j(T)=\wh{d_jf}(T)^2/\alpha_j$.
If $j$ is pivotal for the union, it is pivotal for some component containing $j$.
Split $d_jf$ into disjoint signed pieces according to the least such component.
There are at most $k$ nonzero pieces.
Their normalized Fourier laws satisfy
\[
 \nu_j(T)\le\sum_l\nu_{j,l}(T)
\]
by weighted Cauchy--Schwarz, as in Lemma~\ref{lem:multi-rank}.

For a specified piece, a further split among the at most $M$ regular inner gates and Lemma~\ref{lem:ranktail} give
\[
 \nu_{j,l}(F_l^j>v)\le MB_*e^{-v/2}.
\]
Lemma~\ref{union:rare} gives
\[
 \nu_{j,l}(l\text{ is not dense})\le2^{3/5}2^{-5b_l/12}.
\]
Since $G_j\le G_j(l)$ and $1/(4a)=5/24$, splitting the rank and rarity costs at $u/2$ gives the same exponent for both terms. The union bound therefore yields
\begin{equation}\label{union:Gtail}
 \nu_j(G_j>u)
 \le\min\{1,\ k(MB_*+2^{3/5})2^{-5u/24}\}
 \le\min\{1,\ kM(B_*+2^{3/5})2^{-5u/24}\}.
\end{equation}
Here all the statistics are evaluated at $A=T\cup\{j\}$.
Integrating the tail, and setting
\begin{equation}\label{union:E0}
 E_0=\frac{24}{5}\left[\log(\sqrt{2e}+2^{3/5})+\log e\right],
\end{equation}
gives
\begin{equation}\label{union:Gmean}
 \E_{\nu_j}G_j\le E_{k,M}:=E_0+\frac{24}{5}\log k+\frac{24}{5}\log M.
\end{equation}

Put $v=\Var(f)$, and use the subprobability reference
$Q(\varnothing)=1-v$, $Q(A)=vQ_1(A)$ for $A\ne\varnothing$.
Gibbs' inequality gives
\[
 \HH(f)\le h_2(v)+
 \sum_{A\ne\varnothing}\wh f(A)^2\log\frac1{Q_1(A)}.
\]
The proof of Lemma~\ref{lem:levelentropy} gives $h_2(v)\le2\I(f)$.
Since $v\le4p_0$, Lemma~\ref{union:path} gives $v\Lambda\le(\log e)\I(f)$.
Finally,
$\wh f(A)^2=\alpha_j\nu_j(A\setminus\{j\})$ for $j\in A$.
Averaging \eqref{union:runbound} and applying \eqref{union:Gmean}, we obtain
\[
 \HH(f)\le
 \left[2+\log e+K_0+E_0+7\log k+\frac{29}{5}\log M\right]\I(f).
\]
The exact constant $2+\log e+K_0+E_0$ is $40.80487606\ldots<41$.
We record rational certificates for this rounding and for the strengthened constant below:
\begin{equation}\label{union:rounding}
 \log e<\frac{1443}{1000},\qquad
 \log3<\frac{317}{200},\qquad
 \log(\sqrt{2e}+2^{3/5})<\frac{243}{125}.
\end{equation}
The exponential series gives
$e<\sum_{j=0}^6 1/j!+1/4410=31967/11760<87/32$.
The first two inequalities follow from $87^{1000}<2^{6443}$ and $3^{200}<2^{317}$.
For the last one, the rational comparisons
\[
 \frac{2\cdot31967}{11760}
   <\left(\frac{2331645}{10^6}\right)^2,\qquad
 8<\left(\frac{1515717}{10^6}\right)^5
\]
give $\sqrt{2e}+2^{3/5}<1539/400$, and
$1539^{125}<2^{243}400^{125}$ completes the certificate.
Expanding the exact ordinary constant and applying \eqref{union:rounding} gives
\[
 \frac{120}{11}+6+3\log3+\frac{34}{5}\log e
       +\frac{24}{5}\log(\sqrt{2e}+2^{3/5})
 <\frac{2244423}{55000}<41.
\]
This proves \eqref{union:ordinary}.

\subsection{The strengthened entropy estimate}

We give the reduction to nonsingleton Fourier sets explicitly.
For an arbitrary Boolean function, put
\[
 P(A)=\wh f(A)^2,\quad
 W=\sum_{|A|\ge2}P(A),\quad
 W_1=\sum_iP(\{i\}),\quad
 J=\sum_{|A|\ge2}|A|P(A).
\]
Then $\I=W_1+J$, $\Var=W_1+W$, $W\le J/2$, and
\begin{equation}\label{union:Jderiv}
 J=\sum_i\left(\Inf_i(f)-\wh f(\{i\})^2\right)
   =\sum_i\Var(d_if).
\end{equation}

\begin{lemma}\label{union:smallmass}
For every Boolean function,
\[
 \sum_iP(\{i\})\log\frac1{P(\{i\})}\le2(\log e)J,
 \qquad
 W\log\frac1W\le8J.
\]
\end{lemma}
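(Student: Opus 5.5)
\emph{Singleton terms.} The first inequality reduces to a one-coordinate estimate, summed over $i$ using \eqref{union:Jderiv}. Put $c_i=\wh f(\{i\})=\E d_if$ and $\alpha_i=\Inf_i(f)=\E|d_if|$. As in the proof of Lemma~\ref{lem:centering}, $|c_i|\le\alpha_i\le1$. Hence
\[
 \Var(d_if)=\alpha_i-c_i^2\ge|c_i|-c_i^2=|c_i|(1-|c_i|).
\]
It therefore suffices to show, for $x=|c_i|\in[0,1]$,
\[
 x^2\log\frac1{x^2}=2(\log e)\,x\cdot x\ln\frac1x\le2(\log e)\,x(1-x).
\]
This follows from $x\ln(1/x)\le1-x$, which is $\ln y\le y-1$ at $y=1/x$. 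The endpoints $x\in\{0,1\}$ contribute zero. Summing over $i$ gives the first bound.

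\emph{The nonsingleton mass.} For the second inequality I would combine the trivial comparison $W\le J/2$ with a logarithmic gain when $W$ is small. If $W\ge2^{-16}$, then $W\log(1/W)\le16W\le8J$ and we are done. The remaining case is $W<2^{-16}$, where I must show $J\ge\frac18W\log(1/W)$.

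For this I would work with the derivatives. By \eqref{union:Jderiv}, $J=\sum_i\sum_{T\ne\varnothing}\wh{d_if}(T)^2$. Each $A$ with $|A|\ge2$ contributes $|A|\,\wh f(A)^2$ to this sum, once through each $i\in A$. Each $d_if$ takes values in $\{0,\pm1\}$, so its support has density $\alpha_i$. I would apply the hypercontractive estimate used in Lemma~\ref{union:rare}, $\|T_{\sqrt\rho}d\|_2^2\le\alpha^{2/(1+\rho)}$, to each $d_if$. This should show that a derivative whose mean carries almost all of its mass has a nonconstant part forced onto high levels, with weight comparable to $\alpha_i\log(1/\alpha_i)$ times its variance. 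Symmetrically, for the large-influence coordinates I would pass to a character multiple: if $\alpha_i>1/2$, the function $1-|d_if|$ has small density. Multiplying by $\chi_i$ swaps the roles of $\alpha_i$ and $1-\alpha_i$, as in Lemma~\ref{lem:centering}, while preserving $\wh f(A)^2$ for the sets $A$ outside $\{\varnothing,\{i\}\}$ and their images.

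The structural input I would aim for is an FKN-type dichotomy. When $W$ is tiny, $f$ is close to a constant or to $\pm\chi_i$. After the appropriate character multiplication, the derivative supports are then small sets of density $\lesssim W$. At that point the edge-isoperimetric inequality $\I\ge2p\log(1/p)$ from Lemma~\ref{lem:levelentropy}, applied to the near-constant character multiple, should yield the required logarithmic factor.

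The main obstacle is exactly this small-$W$ regime. The naive bound $W\le J/2$ loses the logarithm, so one must extract $\log(1/W)$ from isoperimetry or hypercontractivity. One must also check that the resulting lower bound is stated in terms of $J$, not $\I$: $J$ is not invariant under character multiplication, only the nonsingleton mass is, so care is needed when transferring the estimate back from $\chi_Uf$ to $f$. I would first try to prove $J\ge\frac18W\log(1/W)$ in the near-constant case via $\I-\Var$, and then handle the near-dictator case by reducing to it.
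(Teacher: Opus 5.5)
Your proof of the singleton inequality is correct and is the paper's argument. Your reduction of the second inequality to the case $W<2^{-16}$ via $W\le J/2$ is also valid. But that remaining case is the whole content of the second inequality, and there you give only a programme. The hypercontractive step, the FKN-type dichotomy and the transfer from a character multiple are described as what ``should'' happen, and no inequality between $W$ and $J$ is actually derived.

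The transfer step also rests on a false premise: the nonsingleton mass is not invariant under character multiplication. Since $\wh{\chi_Uf}(A)=\wh f(A\symdiff U)$, coefficients move between levels $1$ and $2$. For instance $f=\chi_{\{1,2\}}$ has $W=1$, while $\chi_{\{1\}}f=\chi_{\{2\}}$ has $W=0$. So reducing the near-dictator case to the near-constant case by multiplying by $\chi_i$ changes both $W$ and $J$, and you have not controlled either change.

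The paper needs neither FKN nor character multiplication; it splits on $p=\min\{\P(f=1),\P(f=-1)\}$. For $p\le1/4$ it uses $\Inf_i(f)\le2p$, hence $J\ge(1-2p)\I(f)$, together with $W\le\Var(f)\le4p$ and $\I(f)\ge2p\log(1/p)$. Your $\I-\Var$ hint also works in this regime: $J=\I(f)-\Var(f)+W\ge2p\log(1/p)-4p$ gives $W\log(1/W)\le2J$ whenever $4p\le1/e$.

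The missing piece is the near-dictator regime. For $p>1/4$ and small $J$, the bound $J\ge\sum_i\alpha_i(1-\alpha_i)$ together with $\I(f)\ge\Var(f)>3/4$ forces some $\alpha_{i_*}$ close to $1$. This is the elementary form of the dichotomy, driven by $J$ rather than by $W$. Orient so that $\wh f(\{i_*\})>0$ and write $f=A+x_{i_*}B$ with $B=d_{i_*}f$. Let $\eta$ be the density of $\{B\ne1\}$; this set includes the sign changes of $B$ as well as its zeros. Then $W\le\|A\|_2^2+\Var(B)\le4\eta$. The sets containing $i_*$ alone give $J\ge\sum_T|T|\wh D(T)^2$ for $D=1-B$, and edge isoperimetry bounds this below by $(\eta/2)\log(1/\eta)$. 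Since $4\eta<1/e$, this gives $W\log(1/W)\le4\eta\log(1/\eta)\le8J$. The point you were missing is that this part of $J(f)$ is already the spectral energy of the defect $D$, so nothing has to be transferred back from a character multiple.
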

\begin{proof}
For $b=|\wh f(\{i\})|$, we have
$b\le\Inf_i(f)$ and hence
$\Var(d_if)\ge b-b^2$.
Since $b^2\ln(1/b)\le b-b^2$, the first assertion follows.

For the second, let $p=\min\{\P(f=1),\P(f=-1)\}$.
The pivotal bound $\Inf_i(f)\le2p$ gives
\[
 W_1\le\sum_i\Inf_i(f)^2\le2p\,\I(f),
 \qquad J\ge(1-2p)\I(f).
\]
Also
\[
 J\ge\sum_i\Inf_i(f)(1-\Inf_i(f)),
 \quad W\le\Var(f)\le4p,
 \quad \I(f)\ge2p\log(1/p).
\]
If $p\le1/4$, then $J\ge p\log(1/p)$.
When $4p\le1/e$, monotonicity of $x\log(1/x)$ on $(0,1/e]$ gives
$W\log(1/W)\le4p\log(1/(4p))\le4J$.
When $4p>1/e$, we have $J\ge p>1/(4e)$, while
$x\log(1/x)\le(\log e)/e$; therefore $W\log(1/W)\le4(\log e)J<8J$.

It remains to consider $p>1/4$.
Set $\delta=1/48$.
If $J\ge\delta$ and $W\ge\delta/2$, then
$W\log(1/W)\le(J/2)\log96<4J$.
If $J\ge\delta$ and $W<\delta/2$, monotonicity instead gives
$W\log(1/W)\le(\delta/2)\log96<4\delta\le4J$.

Finally suppose $p>1/4$ and $J<\delta$.
Then $\I>3/4$ and $\I=W_1+J<1+\delta$.
Every influence $\alpha_i$ satisfies
$\alpha_i(1-\alpha_i)<\delta$, so $\alpha_i<2\delta$ or $\alpha_i>1-2\delta$.
The influences of the first kind sum to less than
$\delta/(1-2\delta)<3/4$.
Thus some coordinate $i_*$ has $\alpha_{i_*}>1-2\delta$.
Reverse this input if necessary so that
$b=\wh f(\{i_*\})>0$.
By \eqref{union:Jderiv}, $b^2>1-3\delta$, and hence $b>1-3\delta$.

Write $f=A+x_{i_*}B$, where $A=\E_{i_*}f$ and $B=d_{i_*}f$.
They take values in $\{0,\pm1\}$, and exactly one is zero at each input.
Put $D=1-B\in\{0,1,2\}$ and let $E=\{D\ne0\}$ have density $\eta$.
Then $\eta\le1-b<3\delta$ and
\[
 W\le\|A\|_2^2+\Var(B)
   \le\P[D=1]+\E D^2\le4\eta.
\]
If $\eta=0$, then $W=0$ and the assertion is immediate; assume henceforth that $\eta>0$.
Moreover,
\[
 J\ge\sum_{T\ne\varnothing}(|T|+1)\wh B(T)^2
   \ge\sum_T|T|\wh D(T)^2.
\]
Every edge leaving $E$ has squared derivative of $D$ at least $1/4$.
The edge-isoperimetric inequality therefore gives
$J\ge(\eta/2)\log(1/\eta)$.
Since $4\eta<12\delta=1/4<1/e$,
\[
 W\log(1/W)\le4\eta\log(1/(4\eta))
             \le4\eta\log(1/\eta)\le8J.
\]
\end{proof}

It follows that any subprobability reference $Q$ positive on all nonsingleton spectral sets satisfies
\begin{equation}\label{union:plusreduction}
 \HH^+(f)\le(8+2\log e)J+
       \sum_{|A|\ge2}P(A)\log\frac1{Q(A)}.
\end{equation}
Indeed, apply Gibbs' inequality to $P(A)/W$ on $|A|\ge2$, and use Lemma~\ref{union:smallmass} for its mass term and the singleton contribution.

Use $Q=Q_1$ from the queue construction.
The vacuum term admits the bound
\begin{equation}\label{union:strongvacuum}
 W\Lambda\le(\log e)J.
\end{equation}
To see this, use $W\le J/2$ to obtain
\[
 W\I=W(W_1+J)\le J(W_1/2+W)\le\Var(f)J.
\]
Then Lemma~\ref{union:path}, $\Lambda\le(\log e)\sum_l\pi_l/k$, and
$\Var(f)=4p_0(1-p_0)$ give
$W\Lambda\le(\log e)(1-p_0)J\le(\log e)J$.
The constant-output cases are immediate.

We next strengthen \eqref{union:Gmean} to
\begin{equation}\label{union:Gconditioned}
 \E_{\nu_j}[G_j\mathbf1[T\ne\varnothing]]
 \le2E_{k,M}\,\nu_j(T\ne\varnothing).
\end{equation}
If $\nu_j(T\ne\varnothing)\ge1/2$, this follows directly from \eqref{union:Gmean}.
Otherwise
\[
 \Inf_j(f)\ge
 \frac{\wh f(\{j\})^2}{\Inf_j(f)}
 =\nu_j(\varnothing)>\frac12.
\]
Since at most $k$ outer gates contain $j$, some $h_l$ has
$\Inf_j(h_l)>1/(2k)$.
Its true probability exceeds $1/(4k)$, and some regular inner gate has
$\Inf_j(h_{l,r})>1/(2kM)$.
For a regular function, influences do not increase along a desirability order, and their squares sum to at most $1$ by the unate Chow identity.
Thus $r_{l,r}(j)<4k^2M^2$, and pointwise
\[
 G_j\le G_j(l)
 <a\log(4k^2M^2)+\log(4k)
 =\frac{22}{5}+\frac{17}{5}\log k+\frac{12}{5}\log M
 \le2E_{k,M}.
\]
The last inequality follows from \eqref{union:E0}: $2E_0>10>22/5$, and the coefficients $48/5$ of both logarithms in $2E_{k,M}$ exceed $17/5$ and $12/5$.
This proves \eqref{union:Gconditioned}.

Average \eqref{union:runbound} only over $|A|\ge2$.
The set-size term is now $J$, and \eqref{union:Jderiv} and \eqref{union:Gconditioned} bound the rank and rarity term by $2E_{k,M}J$.
Together with \eqref{union:plusreduction} and \eqref{union:strongvacuum}, this gives
\[
 \HH^+(f)\le
 \left[8+3\log e+K_0+2E_0
       +\frac{59}{5}\log k+\frac{53}{5}\log M\right]J.
\]
The exact constant is $65.94577371\ldots<66$.
The same rational bounds \eqref{union:rounding} certify this:
\[
 \frac{120}{11}+12+3\log3+\frac{68}{5}\log e
       +\frac{48}{5}\log(\sqrt{2e}+2^{3/5})
 <\frac{3627321}{55000}<66.
\]
This proves \eqref{union:strong}.

Finally, for chain partitions of width at most $w$, use rank and predecessors within each chain.
Only \eqref{union:normalizer} changes: its upper bound acquires a factor $w$.
The rank tail remains unchanged, and the rank estimate in the last conditional argument still holds within the chosen chain.
Thus \eqref{union:runbound} gains $|A|\log w$.
After averaging, this adds $\log w$ to both theorem coefficients, as claimed.

\subsection{A bound retaining the joint witness cost}

The queue construction permits an estimate depending on the gates used
to describe the whole Fourier set. Let $d(j)$ be the number of outer
gates containing coordinate $j$, so $d(j)\le k$.
Allow the root odds to be $\lambda\pi_l$ for an arbitrary $\lambda>0$,
and put
\[
 \Lambda_\lambda=\sum_l\log(1+\lambda\pi_l).
\]
Denote the resulting subprobability law by $Q_{1,\lambda}$.
All other choices in the queue construction remain as defined above.

Fix a nonempty Fourier set $A$ with $\wh f(A)\ne0$. An assignment chooses, for every
$j\in A$, one gate $\tau(j)$ containing $j$. Put
\[
 A_l=\{j\in A:\tau(j)=l\},\quad s_l=|A_l|,\quad
 \mathcal F=\{l:s_l>0\},\quad \kappa_l=|A\cap R_l|.
\]
Take a directed spanning forest in the incidence graph on $\mathcal F$,
whose roots are exactly one representative from each source strongly connected component.
For each forest edge $e:l\to l'$, choose an attachment
$j_e\in A_l\cap R_{l'}$. Write $\mathcal R$ for the roots and $c_l$
for the number of children of $l$.
Define the finite witness cost
\begin{align*}
 \mathcal C_\lambda(A;\tau,\mathcal T)
  ={}&\sum_{r\in\mathcal R}\log\frac1{\lambda\pi_r}
       +\sum_{e\in E(\mathcal T)}\log d(j_e)\\
    &+\sum_{l\in\mathcal F}\left[
       \log\bigl(s_l(s_l+1)\bigr)+\log Z_{s_l}^l
       -\sum_{j\in A_l}\log\theta_{s_l}^l(j)\right]\\
    &+\sum_{l\in\mathcal F}\left[
       \log\bigl((c_l+1)(c_l+2)\bigr)
       +\log\binom{s_l+c_l-1}{c_l}\right].
\end{align*}
The notation $\mathcal T$ includes the attachment choices and an ordering
of children sharing an attachment coordinate.

\begin{proposition}[Variational witness bound]\label{prop:joint-witness}
For every such assignment and forest,
\[
 \log\frac1{Q_{1,\lambda}(A)}
 \le\Lambda_\lambda+\mathcal C_\lambda(A;\tau,\mathcal T).
\]
Consequently, for $v=\operatorname{Var}(f)$,
\[
 \HH(f)\le h_2(v)+v\Lambda_\lambda+
 \sum_{\substack{A\ne\varnothing\\\wh f(A)\ne0}}\wh f(A)^2
       \min_{\tau,\mathcal T}\mathcal C_\lambda(A;\tau,\mathcal T).
\]
\end{proposition}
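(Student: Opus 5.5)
The first inequality comes from exhibiting one finite run of the queue process that outputs exactly $A$, and multiplying the probabilities of its choices. Fix $\tau$ and $\mathcal T$ (forest, attachments, and ordering of children sharing an attachment coordinate). The run is prescribed as follows.

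\emph{Root selection.} The gates in $\mathcal R$ are chosen as roots and all other gates are not. Gate $l$ is a root with probability $\lambda\pi_l/(1+\lambda\pi_l)$ and a nonroot with probability $1/(1+\lambda\pi_l)$. The product over all gates is therefore
\[
 \prod_l(1+\lambda\pi_l)^{-1}\prod_{r\in\mathcal R}\lambda\pi_r,
\]
whose $-\log$ is $\Lambda_\lambda+\sum_{r\in\mathcal R}\log(1/(\lambda\pi_r))$.

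\emph{Processing $l\in\mathcal F$.} When $l$ leaves the queue, the run chooses $s=s_l$, at cost $\log(s_l(s_l+1))$. Since $A_l\subseteq A\cap R_l$ gives $s_l\le|R_l|$, the run does not fail here. It then chooses $B=A_l$, with probability $\prod_{j\in A_l}\theta^l_{s_l}(j)/Z^l_{s_l}$; this is positive because every $\theta>0$ on $R_l$. Next it chooses $c=c_l$, at cost $\log((c_l+1)(c_l+2))$. It chooses the multiset of attachment coordinates $\{j_e:e\text{ a child edge of }l\}$, which lie in $A_l=B$. The uniform multiset choice has probability $\binom{s_l+c_l-1}{c_l}^{-1}$. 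For each occurrence of $j_e$ it chooses the gate $l'$ at the head of $e$, with probability $1/d(j_e)$. Repeated occurrences of one coordinate are processed consecutively, so the prescribed ordering of children sharing that coordinate in $\mathcal T$ determines which child is matched to which occurrence.

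Each gate of $\mathcal F$ enters the queue exactly once: the roots at the start, and every nonroot once through its unique parent edge. The run therefore terminates after $|\mathcal F|$ steps. Its output is $\bigcup_l A_l=A$, because the $A_l$ partition $A$. Every choice in the run is prescribed, so its probability is the product of the factors above. Hence $Q_{1,\lambda}(A)$ is at least this product, and taking $-\log$ gives $\Lambda_\lambda+\mathcal C_\lambda(A;\tau,\mathcal T)$. The only delicate point is well-definedness: the forest must exist and the attachments must be available. Every vertex of $\mathcal F$ is reachable from a source strongly connected component, so a directed spanning forest rooted at one representative of each source component exists. Each forest edge $l\to l'$ has $A_l\cap R_{l'}\ne\varnothing$ by definition of the incidence graph, so an attachment $j_e$ exists. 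I also need the process to allow a gate already processed, or the same gate twice, without conflict; the construction treats repeated visits as distinct occurrences, so this causes no conflict.

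For the entropy bound, I use the subprobability reference $Q(\varnothing)=1-v$ and $Q(A)=vQ_{1,\lambda}(A)$ for $A\ne\varnothing$. It has total mass at most $1$, and it is positive on the spectral support by the first part. Gibbs' inequality \eqref{eq:gibbs} gives
\[
 \HH(f)\le(1-v)\log\frac1{1-v}+\sum_{A\ne\varnothing}\wh f(A)^2\Bigl(\log\frac1v+\log\frac1{Q_{1,\lambda}(A)}\Bigr).
\]
The first two terms combine to $h_2(v)$, because $\sum_{A\ne\varnothing}\wh f(A)^2=v$. I then insert the first part, minimized over $(\tau,\mathcal T)$ for each $A$. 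The $\Lambda_\lambda$ term sums to $v\Lambda_\lambda$. If $v=0$, both sides vanish trivially.
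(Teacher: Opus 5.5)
Your proposal is correct and follows the paper's proof. You prescribe the same witness run (root set $\mathcal R$, subsets $B=A_l$, forest children with their attachments), read off its probability as $2^{-\Lambda_\lambda-\mathcal C_\lambda(A;\tau,\mathcal T)}$, apply Gibbs' inequality to $Q(\varnothing)=1-v$, $Q(A)=vQ_{1,\lambda}(A)$, and minimize over witnesses because each one separately lower-bounds $Q_{1,\lambda}(A)$. Your extra checks (no failure since $s_l\le|R_l|$, positivity of the weights $\theta^l_s$, existence of the forest and attachments, the matching of repeated attachment occurrences, and, implicitly, existence of some assignment $\tau$ because $\wh f(A)\ne0$ forces every $j\in A$ into some $R_l$) are details the paper leaves implicit.
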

\begin{proof}
Prescribe the root set, the subset $A_l$ at each visited gate, and exactly
the forest children with their attachments. This is a successful finite
run and visits every gate in $\mathcal F$ once.
The root-set probability is
$2^{-\Lambda_\lambda}\prod_{r\in\mathcal R}\lambda\pi_r$.
The remaining factors in the run probability are precisely the subset,
size, child-count, multiset and gate-label probabilities defining
$\mathcal C_\lambda$. Thus the probability of this run is
$2^{-\Lambda_\lambda-\mathcal C_\lambda}$, proving the first bound.
Apply Gibbs' inequality to
$Q(\varnothing)=1-v$ and
$Q(A)=vQ_{1,\lambda}(A)$ for $A\ne\varnothing$.
Minimization is valid because each finite witness separately gives a
lower bound for the same reference mass $Q_{1,\lambda}(A)$.
\end{proof}

The variational minimum must be taken over descriptions of $A$ as a
whole. Minimizing one coordinate at a time need not minimize it.

For comparison with the previous estimates, define
\[
 L(A,\tau)=\sum_{l\in\mathcal F}s_l\log\frac{\kappa_l}{s_l},
 \qquad
 B_{\rm sparse}(A,\mathcal T)
 =\sum_{\substack{r\in\mathcal R\\\kappa_r<b_r/\beta}}b_r.
\]
At $\lambda=1/k$, the same witness gives the explicit refinement
\begin{align*}
 \log\frac1{Q_{1,1/k}(A)}
 \le{}&\Lambda_{1/k}
 +\left(\beta+1+\log3+\log(6eM)\right)|A|\\
 &+\left(2+\log3+\log k\right)|\mathcal F|
   -(1+\log3)|\mathcal R|\\
 &+\frac65L(A,\tau)
   +\frac{6}{5\ln2}\sum_{l\in\mathcal F}\sum_{j\in A_l}F_l^j(A)
   +B_{\rm sparse}(A,\mathcal T).
\end{align*}
Indeed, the size choices cost at most $|A|\log3$.
The child-count choices cost at most
$|\mathcal F|+(|\mathcal F|-|\mathcal R|)\log3$,
and the multiset choices cost at most
$|A|+|\mathcal F|-|\mathcal R|$.
For the subset choices use
\[
 \log Z_{s_l}^l\le s_l\log(6eM),\qquad
 -\sum_{j\in A_l}\log\theta_{s_l}^l(j)
 \le\frac{6}{5\ln2}\sum_{j\in A_l}F_l^j(A)
       +\frac65s_l\log\frac{\kappa_l}{s_l}.
\]
The roots have pairwise disjoint sets $A\cap R_r$, by the source-component
argument above. Their dense rarity costs therefore sum to at most
$\beta|A|$, leaving only $B_{\rm sparse}$.
Finally, root and child gate labels cost at most
$|\mathcal F|\log k$.

Replacing $|\mathcal F|$ by $|A|$, replacing $L(A,\tau)$ by
$|A|\log k$, discarding the negative root term and charging each sparse
root to a coordinate it owns recovers the earlier bound. Retaining these
quantities can be substantially sharper.

\subsection{The addressing benchmark in the same queue construction}

Let $m\ge1$ be the number of address bits and $N=2^m$ the number of
data bits. Write the addressing function as the union of $N$ conjunctions
$h_u$, one for each address $u$, where $h_u$ is true precisely when the
address equals $u$ and the selected data bit is $1$.
Each gate has $d=m+1$ relevant coordinates, true probability $2^{-d}$,
and one regular inner gate. The overlap bound is $k=N$.

Every nonzero Fourier set has the form $A=T\cup\{y_u\}$, with $T$ a
subset of the address coordinates. It can be assigned entirely to
$h_u$. Thus $|\mathcal F|=|\mathcal R|=1$, there are no children,
and $L(A,\tau)=0$.
Put $s=|A|$. Since all ranks are at most $d$, the exact witness bound,
now using the permitted choice $\lambda=1$, yields
\[
 \log\frac1{Q_{1,1}(A)}
 \le\Lambda_1+d+\log(s(s+1))+s\log(6e)
       +\frac65s\log\frac ds+1.
\]
Here
$\Lambda_1=N\log(1+2^{-d})\le(\log e)/2$.
Under the addressing spectral law,
$s=1+\operatorname{Bin}(m,1/2)$, so
$\E s=\I(f)=1+m/2$ and $d+1=2\I(f)$.
Concavity of $s\log(d/s)$ gives
\[
 \mathbb E\left[s\log\frac ds\right]
 \le\I(f)\log\frac d{\I(f)}\le\I(f).
\]
Since $s(s+1)\le3^s$, Gibbs' inequality gives
\[
 \HH(f)\le\frac{\log e}{2}
       +\left(\frac{16}{5}+\log(18e)\right)\I(f)
       <10\,\I(f).
\]
This recovers a constant independent of the number of addresses within
the original queue method. The exact selector calculation gives the
stronger identities $\HH(f)=2m$ and $\I(f)=1+m/2$.

\subsection{Scope of the refinement}

The variational inequality supplies a reference measure and an exact
cost for each finite witness. To obtain full FEI from it, one would still
need a uniform influence bound for the expected joint witness cost,
including its rank, sparse-root and normalization terms.
The addressing estimate proves such a bound for that family.
It does not give one for arbitrary Boolean functions.
The two-cell selector reduction in Section~\ref{sec:selectors} shows that even unrestricted
selector gates already retain the full FEI problem.

\section{Threshold gates under arbitrary product measures}
\label{bias:section}

A symmetric threshold on bits \(y_1,\ldots,y_r\) is
\(T_{t,r}(y)=\mathbf1[\sum_{i=1}^r y_i\ge t]\), \(1\le t\le r\).
We prove product-measure FEI+ for these gates with a constant independent
of the arity and every coordinate bias. This supplies the gate estimate
needed for compositions whose intermediate outputs need not be balanced.

Throughout this section set
\[
 L_0=3+\ln8,\qquad
 A=23+\ln16,\qquad C_T=\frac{20A}{\ln2}<744.
\]
We use natural logarithms for \(\mathcal E_\mu\), and write
\(\HH_\mu^+=\mathcal E_\mu/\ln2\) for its value in bits.

\subsection{Normalization and basic identities}\label{bias:normalization}

Work on \(y_i\in\{0,1\}\), independently with \(0<q_i<1\), and put \[
 \sigma_i=2\sqrt{q_i(1-q_i)},\qquad
 \phi_i(y_i)=\frac{2y_i-1-(2q_i-1)}{\sigma_i}.
\] Then \(\mathbb E\phi_i=0\), \(\mathbb E\phi_i^2=1\), and \(\mathbb E|\phi_i|=\sigma_i\). The products \(\phi_S=\prod_{i\in S}\phi_i\) form an orthonormal basis. For Boolean \(f\), let \[
 P(S)=\widehat f_\mu(S)^2,\quad
 \I_\mu(f)=\sum_S|S|P(S),\quad
 V_\mu(f)=\sum_{S\ne\varnothing}P(S).
\] Define \[
 \mathcal E_\mu(f)=
 \sum_{\varnothing\ne S}P(S)\ln\frac{\sigma_S^2}{P(S)},
 \qquad \sigma_S^2=\prod_{i\in S}\sigma_i^2,\qquad
 \HH_\mu^+(f)=\frac{\mathcal E_\mu(f)}{\ln2}.
\] The quantity \(\Delta_\mu(f)=\I_\mu(f)-V_\mu(f)\) will be called the spectral deficit.

Parseval gives \(\sum_SP(S)=1\). In particular \[
 \Delta_\mu(f)=\sum_{|S|\ge2}(|S|-1)P(S)\ge0.
\] For every \(S\), \[
 |\widehat f_\mu(S)|
   =|\mathbb E f\phi_S|
   \le\mathbb E|\phi_S|=\sigma_S.
\] Consequently each summand in \(\mathcal E_\mu(f)\) is nonnegative.

For \(f=2T-1\), let \(d_i=T(y_i=1)-T(y_i=0)\). Integrating one coordinate gives \[
 \widehat f_\mu(B\cup\{i\})=\sigma_i\widehat d_i(B)
 \quad (i\notin B).
\] For an increasing threshold \(d_i\) is the indicator of a pivotal slice. With \(\rho_i=\mathbb E d_i\), \[
 \I_i=\sum_{S\ni i}P(S)=\sigma_i^2\rho_i,\qquad
 P(\{i\})=\sigma_i^2\rho_i^2.
\] It follows that \[
 J_i:=\sum_{\substack{S\ni i\\|S|\ge2}}P(S)
       =\sigma_i^2\rho_i(1-\rho_i),\qquad
 J=\sum_iJ_i.
\] Since \(k\le2(k-1)\) for \(k\ge2\), \[
 J\le2\Delta_\mu(f).
\]

Under a biased measure, the corresponding bound for uncorrected Shannon entropy fails even for a dictator. For the dictator \(f=2y_1-1\), it is \(h_2(\sigma_1^2)\), while \(\I_\mu=\sigma_1^2\); their ratio diverges as \(q_1\to0\). The corrected entropy \(\HH_\mu^+\) and the deficit \(\Delta_\mu\) are both zero for this dictator.

Deterministic coordinates can be deleted. All boundary cases with \(q_i\in\{0,1\}\) follow by restriction or continuity.

\subsection{Two probability estimates}\label{bias:probability}

\begin{lemma}[Concentration after resampling]\label{bias:concentration}

Let \(N=\sum_i y_i\), with independent Bernoulli probabilities \(q_i\), and \(L=\sum_i\min(q_i,1-q_i)>0\). Then \begin{equation}\label{bias:eq-3-1}
\sup_k\Pr(N=k)\le \frac1{\sqrt{2L}}.
\end{equation}

More generally, fix the original configuration, and refresh coordinate \(i\) independently with probability \(\eta\). A refreshed Bernoulli coordinate can be represented as a fair bit with probability \(2\min(q_i,1-q_i)\), and as its more likely deterministic value otherwise. Conditional on the refresh indicators, the fair-activation indicators, and the values of the nonfair coordinates, the sum is a fixed integer plus a \(\operatorname{Bin}(F,1/2)\) variable, where the independent activation count has mean \[
 m_F=2\eta\sum_i\min(q_i,1-q_i).
\] Uniformly in the original configuration, \begin{equation}\label{bias:eq-3-2}
\sup_k\Pr(N_{\rm refreshed}=k\mid y)
    \le \frac1{\sqrt{m_F}}.
\end{equation}

\end{lemma}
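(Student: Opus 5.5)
The plan is to reduce both bounds to a single anticoncentration estimate for a symmetric binomial: if $F\ge0$ is an integer, then every point mass of $\operatorname{Bin}(F,1/2)$ is at most $c_F=2^{-F}\binom{F}{\lfloor F/2\rfloor}\le(F+1)^{-1/2}$. This is exactly the density bound of Lemma~\ref{lem:antichain-density}, since a level of the cube is an antichain. We then average this bound over the random number of fair bits, using the mixture representation stated in the lemma. The first step is to justify that representation. Take a Bernoulli$(q)$ coordinate with $q\le1/2$. We realize it as follows: with probability $2q$ we draw a fair bit, and with probability $1-2q$ we output $0$. Symmetrically, when $q>1/2$ we use probability $2(1-q)$ for the fair bit and output the deterministic value $1$ otherwise. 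We couple every coordinate in this way, independently of the others.

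For \eqref{bias:eq-3-1}, we condition on the activation indicators and on the deterministic values. Then $N$ equals a fixed integer plus $\operatorname{Bin}(F,1/2)$, where $F=\sum_i\xi_i$ and the $\xi_i$ are independent Bernoulli variables with $\P(\xi_i=1)=2\min(q_i,1-q_i)$. Hence $\sup_k\Pr(N=k)\le\E(F+1)^{-1/2}$. By Jensen's inequality, this is at most $(\E[1/(F+1)])^{1/2}$. For a sum of independent Bernoulli variables we use the generating-function identity
\[
\E\frac1{F+1}=\int_0^1\prod_i(1-p_i+p_is)\,ds\le\int_0^1e^{-(1-s)\sum p_i}\,ds\le\frac1{\sum_ip_i},
\]
which is the same computation as in Lemma~\ref{lem:antichain-noise}. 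Here $\sum_ip_i=2L$, so the bound becomes $(2L)^{-1/2}$.

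For \eqref{bias:eq-3-2}, we fix $y$. Each coordinate is independently refreshed with probability $\eta$, and a refreshed coordinate is fair-activated with probability $2\min(q_i,1-q_i)$. The unrefreshed coordinates and the deterministic refreshed values contribute a fixed integer once we condition on the indicators. Thus the conditional law is a shift of $\operatorname{Bin}(F,1/2)$, where $F$ is a sum of independent Bernoulli variables with parameters $p_i=2\eta\min(q_i,1-q_i)$, independent of $y$. The same chain of inequalities gives $\E(F+1)^{-1/2}\le m_F^{-1/2}$, uniformly in $y$. Since a sup over $k$ of an average is at most the average of the sups, the conditional point masses inherit the bound.

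The only delicate point is making sure that the fair bit and the deterministic value really reproduce the correct marginal. Under the coupling the probability of a $1$ is $2q\cdot\tfrac12=q$, as required, so the representation is valid. We also need the degenerate cases. When $\sum p_i=0$ the claimed bound is infinite, and the lemma assumes $L>0$ anyway. The inequality $1-p+ps\le e^{-p(1-s)}$ used above is elementary, so the main thing to get right is the Jensen step $\E(F+1)^{-1/2}\le(\E(F+1)^{-1})^{1/2}$.
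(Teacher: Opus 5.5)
Your proposal is correct and matches the paper's proof essentially step for step. Both use the same chain: the binomial point-mass bound from Lemma~\ref{lem:antichain-density}, the integral identity for $\E[1/(F+1)]$ with the exponential comparison, Jensen (the paper calls it Cauchy--Schwarz), and averaging over the conditioning. The only difference is that the paper proves \eqref{bias:eq-3-2} and obtains \eqref{bias:eq-3-1} as the case $\eta=1$, whereas you treat the two bounds separately.
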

\begin{proof}

By Lemma~\ref{lem:antichain-density}, \[
 \max_k\Pr(\operatorname{Bin}(m,1/2)=k)\le(m+1)^{-1/2}
 \qquad(m\ge0).
\]

If \(F\) is a sum of independent Bernoulli variables with mean \(m_F\), then \[
 \mathbb E\frac1{F+1}
 =\int_0^1\prod_i(1-p_i+p_i s)\,ds
 \le\int_0^1e^{-m_F(1-s)}\,ds\le\frac1{m_F}.
\] Cauchy--Schwarz therefore bounds \(\mathbb E(F+1)^{-1/2}\le m_F^{-1/2}\). Averaging the conditional binomial bound gives \eqref{bias:eq-3-2}. Taking \(\eta=1\), so \(m_F=2L\), gives \eqref{bias:eq-3-1}.

\end{proof}

\begin{lemma}[A small upper tail lies above the mean]\label{bias:tail-above-mean}

Let \(N\) be a sum of independent Bernoulli variables. If \(\Pr(N\ge t)<1/20\), where \(t\) is an integer, then \(t>\mathbb EN\).

\end{lemma}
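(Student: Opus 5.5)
We argue by contraposition: assume $t\le \mathbb EN=:\mu$ and show $\Pr(N\ge t)\ge 1/20$. Since $t$ is an integer, $t\le\lfloor\mu\rfloor$, and $\Pr(N\ge t)$ is nonincreasing in $t$. It therefore suffices to prove the single estimate
\[
 \Pr(N\ge\lfloor\mu\rfloor)=\Pr(N>\mu-1)\ge\tfrac1{20}
\]
for every sum of independent Bernoulli variables. If $\mu<1$ the left side is $\Pr(N\ge0)=1$, so we may assume $\mu\ge1$.

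The cleanest route is through a universal small-deviation inequality for sums of independent nonnegative variables. We pass to the complementary count $N'=n-N=\sum_i(1-y_i)$. It is a sum of independent $\{0,1\}$-variables, each with mean at most $1$, and its mean is $\mu'=n-\mu$. The event $N>\mu-1$ is exactly $N'<\mu'+1$. Feige's inequality states that for such sums and every $\delta>0$,
\[
 \Pr(X<\mathbb EX+\delta)\ge\min\Bigl\{\frac{\delta}{1+\delta},\ \frac1{13}\Bigr\}.
\]
Applied with $\delta=1$, it gives $\Pr(N'<\mu'+1)\ge\min\{1/2,1/13\}=1/13>1/20$, which proves the lemma. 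Alternatively, the Jogdeo--Samuels theorem (every median of a Poisson-binomial law lies in $\{\lfloor\mu\rfloor,\lceil\mu\rceil\}$) gives the stronger bound $\Pr(N\ge\lfloor\mu\rfloor)\ge1/2$ directly. The generous constant $1/20$ suggests the author had a softer argument in mind, so I would also try a self-contained proof in the paper's own toolkit.

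For the self-contained version, put $s^2=\operatorname{Var}N=\sum_iq_i(1-q_i)$ and split into two cases at a fixed threshold $s^2\ge c$.

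\emph{Large variance.} Write $\Pr(N>\mu-1)\ge\Pr(N\ge\mu)$ and lower-bound the right side using a quantitative central limit bound. A Berry--Esseen estimate with third moments at most $s^2$ gives $\Pr(N\ge\mu)\ge 1/2-C_{\mathrm{BE}}/s$, which exceeds $1/20$ once $s$ is a moderate constant.

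\emph{Small variance.} Here $\sum_i\min(q_i,1-q_i)\le 2s^2$ is bounded. Write $N=N_{\mathrm{hi}}+N_{\mathrm{lo}}$, separating coordinates with $q_i\ge1/2$ from those with $q_i<1/2$. Then $N\ge\lfloor\mu\rfloor$ holds as soon as all the high coordinates equal $1$, since
\[
 n_{\mathrm{hi}}\ge\sum_{q_i\ge 1/2}q_i\ge\mu-\sum_{q_i<1/2}q_i .
\]
That event has probability at least $\prod(1-\varepsilon_i)\ge e^{-2\sum\varepsilon_i}$, where $\varepsilon_i=1-q_i\le 1/2$. This is bounded below by a constant, but by a weak one. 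Getting a clean $1/20$ requires tuning the threshold $c$ and possibly giving the lower coordinates some credit as well.

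The main obstacle is this constant bookkeeping in the self-contained route: making the Berry--Esseen regime and the bounded-variance regime meet at a threshold where both give at least $1/20$. I expect this to be fiddly rather than conceptual. If it resists, I would fall back on citing Feige's inequality as above, which settles the lemma in one line after the complementation.
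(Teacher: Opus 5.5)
Your primary argument is correct, and it takes a different route from the paper's. The reduction to $\Pr(N\ge\lfloor\mu\rfloor)\ge1/20$ is right. So is the complementation $N'=n-N$, whose summands lie in $\{0,1\}$ and have means at most $1$. Feige's inequality at $\delta=1$ then gives $1/13>1/20$, and the Jogdeo--Samuels median theorem would give $1/2$. Your route is short and has a better constant, but it imports a deep theorem; Feige's proof is long and delicate.

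The paper stays elementary. Put $X=N-\mu$ and $v=\operatorname{Var}N$. Independence gives $\mathbb EX^4\le v+3v^2$. H\"older then gives $\mathbb E|X|\ge v/\sqrt{1+3v}$. Cauchy--Schwarz, with $\mathbb EX_+=\tfrac12\mathbb E|X|$ and $\mathbb EX_+^2\le v$, gives $\Pr(X>0)\ge v/(4(1+3v))$. This is at least $1/20$ precisely when $v\ge1/2$. For $v<1/2$, the hypothesis $t\le\mu$ forces $\{N<t\}\subseteq\{|X|\ge1\}$, so Chebyshev gives $\Pr(N\ge t)\ge1-v>1/2$. This is where the constant $1/20$ comes from. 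The same fourth-moment and H\"older estimate is reused in Lemma~\ref{bias:credit} to obtain \eqref{bias:eq-4-7}, which citing Feige would not supply.

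Your self-contained fallback has a genuine gap, not just constant bookkeeping. In the small-variance case, the inequality $n_{\mathrm{hi}}\ge\mu-\sum_{q_i<1/2}q_i$ yields $n_{\mathrm{hi}}\ge\lfloor\mu\rfloor$ only if $\sum_{q_i<1/2}q_i<1$. Your bound $\sum_i\min(q_i,1-q_i)\le2s^2$ guarantees this only for $s^2<1/2$. Berry--Esseen, even with the best constant $C_{\mathrm{BE}}\approx0.56$, needs $s^2\gtrsim1.55$ before $1/2-C_{\mathrm{BE}}/s\ge1/20$. So no choice of the threshold $c$ covers the window $1/2\le s^2\lesssim1.55$. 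The tool that bridges it is the non-asymptotic fourth-moment bound $\Pr(X>0)\ge v/(4(1+3v))$ above. It reaches exactly down to $v=1/2$, where your product bound $e^{-4s^2}>e^{-2}$, or Chebyshev, takes over.
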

\begin{proof}

Put \(m=\mathbb EN\), \(v=\operatorname{Var}N\), \(X=N-m\). Independence gives \[
 \mathbb EX^4\le v+3v^2.
\] Hölder and Cauchy--Schwarz give, for \(v>0\), \[
 \mathbb E|X|\ge\frac{v}{\sqrt{1+3v}},\qquad
 \Pr(X>0)\ge
 \frac{(\mathbb EX_+)^2}{\mathbb EX_+^2}
 \ge\frac{v}{4(1+3v)}.
\] If \(t\le m\) and \(v\ge1/2\), this last expression is at least \(1/20\), and \(\{X>0\}\subseteq\{N\ge t\}\). If \(v<1/2\), then \(\{N<t\}\subseteq\{|X|\ge1\}\), so Chebyshev gives \(\Pr(N\ge t)\ge1-v>1/2\). The deterministic case is immediate.

\end{proof}

\subsection{A weak entropy inequality}\label{bias:weak-section}

Let \(f=2T_{t,r}-1\), \(1\le t\le r\), and write \[
 \Lambda=\sum_i\sigma_i^2,\quad
 N=\sum_i y_i,\quad
 \rho_i=\Pr(N-y_i=t-1),\quad \I_i=\sigma_i^2\rho_i,
 \quad \I=\sum_iI_i,\quad V=\operatorname{Var}_\mu f.
\]

\begin{lemma}[A logarithmic spectral-rank estimate]\label{bias:rank}

With \(P(S)=\widehat f_\mu(S)^2\), \begin{equation}\label{bias:eq-4-1}
\sum_{S\ni i}P(S)\ln^+\frac{\Lambda}{|S|}
       \le L_0 \I_i,\qquad
 L_0=3+\ln8.
\end{equation}

\end{lemma}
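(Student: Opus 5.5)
We reduce the statement to a noise-stability bound for the pivotal indicator $d_i$, in the spirit of Lemma~\ref{lem:ranktail}. By the identity $\widehat f_\mu(B\cup\{i\})=\sigma_i\widehat d_i(B)$,
\[
 \sum_{S\ni i}P(S)\ln^+\frac{\Lambda}{|S|}
 =\sigma_i^2\sum_B\widehat d_i(B)^2\ln^+\frac{\Lambda}{|B|+1}.
\]
Since $\I_i=\sigma_i^2\rho_i=\sigma_i^2\|d_i\|_2^2$, it suffices to prove the following. Let $\nu(B)=\widehat d_i(B)^2/\rho_i$ (assume $\rho_i>0$) and $\kappa=|B|$. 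Then we need
\[
 \E_\nu\ln^+\bigl(\Lambda/(\kappa+1)\bigr)\le L_0.
\]
The plan is to control the lower tail of $\kappa$ under $\nu$ via the Fourier formula for product-measure noise. Refreshing every coordinate independently with probability $\eta$ gives
\[
 \E[d_i(y)d_i(y')]=\sum_B\widehat d_i(B)^2(1-\eta)^{|B|}.
\]

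\emph{Step 1 (relative noise bound).} The function $d_i$ is the indicator of $\{N-y_i=t-1\}$, which depends only on the sum over coordinates $j\ne i$. Conditioning on $y$ with $d_i(y)=1$, the refreshed sum satisfies $\Pr(d_i(y')=1\mid y)\le m_F^{-1/2}$ by \eqref{bias:eq-3-2}. Here
\[
 m_F=2\eta\sum_{j\ne i}\min(q_j,1-q_j)\ge\eta(\Lambda-\sigma_i^2)/2,
\]
because $\sigma_j^2=4q_j(1-q_j)\le4\min(q_j,1-q_j)$. Since $\sigma_i^2\le1$, this gives
\[
 \E_\nu(1-\eta)^\kappa\le\min\Bigl\{1,\sqrt{2/(\eta(\Lambda-1))}\Bigr\}.
\]
The subtraction of one is harmless: when $\Lambda\le2$ or so, the quantity $\ln^+(\Lambda/(\kappa+1))\le\ln 2$ already sits below $L_0$. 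Using $\Lambda$ directly with a slightly worse constant is also acceptable.

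\emph{Step 2 (tail to mean).} This follows the first half of Lemma~\ref{lem:ranktail}. For $u>0$, set $K=\Lambda e^{-u}$ and choose $\eta=1/(2K)$. On the event $\Lambda/(\kappa+1)>e^u$ we have $\kappa<K-1$, so $(1-\eta)^{\kappa}\ge e^{-1/2}$. Markov's inequality then gives
\[
 \nu\bigl(\ln(\Lambda/(\kappa+1))>u\bigr)\le C e^{-u/2}
\]
for some constant $C$, namely $\sqrt{2e}$ times a small factor from the $\Lambda-1$ correction. Integrating, $\int_0^\infty\min\{1,Ce^{-u/2}\}\,du=2\ln C+2$. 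This should produce a constant of the form $2+2\ln(\cdot)$. The target $L_0=3+\ln8=3+3\ln2$ strongly suggests $C^2=8e$, i.e. $C=\sqrt{8e}$, arising from a factor $4$ (relative to $\sqrt{2e}$) from the conversion $\min(q,1-q)\ge\sigma^2/4$ and the $\eta$ choice.

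\emph{Main obstacle.} The difficulty is the bookkeeping in Step 1: relating $m_F$ to $\Lambda$ sharply enough that the resulting $C$ yields exactly $L_0$, and handling the excluded coordinate $i$ together with small $\Lambda$. I would first try $\eta=1/K$ instead of $1/(2K)$. With that choice the factor $(1-\eta)^{-(K-1)}\le e$ and $m_F\ge\eta\Lambda/2$ combine to give $\nu(\cdot>u)\le\sqrt{2e^2}\,e^{-u/2}\cdot$const. I would then tune $\eta$ to match $3+\ln8$. Finally, the regime $\Lambda<2$ or $K\le1$ is handled trivially, because the event is then empty or the integrand is bounded by $\ln2$.
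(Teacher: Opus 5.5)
Your proposal is correct and is essentially the paper's argument: the paper also disposes of $\Lambda\le2$ by the bound $\ln2$, uses $\Lambda-\sigma_i^2\ge\Lambda/2$ with \eqref{bias:eq-3-2} to get $\E_{\nu}(1-\eta)^{\kappa}\le 2/\sqrt{\eta\Lambda}$, and then applies Markov with $\eta=1/(2K)$. The bookkeeping you flag as the obstacle closes at once: for $\Lambda>2$ your own bound with $\Lambda/(\Lambda-1)\le2$ gives tail constant $C=2\sqrt{2e}$ and hence exactly $2\ln C+2=3+\ln8$, so no tuning is needed, and the alternative $\eta=1/K$ would only give the worse constant $4+\ln4$.
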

\begin{proof}

The derivative \(d_i=\mathbf1[N-y_i=t-1]\) has \(\mathbb Ed_i^2=\rho_i\). Its normalized squared coefficients define a probability law \[
 \nu_i(B)=\frac{\widehat d_i(B)^2}{\rho_i}.
\] The left side of \eqref{bias:eq-4-1}, divided by \(\I_i\), is \(\mathbb E_{\nu_i}\ln^+(\Lambda/(|B|+1))\).

If \(\Lambda\le2\) this is at most \(\ln2\). Suppose \(\Lambda>2\). Refresh the other coordinates independently with probability \(\eta\). The orthonormal noise identity gives \[
 \mathbb E[d_i(y)d_i(z)]
 =\rho_i\,\mathbb E_{\nu_i}(1-\eta)^{|B|}.
\] For the conditional sum of the refreshed other coordinates, the free-bit count in Lemma~\ref{bias:concentration} has mean at least \[
 2\eta\sum_{j\ne i}\min(q_j,1-q_j)
 \ge \frac\eta2(\Lambda-\sigma_i^2)
 \ge\frac{\eta\Lambda}{4}.
\] It follows that \[
 \mathbb E_{\nu_i}(1-\eta)^{|B|}
 \le\frac2{\sqrt{\eta\Lambda}}.
\] For real \(K>1\), take \(\eta=1/(2K)\). The event relevant to the logarithmic moment is \(\{|B|+1<K\}\), so \[
 \begin{aligned}
 \nu_i(|B|+1<K)
 &\le(1-\eta)^{-(K-1)}
          \mathbb E_{\nu_i}(1-\eta)^{|B|}\\
 &\le2\sqrt{2e}\sqrt{\frac K\Lambda}.
 \end{aligned}
\] Indeed \[
 -(K-1)\ln\left(1-\frac1{2K}\right)
 \le\frac{K-1}{2K-1}<\frac12,
\] using \(-\ln(1-z)\le z/(1-z)\). For \(K\le1\), the event is empty. Therefore \[
 \mathbb E_{\nu_i}\ln^+\frac{\Lambda}{|B|+1}
 \le\int_0^\infty
       \min\left(1,2\sqrt{2e}\,e^{-u/2}\right)du
 =2\ln(2\sqrt{2e})+2=3+\ln8.
\]

\end{proof}

\begin{lemma}[The normalization credit]\label{bias:credit}

\begin{equation}\label{bias:eq-4-2}
V\ln\frac1V
 \le\sum_i \I_i\ln^+\frac{4\rho_i}{V}
       +(1+\ln2+13)\I.
\end{equation}

\end{lemma}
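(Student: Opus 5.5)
The plan is a per-coordinate decomposition. Set \(p=\min\{\Pr(T=1),\Pr(T=0)\}\), so \(V=4p(1-p)\). By symmetry (complementing the gate and all inputs) I assume \(\Pr(T=1)=p\le1/2\). The total-variance decomposition \(V=\sum_{S\ne\varnothing}P(S)\le\I\) is available. Wherever \(\ln(1/V)\) is bounded I would use it directly: if \(V\ge e^{-14}\), or more generally \(\ln(1/V)\le 14\), the claim follows from \(V\ln(1/V)\le 14V\le14\I\). Everything therefore reduces to the regime where \(V\), and hence \(p\), is small.

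In the small-\(p\) regime, write, for each \(i\) with \(\I_i>0\),
\[
 \ln\frac1V\le \ln^+\frac{4\rho_i}{V}+\ln^+\frac1{4\rho_i}.
\]
Multiplying by \(\I_i\) and summing, the first terms give exactly the main term of \eqref{bias:eq-4-2}. It then suffices to prove two bounds:
\[
 V\ln\frac1V\le\sum_i\I_i\ln\frac1V+O(\I),
 \qquad
 \sum_i\I_i\ln^+\frac1{4\rho_i}\lesssim\I.
\]
The first is Poincaré again. The second is false in general: for the AND of \(r\) fair bits every \(\rho_i=2^{1-r}\). So the naive split is too lossy, and the loss has to be paid by the slack \(\I-V\). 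For the AND example, \(V\approx 4\cdot2^{-r}\), \(\I=2r2^{-r}\) and \(\ln(4\rho_i/V)\approx\ln2\). Hence \(V\ln(1/V)\approx 4r2^{-r}\ln2\) is paid by a constant multiple of \(\I\), because \(\I\) exceeds \(V\) by a factor of order \(\ln(1/V)\).

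The real content is therefore a biased edge-isoperimetric inequality for thresholds that interpolates between two extremes. One is the dictator-type regime, where \(\rho_i\) is of order \(1\), \(V\approx\I\), and the logarithm is absorbed into \(\ln^+(4\rho_i/V)\). The other is the many-coordinate regime, where \(\rho_i\) is comparable to \(V\) but \(\I\gtrsim V\ln(1/V)\). Concretely I would aim for
\[
 V\Bigl(\ln\frac1V-14\Bigr)\le\sum_i\I_i\ln^+\frac{4\rho_i}{V}+(1+\ln2)\I .
\]

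To prove it, I would apply Lemma~\ref{bias:tail-above-mean}: since \(\Pr(N\ge t)=p<1/20\), the threshold lies above the mean, \(t>\E N\). I would then use log-concavity of the Poisson--binomial law of \(N\) to compare the tail with the boundary, getting
\[
 p=\Pr(N\ge t)\le \Pr(N=t)\cdot\frac{1}{1-\lambda},
\]
where \(\lambda=\Pr(N=t+1)/\Pr(N=t)<1\). Next I would use the exact identities
\[
 \sum_i q_i\rho_i=\E[N;N=t]=t\Pr(N=t),
 \qquad
 \sum_i q_i(1-q_i)\rho_i\ \text{versus}\ \Pr(N=t),
\]
to express \(\I=\sum_i4q_i(1-q_i)\rho_i\) through the boundary mass \(\Pr(N=t)\) and an effective number of active coordinates. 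When the ratio \(\lambda\) is close to \(1\), the tail is long. A Chernoff comparison (the tail decays like \(e^{-c(t-\E N)^2/\Lambda}\)) then shows that \(t-\E N\), and hence the "depth" of the rare event, is of order \(\sqrt{\Lambda\ln(1/p)}\). This produces the factor \(\ln(1/V)\) in \(\I\) up to \(\ln^+(4\rho_i/V)\) corrections. When \(\lambda\) is bounded away from \(1\), \(p\) and \(\Pr(N=t)\) are comparable. Then each \(\rho_i\) is at least a constant times \(\Pr(N=t)\), after conditioning on \(y_i\), which changes the count by at most one. This gives \(\ln(1/V)\le\ln^+(4\rho_i/V)+O(1)\) coordinatewise, so the claim follows with \(V\le\I\).

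I expect the main obstacle to be the intermediate regime with heterogeneous biases, where a few coordinates with \(q_i\) near \(0\) or \(1\) carry most of the pivotal mass. In that regime neither the boundary-ratio argument nor the coordinatewise comparison \(\rho_i\gtrsim\Pr(N=t)\) is uniform. There I would first try the refresh-and-free-bit representation of Lemma~\ref{bias:concentration}: it bounds point masses of \(N\) by \((2L)^{-1/2}\) and so controls how far \(\rho_i\) can fall below \(p\). The additive constant \(13\) (on top of \(1+\ln2\)) would be spent on this case split and on the \(1/20\) threshold from Lemma~\ref{bias:tail-above-mean}.
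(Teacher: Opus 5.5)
\textbf{There is a genuine gap.} Your reduction to small $V$ and your diagnosis of the AND example are correct. However, the plan never gives a working mechanism for the factor $\ln(1/V)$, and the one concrete step you give is false.

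In the branch where $\lambda$ is bounded away from $1$, the bounds $\rho_i\gtrsim\Pr(N=t)\gtrsim(1-\lambda)p$ only yield $4\rho_i/V\gtrsim1$, i.e.\ $\ln^+(4\rho_i/V)\ge0$. They do not yield $\ln(1/V)\le\ln^+(4\rho_i/V)+O(1)$. That coordinatewise inequality fails on your own example. The AND of $r$ bits of bias $q$ has $t=r$, hence $\lambda=0$, yet $\ln(1/V)=r\ln(1/q)-O(1)$ while $\ln^+(4\rho_i/V)=\ln(1/q)+O(1)$. Passing through $V\le\I$ discards exactly the factor $r$ between $\I$ and $V$ that pays for the logarithm there.

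The branch with $\lambda$ near $1$ is only heuristic, and the Chernoff step points the wrong way. It bounds $p$ from above, hence bounds $t-\E N$ from above in terms of $\ln(1/p)$. Getting $\ln(1/V)$ into $\I$ (say through the exact identity $\I=4\,\E[(N-\E N)\mathbf 1_{\{N\ge t\}}]\ge4(t-\E N)p$) instead needs a lower bound on $t-\E N$, i.e.\ a reverse tail estimate. Under strong bias no such estimate holds, and that is precisely the heterogeneous regime you leave open. Separately, your intermediate target with $-14V$ would give only $15+\ln2$ in place of $14+\ln2$.

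The missing idea is information-theoretic. Let $A=\{N\ge t\}$ be the minority event with $\alpha=\Pr(A)<1/20$. Read $\ln(1/\alpha)$ as $D_{\rm KL}(\mu_A\Vert\mu)$ and split it by the chain rule into $\sum_iD_{\rm KL}(\mu_A^{(i)}\Vert\mu_i)$ plus the total correlation $\mathrm{TC}(\mu_A)$.

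The marginal terms produce the credit term exactly. Put $a=\Pr(A\mid y_i=1)$ and $b=\Pr(A\mid y_i=0)$. Then $a-b=\rho_i$, $b\le\alpha$, and $a=\alpha(1+u)$ with $u=(1-q_i)\rho_i/\alpha$. Hence
$\alpha D_{\rm KL}(\mu_A^{(i)}\Vert\mu_i)\le q_i(1-q_i)\rho_i[1+\ln(1+u)]\le\frac{\I_i}4[1+\ln2+\ln^+(4\rho_i/V)]$.
This alone settles $t=r$ (the AND), where $\mathrm{TC}(\mu_A)=0$.

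For $t<r$, compare $\mu_A$ with the exponentially tilted product law $R$ satisfying $\E_RN=t$; it exists because Lemma~\ref{bias:tail-above-mean} gives $t>\E N$. Then $\mathrm{TC}(\mu_A)\le D_{\rm KL}(\mu_A\Vert R)\le\ln(1/\Pr_R(N=t))$. Set $v=\Var_RN$. The Stein-type identity under $R$ gives $\I/\alpha\ge2\E_R|N-t|\ge2v/\sqrt{1+3v}$. Darroch's mode theorem with Chebyshev gives $\Pr_R(N=t)\ge3/(4(4\sqrt v+3))$. Together these give $\alpha\,\mathrm{TC}(\mu_A)\le\frac{13}4\I$.

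Finally, $V\ln(1/V)\le4\alpha\ln(1/\alpha)$, and summing yields the lemma. This route needs no case split on $\lambda$ and no homogeneity of the biases.
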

\begin{proof}

Let \(\alpha=\min(\Pr(f=1),\Pr(f=-1))\). If \(\alpha\ge1/20\), then \(V=4\alpha(1-\alpha)\ge19/100\), and \[
 V\ln(1/V)\le\ln(100/19)V\le\ln(100/19)\I,
\] which is more than sufficient. Otherwise negate all inputs and the output if needed so that the minority event is \(A=\{N\ge t\}\), of probability \(\alpha<1/20\). This transformation preserves all the displayed quantities. By Lemma~\ref{bias:tail-above-mean}, \(t>\mathbb EN\).

Write \(\mu_A=\mu(\cdot\mid A)\). The relative-entropy decomposition against a product measure is \begin{equation}\label{bias:eq-4-3}
\ln\frac1\alpha
 =\mathrm{TC}(\mu_A)+
     \sum_i D_{\rm KL}(\mu_A^{(i)}\Vert\mu_i),
\end{equation} where \(\mathrm{TC}(\mu_A)\) is the relative entropy of \(\mu_A\) with respect to the product of its marginals. This identity follows directly by splitting the logarithm of the density.

Fix \(i\), let \(a=\Pr(A\mid y_i=1)\), \(b=\Pr(A\mid y_i=0)\), so \(a-b=\rho_i\) and \(\alpha=q_i a+(1-q_i)b\). Since \(b\le\alpha\), and with \(u=(1-q_i)\rho_i/\alpha\), \begin{equation}\label{bias:eq-4-4}
\begin{aligned}
 \alpha D_{\rm KL}(\mu_A^{(i)}\Vert\mu_i)
 &=q_i a\ln(a/\alpha)+(1-q_i)b\ln(b/\alpha)\\
 &\le q_i\alpha(1+u)\ln(1+u)\\
 &\le q_i(1-q_i)\rho_i[1+\ln(1+u)]\\
 &\le\frac{\I_i}{4}
       \left[1+\ln2+\ln^+\frac{4\rho_i}{V}\right].
 \end{aligned}
\end{equation} Here \((1+u)\ln(1+u)\le u[1+\ln(1+u)]\), \(\ln(1+u)\le\ln2+\ln^+u\), and \(\alpha\ge V/4\).

It remains to establish \begin{equation}\label{bias:eq-4-5}
\alpha\,\mathrm{TC}(\mu_A)\le\frac{13}{4}\I.
\end{equation} If \(t=r\), conditioning on \(A\) fixes every input and the total correlation is zero. Otherwise exponentially tilt the product measure: \[
 \frac{dR}{d\mu}=\frac{e^{\lambda N}}M,
 \qquad \lambda>0,\qquad \mathbb E_RN=t.
\] Such a finite \(\lambda\) exists because \(\mathbb EN<t<r\). Put \(b_*=\Pr_R(N=t)\). For every product measure \(R\), \(D_{\rm KL}(\mu_A\Vert R)\ge\mathrm{TC}(\mu_A)\), and here \begin{equation}\label{bias:eq-4-6}
D_{\rm KL}(\mu_A\Vert R)
   =\ln(M/\alpha)-\lambda\mathbb E_{\mu_A}N
   \le\ln(Me^{-\lambda t}/\alpha)
   \le\ln(1/b_*).
\end{equation} The last step uses \(\alpha=M\mathbb E_R[e^{-\lambda N}\mathbf1_A]  \ge Me^{-\lambda t}b_*\).

Let \(r_i=R(y_i=1)\), \(a_i=1-q_i+q_i e^\lambda\), so \(a_i\ge1\). The complementary upper bound \(\alpha\le Me^{-\lambda t}\), together with the slice change of measure, gives \[
 \frac{\I_i}{\alpha}
 \ge\frac{\sigma_i^2e^\lambda}{a_i}
       R(N-y_i=t-1)
 =4r_i(1-r_i)a_iR(N-y_i=t-1)
 \ge4r_i(1-r_i)R(N-y_i=t-1).
\] The identity \[
 \mathbb E_R[(y_i-r_i)h(N)]
   =r_i(1-r_i)\mathbb E_R[h(N-y_i+1)-h(N-y_i)]
\] holds by conditioning on the other coordinates. Sum it for \(h(N)=\mathbf1[N\ge t]\). As \(\mathbb E_RN=t\), this yields \[
 \frac \I\alpha\ge4\mathbb E_R(N-t)_+
                  =2\mathbb E_R|N-t|.
\] Writing \(v=\operatorname{Var}_RN\), the fourth-moment estimate and Hölder used in Lemma~\ref{bias:tail-above-mean} give \begin{equation}\label{bias:eq-4-7}
\frac \I\alpha\ge\frac{2v}{\sqrt{1+3v}}.
\end{equation}

We use the following classical theorem of Darroch \cite{Darroch64} (see also \cite{Pitman97}): if a finite sum of independent Bernoulli variables has integer mean \(t\), then its mass at \(t\) is maximal among its point probabilities.

If \(v\le1/2\), the integer support and Chebyshev give \(b_*\ge1-v\), hence \(\mathrm{TC}(\mu_A)\le-\ln(1-v)\le2v\ln2\). By \eqref{bias:eq-4-7}, \[
 \frac{\alpha\,\mathrm{TC}(\mu_A)}\I
 \le\ln2\sqrt{1+3v}<1.1.
\] If \(v>1/2\), let \(K=\lceil2\sqrt v\rceil\). Chebyshev and the mode theorem give \[
 b_*\ge\frac{3}{4(2K+1)}
       \ge\frac3{4(4\sqrt v+3)}.
\] Consequently \[
 \frac{\alpha\,\mathrm{TC}(\mu_A)}\I
 \le\psi(v):=
 \ln\frac{16\sqrt v+12}{3}\,
      \frac{\sqrt{1+3v}}{2v}.
\] This function decreases on \([1/2,\infty)\) and \(\psi(1/2)<3.25\). For completeness, put \(x=\sqrt v\). The logarithmic derivative of \(\sqrt{1+3x^2}/(2x^2)\) is strictly less than \(-1/x\). The logarithmic derivative of \(\ln((16x+12)/3)\) is strictly less than \(1/x\) for \(x\ge1/\sqrt2\), because its logarithm exceeds 1. This proves the decrease. For the endpoint bound, use \(\sqrt{5/2}<19/12\), \(\sqrt2<99/70\), and \[
 e^{39/19}>
 \left(\sum_{k=0}^8\frac{2^k}{k!}\right)\left(1+\frac1{19}\right)
 =\frac{2327}{315}\frac{20}{19}
 >\frac{272}{35}>
       \frac{8\sqrt2+12}{3}.
\] Thus \(\psi(1/2)<(19/12)(39/19)=13/4\). Thus \eqref{bias:eq-4-5} holds.

Finally \(V\ln(1/V)\le4\alpha\ln(1/\alpha)\). Combining \eqref{bias:eq-4-3}, \eqref{bias:eq-4-4}, and \eqref{bias:eq-4-5} proves \eqref{bias:eq-4-2}.

\end{proof}

\begin{proposition}[The weak bound]\label{bias:weak}

\begin{equation}\label{bias:eq-4-8}
\mathcal E_\mu(f)\le A \I,\qquad
 A=6+L_0+(1+\ln2+13)=23+\ln16<25.8.
\end{equation}

\end{proposition}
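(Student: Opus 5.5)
The plan is to apply Gibbs' inequality \eqref{eq:gibbs} to the normalized spectral law on nonempty sets, with a reference measure that is already weighted by $\sigma_S^2$, so that the bias correction in $\mathcal E_\mu$ cancels. Put $P'(S)=P(S)/V$ for $S\ne\varnothing$. Then
\[
 \mathcal E_\mu(f)=V\ln\frac1V+\sum_{S\ne\varnothing}P(S)\ln\frac{\sigma_S^2}{P'(S)} .
\]
For the reference I take
\[
 Q'(S)=q(|S|)\,\frac{\sigma_S^2}{e_{|S|}(\sigma_1^2,\ldots,\sigma_r^2)},
\]
where $e_k$ is the elementary symmetric polynomial and $q$ is a fixed law on the positive integers, for instance geometric with ratio $1/2$. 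The factors $\sigma_S^2$ cancel, so the Gibbs step yields
\[
 \mathcal E_\mu(f)\le V\ln\frac1V+\sum_{S\ne\varnothing}P(S)\Bigl[\ln\frac1{q(|S|)}+\ln e_{|S|}\Bigr].
\]

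Next I bound the level terms. We have $e_k\le\Lambda^k/k!\le(e\Lambda/k)^k$, so
\[
 \ln e_k\le k+k\ln^+\frac{\Lambda}{k}.
\]
Summing $P(S)\,|S|\ln^+(\Lambda/|S|)$ and distributing over $i\in S$ turns it into $\sum_i\sum_{S\ni i}P(S)\ln^+(\Lambda/|S|)$. Lemma~\ref{bias:rank} bounds this by $L_0\I$. The terms $P(S)\,|S|$ sum to $\I$. The geometric level cost is at most $|S|\ln2$, so it contributes at most $\I\ln 2$. Altogether these pieces cost at most $(1+\ln2+L_0)\I$, and the remaining part of the constant $6$ is left for the normalization term.

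For $V\ln(1/V)$ I invoke Lemma~\ref{bias:credit}. It produces the $(1+\ln2+13)\I$ part of $A$, together with the residual sum $\sum_i\I_i\ln^+(4\rho_i/V)$. Controlling this residual with at most about $5-\ln2$ further multiples of $\I$ is the step I expect to be the main obstacle.

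My first attempt at the obstacle will use the singleton masses, $V\ge\sum_jP(\{j\})=\sum_j\sigma_j^2\rho_j^2$, together with the anticoncentration bound \eqref{bias:eq-3-1}, which gives $\rho_i\lesssim(2L)^{-1/2}\lesssim\Lambda^{-1/2}$. From these I hope to derive an estimate of the form
\[
 \ln^+\frac{4\rho_i}{V}\le c+\ln^+\frac{\Lambda}{|S|}\quad\text{for typical }S\ni i,
\]
or more directly
\[
 \sum_i\I_i\ln^+\frac{4\rho_i}{V}\le c\,\I+\sum_i\sum_{S\ni i}P(S)\ln^+\frac{\Lambda}{|S|}.
\]
The second term would be recharged to Lemma~\ref{bias:rank}. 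Alternatively, if that comparison fails, I would modify the reference so that singletons $\{i\}$ receive weight proportional to $\I_i/\I$ instead of $\sigma_i^2/e_1$. That would pay for the $\ln(4\rho_i/V)$ terms on level one, where $P(\{i\})=\sigma_i^2\rho_i^2$, and leave the higher levels to the rank estimate. The remaining constants $1+\ln2+L_0+14$, plus whatever this last step costs, are then checked against $A=6+L_0+14$. Boundary cases with $V=0$ or $\Lambda\le2$ are immediate.
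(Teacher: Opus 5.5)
\textbf{There is a genuine gap.} It sits at the step you flag yourself, and neither proposed fix can close it. The residual $\sum_i\I_i\ln^+(4\rho_i/V)$ is in general not $O(\I)$. It therefore has to be \emph{cancelled} by a credit built into the reference measure, and the reference $q(|S|)\sigma_S^2/e_{|S|}$ cannot supply one.

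\textbf{A counterexample.} Take one coordinate with $q_0=1/2$, $r$ coordinates with $q_i=\varepsilon/r$ ($r$ large, $\varepsilon$ small), and the gate $T_{3,r+1}$. To leading order in $\varepsilon$ one has $V\approx\varepsilon^2$, $\I\approx\frac52\varepsilon^2$, and $\Lambda\le1+4\varepsilon<2$, so every term $\ln^+(\Lambda/|S|)$ with $|S|\ge2$ vanishes. For the rare coordinates $\rho_i\approx\varepsilon/2$, so the residual is about $2\varepsilon^2\ln(2/\varepsilon)$.
\begin{itemize}
\item This refutes your first attempt. Even if that inequality held, it would charge Lemma~\ref{bias:rank} a second time, at cost $L_0>5-\ln2$.
\item Worse, the level-two spectral mass $W_2\approx\varepsilon^2/2$ sits on pairs of rare coordinates. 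These pairs account for only about $8\varepsilon^2$ of $e_2\approx4\varepsilon$; the rest of $e_2$ comes from pairs containing coordinate $0$, which carry almost no spectral mass.
\item For every level law $q$, your first display is at least $\sum_kW_k\ln(e_k/W_k)$, by Gibbs on the level law. Every term is nonnegative because $P(S)\le\sigma_S^2$. Hence the display is at least about $\frac{\varepsilon^2}{2}\ln(8/\varepsilon)$, which is not $O(\I)$.
\end{itemize}
So the loss already occurs in the Gibbs step itself, at level two, before any later estimate. Your second attempt changes only the singletons, so it cannot repair this. More generally, a singleton correction reaches only the mass $P(\{i\})=\rho_i\I_i$. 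That is a $\rho_i$-fraction of the weight $\I_i=\sum_{S\ni i}P(S)$ appearing in the residual.

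\textbf{The missing idea} is to build the influence proportions into the coordinate weights at \emph{every} level. Take $\theta_k(i)=\max\bigl(\sigma_i^2\min(1,k/\Lambda),\,4\I_i/V\bigr)$ and the reference $q(k)\prod_{i\in S}\theta_k(i)/Z_k$, where $Z_k$ is the $k$-th elementary symmetric polynomial of the $\theta_k(i)$.
\begin{itemize}
\item Then $\ln(\sigma_i^2/\theta_k(i))\le\ln^+(\Lambda/k)-\ln^+(4\rho_i/V)$. Summing over $i\in S$ and over $S$ gives the rank term handled by Lemma~\ref{bias:rank}, together with exactly $-\sum_i\I_i\ln^+(4\rho_i/V)$. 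This cancels the residual of Lemma~\ref{bias:credit}.
\item The price is only in the normalizer. Since $\sum_i\theta_k(i)\le k+4\I/V$, one gets $\ln Z_k\le k+4\I/V$, hence $\sum_kW_k\ln Z_k\le5\I$.
\item This is an extra $4\I$ over your $e_k$ accounting, which fits inside the $5-\ln2$ you had left.
\end{itemize}
Bound the level term $\sum_kW_k\ln(1/W_k)$ by $V\ln(1/V)+\I$, by comparison with a geometric law of mean $\I/V$. Adding the normalizer cost and Lemmas~\ref{bias:rank} and~\ref{bias:credit} gives exactly $A=6+L_0+14+\ln2$. Your level code $2^{-k}$ would give the slightly smaller constant $19+2\ln2+L_0$.
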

\begin{proof}

Let \(W_k=\sum_{|S|=k}P(S)\), \(m=\I/V\), and \(\pi_i=\I_i/V\). For \(k\ge1\), define positive reference weights \[
 \theta_k(i)=
 \max\left(\sigma_i^2\min(1,k/\Lambda),\,4\pi_i\right).
\] Let \(Z_k\) be their \(k\)-th elementary symmetric polynomial. Gibbs' inequality on the sets of size \(k\) gives \begin{equation}\label{bias:eq-4-9}
\sum_{|S|=k}P(S)\ln\frac{\sigma_S^2}{P(S)}
 \le W_k\ln\frac1{W_k}+W_k\ln Z_k
       +\sum_{|S|=k}P(S)\sum_{i\in S}
           \ln\frac{\sigma_i^2}{\theta_k(i)}.
\end{equation} Since \(\sum_i\theta_k(i)\le k+4m\), \[
 Z_k\le\left(\frac{e(k+4m)}k\right)^k,\qquad
 \ln Z_k\le k+4m.
\] Moreover \[
 \ln\frac{\sigma_i^2}{\theta_k(i)}
 \le\ln^+\frac\Lambda k-\ln^+\frac{4\rho_i}{V}.
\] The level distribution \(W_k/V\), supported on positive integers with mean \(m\), has entropy at most \(1+\ln m\), by Gibbs' inequality against the geometric distribution with that mean. Hence \[
 \sum_kW_k\ln\frac1{W_k}
 \le V\ln\frac1V+V(1+\ln m)
 \le V\ln\frac1V+\I.
\] Summing \eqref{bias:eq-4-9}, and then applying Lemmas~\ref{bias:rank} and~\ref{bias:credit}, gives \[
 \begin{aligned}
 \mathcal E_\mu(f)
 &\le V\ln\frac1V+6I
      +\sum_i\sum_{S\ni i}P(S)\ln^+\frac\Lambda{|S|}
      -\sum_iI_i\ln^+\frac{4\rho_i}{V}\\
 &\le[6+L_0+(1+\ln2+13)]\I.
 \end{aligned}
\]

\end{proof}

\subsection{The strengthened inequality}\label{bias:strong-section}

\begin{theorem}[Threshold gates under product measures]\label{bias:strong}

For every \(T_{t,r}\) and every product law, \begin{equation}\label{bias:eq-5-1}
\mathcal E_\mu(2T_{t,r}-1)\le10A J\le20A \Delta_\mu(2T_{t,r}-1).
\end{equation} In particular \(\HH_\mu^+\le C_T \Delta_\mu\), where \(C_T=20A/\ln2<744\). For the displayed constant, \(C_T=80+460/\ln2\), and the convergent series \[
 \ln2=2\sum_{k\ge0}\frac1{(2k+1)3^{2k+1}}
   >\frac{842}{1215}>\frac{115}{166}
\] gives \(C_T<744\).

\end{theorem}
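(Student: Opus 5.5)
The plan is to reduce to the weak bound of Proposition~\ref{bias:weak} whenever the nonsingleton mass dominates, and otherwise to redo its Gibbs argument with the singleton level removed. The second inequality in \eqref{bias:eq-5-1} is already recorded in Section~\ref{bias:normalization}, namely $J\le2\Delta_\mu$. So the work is the first inequality, $\mathcal E_\mu\le10AJ$.

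Recall the decomposition $\I=W_1+J$ with
\[
 W_1=\sum_i\sigma_i^2\rho_i^2,\qquad
 J=\sum_i\sigma_i^2\rho_i(1-\rho_i).
\]
\emph{Case $J\ge\I/10$.} Here Proposition~\ref{bias:weak} gives $\mathcal E_\mu\le A\I\le10AJ$ directly.

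\emph{Case $J<\I/10$.} Then the influence is dominated by coordinates with $\rho_i$ close to $1$. I split $\mathcal E_\mu$ into its singleton and nonsingleton parts.

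The singleton terms are
\[
 P(\{i\})\ln\frac{\sigma_i^2}{P(\{i\})}=\sigma_i^2\rho_i^2\ln\frac1{\rho_i^2}.
\]
The elementary inequality $\rho^2\ln(1/\rho^2)\le2\rho(1-\rho)$ bounds this part by $2J$. This is the product-measure analogue of the first estimate in Lemma~\ref{union:smallmass}.

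For the sets with $|S|\ge2$, put $W=\sum_{|S|\ge2}P(S)\le J/2$. I apply Gibbs' inequality to $P/W$ on each level $k\ge2$, using the same reference weights $\theta_k(i)$ as in Proposition~\ref{bias:weak}, but with $V,\I$ replaced by $W,J$ in the normalizing credit term. The level-entropy term then costs $W\ln(1/W)+O(J)$, and the normalizers cost $O(J)$ as before.

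The rank term is $\sum_i\sum_{S\ni i,|S|\ge2}P(S)\ln^+(\Lambda/|S|)$. I bound it as in \eqref{union:Gconditioned}, with two subcases.
\begin{itemize}
\item If $\nu_i(B\ne\varnothing)\ge1/2$, Lemma~\ref{bias:rank} gives at most $2L_0J_i$.
\item If $\nu_i(B\ne\varnothing)<1/2$, then $\rho_i>1/2$. By \eqref{bias:eq-3-1} applied to $N-y_i$, this forces $\sum_{j\ne i}\min(q_j,1-q_j)<2$. Hence $\Lambda$ is bounded by a constant, and $\ln^+(\Lambda/|S|)$ is bounded pointwise, again giving $O(J_i)$.
\end{itemize}

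The main obstacle is the normalization credit in the second case. I need an analogue of Lemma~\ref{bias:credit} of the form
\[
 W\ln\frac1W\le\sum_i J_i\ln^+\frac{c\,\rho_i}{W}+O(J),
\]
or else a direct bound $W\ln(1/W)=O(J)$. The first thing I would try mirrors the last case of Lemma~\ref{union:smallmass}.

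If some coordinate $i_*$ has $\rho_{i_*}$ close to $1$, then $f$ is close to the dictator in $y_{i_*}$ off an event $E$ where the other coordinates deviate from their likely values. I would show $W\lesssim\mu(E)$, and that $J\gtrsim\mu(E)\ln(1/\mu(E))$ by the biased edge-isoperimetric or total-correlation argument of Lemma~\ref{bias:credit}, applied to the indicator of $E$.

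If instead no single coordinate dominates, the condition $J<\I/10$ should force many coordinates with $\rho_i>1/2$. Those coordinates are controlled uniformly by the bounded-$\Lambda$ estimate above, and the credit follows from \eqref{bias:eq-4-4}–\eqref{bias:eq-4-5} with $\alpha$ replaced by a quantity comparable to $W$.

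Finally I collect constants. The singleton part contributes $2J$, the rank, normalization and level terms contribute a multiple of $(6+L_0+14)J$, and the split threshold $1/10$ is chosen so that every case fits under $10AJ$.
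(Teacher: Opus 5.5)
Your first case and the bookkeeping in the second are fine. That covers the singleton terms ($\le2J$), the level entropy ($W\ln(1/W)+O(J)$), the normalizers ($O(J)$), and both subcases of the rank term. But the argument stops exactly where the difficulty lies. Everything in the case $J<\I/10$ hinges on the credit inequality for $W\ln(1/W)$, and neither route you sketch establishes it. Both the ``direct bound'' $W\ln(1/W)=O(J)$ and the biased isoperimetric claim $J\gtrsim\mu(E)\ln(1/\mu(E))$ are false. Take $T=\mathrm{OR}(y_0,y_1)$ with $q_0=1/2$ and $q_1=\delta\to0$. Then $\rho_0=1-\delta$, $\rho_1=1/2$, $\I\to1$ and $J=2\delta(1-\delta)$. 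So you are in your second case, with dominant coordinate $0$, $E=\{y_1=1\}$ and $\mu(E)=\delta$. However $W=P(\{0,1\})=\delta(1-\delta)$, so $W\ln(1/W)\sim\delta\ln(1/\delta)\gg J$. The corrected entropy is nevertheless $O(\delta)$, because its term is $P(\{0,1\})\ln\bigl(\sigma_0^2\sigma_1^2/P(\{0,1\})\bigr)=\delta(1-\delta)\ln4$. The logarithm is cancelled by the factor $\sigma_1^2=4\delta(1-\delta)$, not by an isoperimetric gain; under a biased measure there is no such gain, as the dictator remark in Section~\ref{bias:normalization} shows. Reusing Lemma~\ref{bias:credit} with $\alpha$ replaced by something comparable to $W$ does not help either. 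Its error term is $14\I$, which is useless when $J\ll\I$, and its proof is tied to the minority event $\{N\ge t\}$. Your ``no single coordinate dominates'' subcase is vacuous, since $J<\I/10$ forces some $\rho_i>9/10$. Note also that weights $4J_i/W$ give the credit $\ln^+(4\rho_i(1-\rho_i)/W)$. Weights producing $\ln^+(c\rho_i/W)$ would make the normalizer cost $O(\I)$ rather than $O(J)$.

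The missing idea is to pay for $\ln(1/W)$ with the smallness of the variances of the non-dominant coordinates. The paper does this directly rather than through a credit lemma. With $\rho_0=1-\varepsilon>9/10$, it shows that the most likely configuration lies on the pivotal slice and that the other coordinates satisfy $L\le\frac87\varepsilon$ and $\Lambda'\le\frac{32}7\varepsilon$, as in \eqref{bias:eq-5-3}. It then proves $J_U\ge\frac9{10}\varepsilon\Lambda_U$ and $J_D\ge\frac{63}{128}q\Lambda_D$. The sets containing $0$ are bounded by $3J_0$, using $e_k\le(\Lambda')^k\le(6v_d)^k$. The sets avoiding $0$ are handled by writing $T=u+y_0(1-u-v)$ and using coefficient bounds such as $|\wh w(B)|\le\sigma_B/2$ and $|\wh u(B)|\le L_U\sigma_B$ for $B\subseteq D$. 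This reaches $\mathcal E_\mu\le25J$. Your plan needs structural information of this kind about the near-dictator before its credit inequality can be proved.
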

\begin{proof}

The nonconstant singleton coefficients contribute \begin{equation}\label{bias:eq-5-2}
\sum_i2\sigma_i^2\rho_i^2\ln(1/\rho_i)
 \le2\sum_i\sigma_i^2\rho_i(1-\rho_i)=2J.
\end{equation} If every \(\rho_i\le9/10\), then \(J\ge \I/10\) and Proposition~\ref{bias:weak} proves \eqref{bias:eq-5-1}.

Suppose instead that a largest pivotal probability is \(\rho_0=1-\varepsilon>9/10\). The case \(\varepsilon=0\) is a dictator after deleting deterministic coordinates, so has zero entropy and deficit. Assume \(0<\varepsilon<1/10\). Negating all inputs and the output if needed, arrange \(q=q_0\le1/2\).

For the other coordinates let \(\bar y_i\) be the more likely input value and \(z_i=\mathbf1[y_i\ne\bar y_i]\). Let \(U\) consist of the coordinates whose more likely value is 0 and \(D\) those whose more likely value is 1. Put \[
 L_U=\sum_{i\in U}\mathbb Ez_i,\quad
 L_D=\sum_{i\in D}\mathbb Ez_i,\quad L=L_U+L_D,
 \quad
 \Lambda_U=\sum_{i\in U}\sigma_i^2,\quad
 \Lambda_D=\sum_{i\in D}\sigma_i^2,\quad
 \Lambda'=\Lambda_U+\Lambda_D.
\] We first establish \begin{equation}\label{bias:eq-5-3}
L<\frac18,\qquad L\le\frac87\varepsilon,\qquad
 \Lambda'<\frac12,\qquad
 \Lambda'\le\frac{32}{7}\varepsilon,\qquad
 2L_U\le\Lambda_U,\quad2L_D\le\Lambda_D.
\end{equation}

\textbf{Step 1: proof of \eqref{bias:eq-5-3}.} Lemma~\ref{bias:concentration} applied to \(N'=N-y_0\) gives \(L\le1/(2\rho_0^2)<4/5\). If the most likely configuration had \(|D|\ne t-1\), the no-error event would fail the pivotal slice, so \[
 \prod_{i\ne0}(1-\mathbb Ez_i)\le\varepsilon<1/10.
\] But \(\ln(1-u)\ge-3u/2\) on \(0\le u\le1/2\); hence this would imply \(L>(2/3)\ln10>4/3\), a contradiction. Thus \(|D|=t-1\), and \(N'-(t-1)=Z_U-Z_D\).

An outcome with exactly one error lies off the pivotal slice. By independence and the union bound on all other errors, \[
 \varepsilon\ge\Pr\!\left(\sum z_i=1\right)\ge L(1-L).
\] On \([1/8,4/5]\), the minimum of \(L(1-L)\) exceeds \(1/10\). Thus \(L<1/8\), and then \(\varepsilon\ge(7/8)L\). Finally \(2u\le4u(1-u)\le4u\) for \(u\in[0,1/2]\), proving the remaining bounds in \eqref{bias:eq-5-3}.

\textbf{Step 2: lower bounds for the available deficit.} Write \(J_U=\sum_{i\in U}J_i\), \(J_D=\sum_{i\in D}J_i\), and \(J_0=\sigma_0^2\varepsilon(1-\varepsilon)\). For \(i\in U\), the no-other-error configurations with \(y_0=0\) show \(\rho_i\ge(1-q)(1-L)>7/16\), while the choice of coordinate 0 gives \(\rho_i\le1-\varepsilon\). Concavity of \(x(1-x)\) therefore yields \begin{equation}\label{bias:eq-5-4}
J_U\ge\frac9{10}\varepsilon\Lambda_U.
\end{equation} For \(i\in D\), put \(a=1-L>7/8\). The no-other-error configurations with \(y_0=1\) or \(y_0=0\) give \[
 aq\le\rho_i\le1-a(1-q).
\] The smaller value of \(x(1-x)\) at these endpoints is \(aq(1-aq)\), because their difference is \(a(1-a)(1-2q)\ge0\). Consequently \begin{equation}\label{bias:eq-5-5}
\rho_i(1-\rho_i)\ge
       \frac78q\left(1-\frac78q\right)
       \ge\frac{63}{128}q,
 \qquad
 J_D\ge\frac{63}{128}q\Lambda_D.
\end{equation}

\textbf{Step~3: coefficients containing coordinate 0.} Let \(d=\mathbf1[N'=t-1]\), \(v_d=\operatorname{Var}d=\varepsilon(1-\varepsilon)\). The coefficients indexed by \(\{0\}\cup B\), \(B\ne\varnothing\), contribute \begin{equation}\label{bias:eq-5-6}
\sigma_0^2
    \sum_{B\ne\varnothing}\widehat d(B)^2
              \ln\frac{\sigma_B^2}{\widehat d(B)^2}.
\end{equation} Let \(U_k=\sum_{|B|=k}\widehat d(B)^2\), and let \(e_k\) be the \(k\)-th elementary symmetric polynomial of the \(\sigma_i^2\) for \(i\ne0\). By \eqref{bias:eq-5-3}, \[
 e_k\le(\Lambda')^k\le(6v_d)^k,
 \quad \sum_{k\ge1}U_k=v_d.
\] Levelwise Gibbs and \(x\ln(a/x)\le a/e\) give \[
 \begin{aligned}
 \sum_{B\ne\varnothing}\widehat d(B)^2
              \ln\frac{\sigma_B^2}{\widehat d(B)^2}
 &\le\sum_k U_k\ln\frac{e_k}{U_k}\\
 &\le\sum_k\frac{v_d^k}{e}
             +(\ln6)\sum_k kU_k\\
 &\le\frac{10}{9e}v_d+(\ln6)\I_\mu(d).
 \end{aligned}
\] Here \(\I_\mu(d)\) denotes the spectral Dirichlet energy of the \(0/1\)-valued function \(d\). Its difference in any coordinate has absolute value at most 1, so \[
 \I_\mu(d)\le\sum_{i\ne0}q_i(1-q_i)
       =\Lambda'/4\le\frac{80}{63}v_d.
\] The resulting coefficient is below 3: since \(e>8/3\) and \(\ln6<2\), it is less than \(5/12+160/63=745/252<3\). Thus \eqref{bias:eq-5-6} is at most \(3J_0\).

\textbf{Step~4: coefficients avoiding coordinate 0.} Put \[
 u=\mathbf1[Z_U-Z_D\ge1],\qquad
 v=\mathbf1[Z_D-Z_U\ge1].
\] Then \(T=u+y_0(1-u-v)\). For nonempty \(B\subseteq U\cup D\), the coefficient of the \(\pm1\)-valued function is \[
 2(1-q)\widehat u(B)-2q\widehat v(B).
\] Every \(0/1\)-valued function \(w\) satisfies \begin{equation}\label{bias:eq-5-7}
|\widehat w(B)|\le\sigma_B/2\quad(B\ne\varnothing).
\end{equation} Indeed integrate a coordinate \(i\in B\): \(\widehat w(B)=(\sigma_i/2)\mathbb E[\Delta_iw\,\phi_{B\setminus\{i\}}]\), and use \(|\Delta_iw|\le1\). Moreover \begin{equation}\label{bias:eq-5-8}
|\widehat u(B)|\le L_U\sigma_B\quad(B\subseteq D),\qquad
 |\widehat v(B)|\le L_D\sigma_B\quad(B\subseteq U),
\end{equation} because \(0\le\mathbb E[u\mid y_D]\le\Pr(Z_U\ge1)\le L_U\), and likewise for \(v\).

The following elementary inequality will handle the difference of coefficients. For \(s>0\), define \(\varphi_s(x)=x\ln(s/x)\). If \(a^2,b^2\le s\), then \begin{equation}\label{bias:eq-5-9}
\varphi_s((a-b)^2)
 \le2\varphi_s(a^2)+2\varphi_s(b^2)+2a^2+2b^2.
\end{equation} If \(2a^2+2b^2\le s/e\), use monotonicity of \(\varphi_s\) up to \(s/e\), followed by the subadditivity \(\varphi_s(x+y)\le\varphi_s(x)+\varphi_s(y)\). If \(2a^2+2b^2>s/e\), simply use \(\varphi_s((a-b)^2)\le s/e<2a^2+2b^2\). These are the two cases of \eqref{bias:eq-5-9}.

The function \(x[1+\ln(s/x)]\) increases on \([0,s]\). Therefore, on a family of coefficients satisfying \(a_B^2\le\lambda\sigma_B^2\), \(0\le\lambda\le1\), the corresponding terms on the right of \eqref{bias:eq-5-9} sum to at most \begin{equation}\label{bias:eq-5-10}
2\lambda[1+\ln(1/\lambda)]
       \sum_B\sigma_B^2.
\end{equation} Zero parameters are interpreted by continuity.

We also have the product-sum estimates \begin{equation}\label{bias:eq-5-11}
\sum_{\substack{|B|\ge2\\B\cap U\ne\varnothing}}\sigma_B^2
 \le\Lambda_U(e^{\Lambda'}-1)
 \le\frac53\Lambda'\Lambda_U,
\end{equation} \begin{equation}\label{bias:eq-5-12}
\sum_{\substack{B\subseteq D\\|B|\ge2}}\sigma_B^2
 \le\frac12 e^{\Lambda_D}\Lambda_D^2
 \le\frac56\Lambda_D^2,
\end{equation} and the analogous bounds with \(U,D\) interchanged. Here \(e^{1/2}<5/3\) and \eqref{bias:eq-5-3} were used. The scalar bounds used below follow, for example, from \(8/3<e<11/4\) and \(\ln2<7/10\), obtained directly from the exponential series. The first estimate follows by choosing one member of \(B\cap U\), possibly overcounting; the second follows by the exponential series.

Let \(a_B=2(1-q)\widehat u(B)\) and \(b_B=2q\widehat v(B)\). Apply \eqref{bias:eq-5-10} to the following four families.

\begin{itemize}
\item
  \textbf{\(a_B\), sets meeting \(U\):} use \(\lambda=1\). By \eqref{bias:eq-5-11} and \eqref{bias:eq-5-3}, their contribution is at most \[
    \frac{10}{3}\Lambda'\Lambda_U
    \le\frac{320}{21}\varepsilon\Lambda_U
    <16\varepsilon\Lambda_U.
  \]
\item
  \textbf{\(a_B\), sets contained in \(D\):} put \(x=2L_U<1/4\) and use \(\lambda=x^2\). The elementary bound \(x(1+2\ln(1/x))\le(1+2\ln4)/4<1\) on this interval gives a contribution at most \[
   \frac53 x^2(1+2\ln(1/x))\Lambda_D^2
   \le\frac53\Lambda_U(\Lambda')^2
   \le\frac{80}{21}\varepsilon\Lambda_U
   <4\varepsilon\Lambda_U.
  \]
\item
  \textbf{\(b_B\), sets meeting \(D\):} use \(\lambda=q^2\) and \(q(1+2\ln(1/q))\le(1+2\ln2)/2<6/5\). Their contribution is at most \[
   \frac{10}{3}q^2(1+2\ln(1/q))\Lambda'\Lambda_D
   \le2q\Lambda_D.
  \]
\item
  \textbf{\(b_B\), sets contained in \(U\):} put \(y=2L_D<1/4\) and use \(\lambda=q^2y^2\). Since \[
    qy(1+2\ln(1/q))+2qy\ln(1/y)
       \le\frac3{10}+\frac{\ln4}{4}<\frac23,
  \] their contribution is at most \[
   \frac53 q^2y^2[1+2\ln(1/q)+2\ln(1/y)]\Lambda_U^2
   \le\frac{10}{9}qy\Lambda_U^2
   \le\frac5{18}q\Lambda_D
   <\frac13q\Lambda_D.
  \]
\end{itemize}

Thus all degree-at-least-two coefficients avoiding 0 contribute at most \begin{equation}\label{bias:eq-5-13}
20\varepsilon\Lambda_U+\frac73q\Lambda_D
 \le23J_U+5J_D,
\end{equation} by \eqref{bias:eq-5-4}--\eqref{bias:eq-5-5}.

Adding \eqref{bias:eq-5-2}, Step~3, and \eqref{bias:eq-5-13} gives \[
 \mathcal E_\mu(f)\le2J+3J_0+23J_U+5J_D\le25J.
\] Since \(25<10A\), this proves the remaining case of \eqref{bias:eq-5-1}.

\end{proof}

\section{Composition and strong entropy}\label{comp:section}

We use the product-measure composition principle of O'Donnell and Tan \cite{ODT13}. The exact identities below record the entropy and deficit cancellation.

Let \(f_i\) be Boolean functions on mutually disjoint input blocks with product laws \(\mu_i\). Let \(\eta_i=\mathbb E_{\mu_i}f_i\) and \(v_i=1-\eta_i^2>0\). Let \(g\) be a Boolean function on the independent outputs, endowed with the product law \(\eta\) in \(\{-1,1\}\) notation. Set \(F=g(f_1,\ldots,f_m)\). Let \[
 a_i=\frac{\I_i^\eta(g)}{v_i}.
\]

\begin{proposition}[Exact composition identities]\label{comp:identities}

\begin{equation}\label{comp:eq-6-1}
\mathcal E_\mu(F)=\mathcal E_\eta(g)
                   +\sum_i a_i\mathcal E_{\mu_i}(f_i),
\end{equation} \begin{equation}\label{comp:eq-6-2}
\I_\mu(F)=\sum_i a_i \I_{\mu_i}(f_i),
 \qquad
 \operatorname{Var}_\mu(F)=\operatorname{Var}_\eta(g),
\end{equation} \begin{equation}\label{comp:eq-6-3}
\Delta_\mu(F)=\Delta_\eta(g)+\sum_i a_i\Delta_{\mu_i}(f_i).
\end{equation}

\end{proposition}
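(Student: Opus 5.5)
The plan is to compute the normalized Fourier expansion of $F$ explicitly. Once that is done, all three identities become bookkeeping over a product structure on Fourier sets.

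\textbf{Normalization.} Work on the output cube of $g$ in $\{-1,1\}$ notation with $\P(z_i=1)=(1+\eta_i)/2$. In the normalization of Section~\ref{bias:normalization} this coordinate has
\[
 \sigma_i^2(\eta)=4q_i(1-q_i)=1-\eta_i^2=v_i,
 \qquad
 \psi_i(z_i)=\frac{z_i-\eta_i}{\sqrt{v_i}}
\]
(the affine change between $\{0,1\}$ and $\{-1,1\}$ does not alter $\phi_i$). Expand $g=\sum_T\wh g_\eta(T)\psi_T(z)$ and substitute $z_i=f_i$. On block $i$,
\[
 \psi_i(f_i)=\frac{f_i-\eta_i}{\sqrt{v_i}}
 =\sum_{S_i\ne\varnothing}\frac{\wh{(f_i)}_{\mu_i}(S_i)}{\sqrt{v_i}}\,\phi_{S_i}.
\]
The blocks are disjoint, so the products $\phi_{S_1}\cdots\phi_{S_m}$ are distinct elements of the product orthonormal basis, and each set $S$ decomposes uniquely as $S=\bigcup_i S_i$. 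Writing $T(S)=\{i:S_i\ne\varnothing\}$, this yields
\[
 \wh F_\mu(S)=\wh g_\eta(T(S))\prod_{i\in T(S)}\frac{\wh{(f_i)}_{\mu_i}(S_i)}{\sqrt{v_i}} .
\]
Squaring gives $P_F(S)=P_g(T)\prod_{i\in T}\nu_i(S_i)$ with $\nu_i(S_i)=P_{f_i}(S_i)/v_i$. By Parseval, $\sum_{S_i\ne\varnothing}P_{f_i}(S_i)=\Var_{\mu_i}(f_i)=1-\eta_i^2=v_i$, so each $\nu_i$ is a probability law on nonempty subsets of block $i$. Hence the spectral law of $F$ is obtained by drawing $T$ from the law of $g$ and then independent nonempty $S_i\sim\nu_i$ for $i\in T$.

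\textbf{Entropy identity \eqref{comp:eq-6-1}.} Since $\sigma_S^2=\prod_{i\in T}\sigma_{S_i}^2$, the logarithm factors:
\[
 \ln\frac{\sigma_S^2}{P_F(S)}
 =\ln\frac{v_T}{P_g(T)}+\sum_{i\in T}\ln\frac{\sigma_{S_i}^2}{P_{f_i}(S_i)},
 \qquad v_T=\prod_{i\in T}v_i .
\]
The nonempty $S$ correspond exactly to nonempty $T$. Averaging the first term integrates out the $\nu_i$ and gives $\mathcal E_\eta(g)$. For the second term, fix $i$ and sum over $T\ni i$. The $P_g$-mass of these sets is $\I_i^\eta(g)$, and averaging over $\nu_i$ gives $\mathcal E_{\mu_i}(f_i)/v_i$. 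Together these contribute $a_i\mathcal E_{\mu_i}(f_i)$.

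\textbf{Influence, variance and deficit.} For \eqref{comp:eq-6-2}, the same averaging applied to $|S|=\sum_{i\in T}|S_i|$ gives $\I_\mu(F)=\sum_i\I_i^\eta(g)\,\E_{\nu_i}|S_i|=\sum_ia_i\I_{\mu_i}(f_i)$. Summing $P_F$ over nonempty $S$ equals summing $P_g$ over nonempty $T$, which gives $\Var_\mu(F)=\Var_\eta(g)$. For \eqref{comp:eq-6-3}, subtract and insert $\Var_{\mu_i}(f_i)=v_i$:
\[
 \sum_i a_iv_i=\sum_i\I_i^\eta(g)=\I_\eta(g).
\]
Therefore
\[
 \Delta_\mu(F)=\sum_ia_i\bigl(\I_{\mu_i}(f_i)-v_i\bigr)+\I_\eta(g)-\Var_\eta(g),
\]
which is \eqref{comp:eq-6-3}.

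The main point needing care is the normalization in the first step. One must check that the output bit's normalized basis element is exactly $(z-\eta_i)/\sqrt{v_i}$ with $\sigma_i^2=v_i$, so that the factors $v_T$ cancel between $\sigma_S^2$ and $P_F$. With the stated hypothesis $v_i>0$ there are no degenerate blocks, and everything else is routine.
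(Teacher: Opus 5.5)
Your proposal is correct and follows the paper's proof essentially step for step. It substitutes the centered, normalized block outputs into the $\eta$-biased expansion of $g$, obtains the same uniquely decomposing product formula for $\wh F_\mu(S)$, splits the logarithm the same way, and closes the deficit identity with $\sum_i a_iv_i=\I_\eta(g)$. Your explicit check that the output basis element is $(z_i-\eta_i)/\sqrt{v_i}$ with $\sigma_i^2=v_i$ is the normalization the paper uses implicitly.
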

\begin{proof}

The centered normalized output of block \(i\) expands as \[
 \frac{f_i-\eta_i}{\sqrt{v_i}}
    =\sum_{T\ne\varnothing}
           \frac{\widehat f_i(T)}{\sqrt{v_i}}\phi_T.
\] For a nonempty union \(S=\bigcup_{i\in A}T_i\), where every \(T_i\) is nonempty and contained in block \(i\), substitution into the \(\eta\)-biased Fourier expansion of \(g\) gives \begin{equation}\label{comp:eq-6-4}
\widehat F_\mu(S)
    =\widehat g_\eta(A)
        \prod_{i\in A}\frac{\widehat f_i(T_i)}{\sqrt{v_i}}.
\end{equation} There is exactly one possible \(A\) and one collection \((T_i)\) for each \(S\), so there is no interference between different decompositions. Equivalently, sample \(A\) from the squared Fourier spectrum of \(g\), and for each \(i\in A\) independently sample a nonempty \(T_i\) from the law \(\widehat f_i(T_i)^2/v_i\). The union has the squared Fourier law of \(F\).

The logarithm in the corrected entropy splits: \[
 \ln\frac{\sigma_S^2}{\widehat F_\mu(S)^2}
 =
 \ln\frac{\prod_{i\in A}v_i}{\widehat g_\eta(A)^2}
  +\sum_{i\in A}
       \ln\frac{\sigma_{T_i}^2}{\widehat f_i(T_i)^2}.
\] Taking expectations gives \eqref{comp:eq-6-1}, because the probability that \(i\in A\) is \(\I_i^\eta(g)\). Taking expected cardinalities in \eqref{comp:eq-6-4} gives the influence identity. The variance identity follows from the output distribution, or from its constant coefficient. Finally \[
 \sum_i a_iv_i=\I_\eta(g),
\] which proves \eqref{comp:eq-6-3}.

\end{proof}

\begin{corollary}[Constant-preserving composition]\label{comp:closure}

If \(g\) satisfies biased FEI+ with bit constant \(C_0\) under its induced product law and every \(f_i\) satisfies biased FEI+ with bit constant \(C_i\), then \[
 \HH_\mu^+(F)\le\max(C_0,C_1,\ldots,C_m)\Delta_\mu(F).
\]

\end{corollary}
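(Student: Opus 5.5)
The corollary should follow directly from the exact identities in Proposition~\ref{comp:identities}. Put $C=\max(C_0,C_1,\ldots,C_m)$ and work in natural logarithms, so that the hypotheses read $\mathcal E_\eta(g)\le C_0\ln2\,\Delta_\eta(g)$ and $\mathcal E_{\mu_i}(f_i)\le C_i\ln2\,\Delta_{\mu_i}(f_i)$. The weights satisfy $a_i=\I_i^\eta(g)/v_i\ge0$. Substituting the hypotheses into \eqref{comp:eq-6-1} gives
\[
 \mathcal E_\mu(F)\le C_0\ln2\,\Delta_\eta(g)+\sum_i a_iC_i\ln2\,\Delta_{\mu_i}(f_i).
\]
Each deficit is nonnegative, as noted in Section~\ref{bias:normalization}, so every constant may be replaced by $C$. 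The bracket $\Delta_\eta(g)+\sum_ia_i\Delta_{\mu_i}(f_i)$ is then exactly $\Delta_\mu(F)$ by \eqref{comp:eq-6-3}, and dividing by $\ln2$ gives the claim.

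Some side conditions need to be checked first. Proposition~\ref{comp:identities} requires $v_i>0$. If some $f_i$ is constant, substitute its value into $g$. The resulting restriction of $g$ is again a Boolean function under the product law of the remaining outputs, and the hypothesis on $g$ has to be read for that restricted gate. In the intended applications, such as read-once formulas over symmetric thresholds and parities, the gate class is closed under fixing inputs, so this is harmless. Coordinates of $g$ that are almost surely constant under $\eta$ are handled by the same convention on deterministic coordinates as in Section~\ref{bias:normalization}.

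I also need the law $\eta$ on the outputs to be the induced law. This holds because the input blocks are disjoint and each $\mu_i$ is a product measure, so the outputs $f_i$ are independent with means $\eta_i$. The quantities $\mathcal E_\eta$ and $\Delta_\eta$ for $g$ are computed with $\sigma$-normalizations $\sqrt{v_i}$. This matches the normalization used in the logarithm split inside the proof of Proposition~\ref{comp:identities}, where $\prod_{i\in A}v_i$ plays the role of $\sigma_A^2$.

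I expect no real obstacle here. The only point requiring care is the degenerate-output case together with the sign of each term: the monotone replacement of $C_0,C_i$ by $C$ is legitimate only because all deficits and all weights $a_i$ are nonnegative.
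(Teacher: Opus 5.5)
Your proof is correct and follows the paper's argument: substitute the two FEI+ hypotheses into \eqref{comp:eq-6-1}, use the nonnegativity of the weights $a_i$ and of every deficit to raise each constant to the maximum, and collapse the result with \eqref{comp:eq-6-3}. You also handle constant children as the paper does, by fixing those outputs and restricting $g$ before applying the identities.
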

\begin{proof}

This is immediate from \eqref{comp:eq-6-1} and \eqref{comp:eq-6-3}, since every deficit and \(a_i\) is nonnegative. Constant child outputs are fixed and removed before applying the identity.

\end{proof}

\begin{corollary}[Read-once threshold and parity formulas]\label{comp:threshold-formulas}

Every read-once formula built from \(T_{t,r}\) gates and parity gates, with arbitrary fan-ins and negations, satisfies \[
 \HH_\mu^+(f)\le C_T\Delta_\mu(f),\qquad C_T<744,
\] under every product measure on its original inputs. Depth and intermediate biases do not appear in the constant.

\end{corollary}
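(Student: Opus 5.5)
Induction on the formula tree, with Corollary~\ref{comp:closure} as the inductive step. At each gate we substitute the subformulas feeding it into a single gate $g$. Because the formula is read-once, these subformulas act on disjoint blocks of original inputs. Under any product measure $\mu$ on the original inputs, their outputs are therefore independent, so the gate $g$ sees a product law $\eta$ on its inputs. This law is determined by the child biases, possibly after negations. It suffices to show that every leaf and every gate satisfies biased FEI+ with constant $C_T$ under every product law. Corollary~\ref{comp:closure} then gives the bound for the whole formula, since the maximum of the constants is $C_T$. No dependence on depth enters, because the constant is not increased at each step.

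The base cases and the gate cases are as follows.
\begin{itemize}
\item A leaf is an original variable or its negation, that is, a dictator or anti-dictator. As noted after the normalization in Section~\ref{bias:normalization}, both $\mathcal E_\mu$ and $\Delta_\mu$ vanish for it.
\item A threshold gate $T_{t,r}$ satisfies FEI+ with constant $C_T$ under every product law, by Theorem~\ref{bias:strong}.
\item For parity gates, the $\eta$-biased Fourier expansion of $\prod_i z_i$ is $\prod_i(\eta_i+\sqrt{v_i}\,\phi_i)$. Hence $\wh g_\eta(S)^2=\prod_{i\in S}v_i\prod_{i\notin S}\eta_i^2$. With $\sigma_i^2=v_i$, each term $\ln(\sigma_S^2/\wh g_\eta(S)^2)$ equals $\sum_{i\notin S}\ln(1/\eta_i^2)$, which does not vanish. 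So I would not treat parity as a gate directly. Instead, I would apply the composition principle with $g$ equal to a two-input parity and argue by associativity, or more simply check the inequality $\HH_\eta^+(\text{parity})\le C\Delta_\eta$ directly, with the sum weighted by $\prod_{i\notin S}\eta_i^2$.
\end{itemize}

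The parity gate is the main obstacle. I expect the direct check above to fail for biased inputs, since a product of independent biased bits need not satisfy FEI+. So I would instead fold parity into the children. Multiplying the children's $\pm1$ outputs block by block, under the disjoint-block product measure, gives Fourier coefficients that are products of the children's coefficients on disjoint supports. The relevant quantities then split across the blocks. Concretely, $F=\prod_i f_i$ has $\wh F_\mu(\bigcup T_i)=\prod_i\wh{f_i}(T_i)$, where empty $T_i$ contribute $\eta_i$. The term $\mathcal E_\mu(F)$ then decomposes as a sum over blocks weighted by the other blocks' spectra. The empty-block factors $\eta_i^2$ contribute $\ln(1/\eta_i^2)$ costs, which must be charged against deficits. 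I would handle this by showing that the two-input parity with biases $\eta_1,\eta_2$ is itself a case where the cost $\sum$ is dominated using $\Delta$ contributions of the children. If this fails, I would fall back on representing parity via the identity approach of the paper and check the two-input parity gate numerically against $C_T$.

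Negations are handled directly. Negating an input of a gate changes only the sign of some basis functions, and negating an output flips the sign of all nonconstant coefficients. Neither changes $\mathcal E$ or $\Delta$. The same holds for deleting constant children, as in the proof of Corollary~\ref{comp:closure}.

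Finally, the uniform statement of Theorem~\ref{thm:formulas-intro} follows from two comparisons. The inequality $\HH\le\HH^++h_2(\Var)$ and the bound $h_2(v)\le 2\I$ from Lemma~\ref{lem:levelentropy} give $\HH\le744\I$, once one uses $\Delta\le\I$ and some slack in $C_T$. Adding $2\I$ and comparing with $\I-\Var$ gives the constant $747$.
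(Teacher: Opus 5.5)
\paragraph{Verdict.} Your overall architecture matches the paper's proof. You use leaf-to-root induction with Corollary~\ref{comp:closure}, zero entropy and deficit at literals, Theorem~\ref{bias:strong} for threshold gates, and invariance under negations and deletion of constant children. There is, however, a genuine gap at the parity gate. You never establish FEI+ for it, and your expectation that it fails under biased inputs is wrong.

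\paragraph{The parity gate.} Your own formula $\wh g_\eta(S)^2=\sigma_S^2\prod_{i\notin S}a_i$, with $a_i=\eta_i^2$ and $\sigma_i^2=1-a_i$, finishes the job once you sum it instead of inspecting individual logarithms. Since $\sum_{S\not\ni i}\wh g_\eta(S)^2=a_i$ and the empty set has mass $Z=\prod_ia_i$,
\[
 \mathcal E_\eta(\mathrm{XOR}_r)=\Phi:=\sum_i(a_i-Z)\ln\frac1{a_i},
 \qquad
 \Delta_\eta(\mathrm{XOR}_r)=R:=\sum_i(1-a_i)-1+Z .
\]
The nonvanishing of the terms $\ln(1/\eta_i^2)$ is irrelevant, because the deficit is nonzero as well. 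The inequality $\Phi\le2R$ follows by induction on $r$. Appending an input with $a=b$ raises $R$ by $(1-b)(1-Z)$. It raises $\Phi$ by $(1-b)Z\ln(1/Z)+b(1-Z)\ln(1/b)$, which is at most $2(1-b)(1-Z)$ because $s\ln(1/s)\le1-s$. This is Lemma~\ref{comp:elementary-gates}, and it gives the constant $2/\ln2<C_T$ uniformly in fan-in and biases.

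Even the two-input route you mention works immediately: $\Phi=a_1(1-a_2)\ln(1/a_1)+a_2(1-a_1)\ln(1/a_2)\le2(1-a_1)(1-a_2)=2R$. Associativity together with Corollary~\ref{comp:closure} would then cover all fan-ins.

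\paragraph{Why your fallbacks do not help.} Folding the parity into the children is just \eqref{comp:eq-6-1} with $g=\mathrm{XOR}$. The charges $\ln(1/\eta_i^2)$ you would need to pay are exactly $\mathcal E_\eta(\mathrm{XOR})$, and they must be paid by exactly $\Delta_\eta(\mathrm{XOR})$, so folding avoids nothing. A numerical check cannot cover a continuum of biases with unbounded fan-in.

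\paragraph{Your last paragraph.} This goes beyond the corollary, but its constants do not come out as stated.
\begin{itemize}
\item[(a)] Since $C_T=80+460/\ln2\approx743.64$, the chain $\HH\le\HH^++h_2(\Var)\le C_T\Delta+2\I$ gives only about $745.64\,\I$; the slack in $C_T$ is too small to absorb $2\I$. The paper uses the exact identity $\HH=\HH^++h_0$, where $h_0=(\E f)^2\log(1/(\E f)^2)$. Combined with $h_0\le(\log e)\Var$ and $\Delta+\Var=\I$, this yields $\max(C_T,\log e)\,\I=C_T\,\I$.
\item[(b)] For the constant $747$, the term $2\I$ cannot be compared with $\I-\Var$ at all: a dictator has $\I=1$ and $\Delta=0$. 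One needs Lemma~\ref{comp:empty-atom}, which gives $h_0\le\kappa\Delta$ with $\kappa<2.553$.
\end{itemize}
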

\begin{proof}

Literals have zero corrected entropy and deficit. Threshold gates satisfy Theorem~\ref{bias:strong}; parity gates satisfy the smaller constant in Lemma~\ref{comp:elementary-gates}. Apply Corollary~\ref{comp:closure} from the leaves to the root.

\end{proof}

\begin{corollary}[Hybrid formulas with LTF blocks]\label{comp:hybrid}

By Theorem~\ref{thm:A}, every linear threshold function satisfies \(\HH(h)\le C_L\Delta(h)\), with \(C_L=4\Creg\). Take disjoint uniform-input LTF terminal blocks and combine them in a read-once tree in which:

\begin{itemize}
\item
  symmetric threshold and parity gates accept inputs of any means;
\item
  an arbitrary weighted LTF gate may accept balanced nonconstant child outputs;
\item
  fixed-arity arbitrary Boolean gates may be included with their established product-biased FEI+ constants.
\end{itemize}

Then the complete function has \[
 \HH^+(F)\le C \Delta(F),
\] where \(C\) is the maximum of the applicable constants, including \(C_L\) and \(C_T\). Consequently \(\HH(F)\le C\I(F)\), since \(C\ge C_L>\log_2e\), and \(\HH(F)\le(C+\kappa)\Delta(F)\).

\end{corollary}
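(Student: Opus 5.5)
The plan is to induct on the read-once tree from the leaves to the root, using Corollary~\ref{comp:closure} at every internal gate. The statement holds for all product measures in the strengthened form $\HH_\mu^+\le C\Delta_\mu$, and the composition identities of Proposition~\ref{comp:identities} are exact. So it suffices to check, for each kind of node, that the node's own function satisfies biased FEI+ with a constant at most $C$ under the product law induced on its inputs by its children's outputs.

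\emph{Leaves.} A terminal block $h$ is an LTF on its own disjoint uniform inputs. Theorem~\ref{thm:A} gives $\HH(h)\le\Creg\D(h)$, and Lemma~\ref{lem:centering} gives $\D(h)\le 4(\I(h)-\Var(h))$. Under the uniform measure every $\sigma_i=1$, so $\HH_\mu^+(h)=\HH^+(h)\le\HH(h)$ and $\Delta_\mu(h)=\I(h)-\Var(h)$. This yields FEI+ with constant $C_L=4\Creg$. Literals have zero corrected entropy and deficit, so they are included trivially.

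\emph{Internal gates.}
\begin{itemize}
\item For a symmetric threshold gate or a parity gate, whatever the children's means, I invoke Theorem~\ref{bias:strong} and the parity case of Lemma~\ref{comp:elementary-gates}.
\item For a weighted LTF gate whose children are balanced and nonconstant, the induced law on its inputs is uniform. The leaf argument then applies verbatim with constant $C_L$.
\item For fixed-arity arbitrary gates, the hypothesis supplies their biased FEI+ constants directly.
\end{itemize}
Constant children are fixed and removed first, as in Corollary~\ref{comp:closure}, so that every $v_i>0$ and the identities apply. By induction every subformula satisfies FEI+ with constant $C$, the maximum of the constants used, and hence so does $F$.

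\emph{Final consequences.} On uniform original inputs $\HH^+(F)\le C\Delta(F)$. To get full entropy I add the empty-set term $\wh F(\varnothing)^2\log(1/\wh F(\varnothing)^2)$, which is at most $h_2$-type in $v=\Var(F)$, namely $(1-v)\log\frac1{1-v}\le(\log e)v$. Then
\[
\HH(F)\le C\bigl(\I(F)-\Var(F)\bigr)+(\log e)\Var(F)\le C\,\I(F),
\]
where the last step uses $C\ge C_L>\log e$. For the deficit form, bound $(\log e)v$ by $\kappa\Delta(F)$. This is where I expect the only real obstacle: $\Var(F)$ need not be controlled by $\Delta(F)$ in general, so for $\kappa$ I would take the constant from the paper's variance-deficit comparison, presumably a small-mass estimate like Lemma~\ref{union:smallmass} applied to the empty coefficient. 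I would identify that lemma and check that it gives a universal $\kappa$.
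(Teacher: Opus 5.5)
\textbf{Main argument.} Your induction is the paper's argument, and that part is fine:
\begin{itemize}
\item Corollary~\ref{comp:closure} is applied at each gate.
\item Theorem~\ref{thm:A} with Lemma~\ref{lem:centering} handles the uniform-input LTF blocks and the weighted LTF gates, whose incoming law is uniform by the balance hypothesis.
\item Theorem~\ref{bias:strong} and Lemma~\ref{comp:elementary-gates} handle the symmetric threshold and parity gates.
\item $h_0(F)\le(\log e)\Var(F)$ together with $C\ge\log e$ gives $\HH(F)\le C\I(F)$.
\end{itemize}

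\textbf{The gap.} The problem is the last claim, $\HH(F)\le(C+\kappa)\Delta(F)$. You propose to bound $(\log e)\Var(F)$ by $\kappa\Delta(F)$, and this cannot work. A single terminal block $F=x_1$ has $\Var(F)=1$ and $\Delta(F)=0$.

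What must be controlled is the empty-atom entropy $h_0(F)=(\E F)^2\log(1/(\E F)^2)$ itself, which vanishes for that dictator. Passing through $h_0\le(\log e)\Var$ discards exactly the information you need.

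Lemma~\ref{union:smallmass} is also the wrong tool. It controls the singleton entropy and the mass $W$ on Fourier sets of size at least two, not the constant coefficient.

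\textbf{The missing ingredient.} You need Lemma~\ref{comp:empty-atom}, a general fact about Boolean functions on the uniform cube, independent of the formula structure. Put $x=|\E F|$ and $p=(1-x)/2$, so that $\Var(F)=1-x^2$. The edge-isoperimetric inequality $\I(F)\ge2p\log(1/p)$ then gives a lower bound on the deficit in terms of the bias alone:
\[
 \Delta(F)\ge\frac{1-x}{\ln2}\bigl[-\ln(1-x)-x\ln2\bigr]
 \ge\frac{1-x}{\ln2}\Bigl(ax+\frac{x^2}{2}\Bigr),\qquad a=1-\ln2.
\]
Combine this with $h_0(F)=-2x^2\ln x/\ln 2$ and $-x\ln x\le\sqrt x(1-x)$. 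This gives
\[
 \frac{h_0(F)}{\Delta(F)}\le\frac{2\sqrt x}{a+x/2}\le\sqrt{\frac2a}=\kappa,
\]
where the last step is AM--GM. Adding $h_0(F)\le\kappa\Delta(F)$ to $\HH^+(F)\le C\Delta(F)$ yields the deficit form, which is Corollary~\ref{comp:full-entropy}.

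Without this bias-only lower bound on $\Delta$, your proposal establishes only the influence form $\HH(F)\le C\I(F)$ and the FEI+ bound $\HH^+(F)\le C\Delta(F)$.
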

\begin{proof}

Apply Corollary~\ref{comp:closure} at each gate. The incoming product law at every weighted LTF gate is uniform by the balance hypothesis. The full entropy conclusions follow from Corollary~\ref{comp:full-entropy}.

\end{proof}

\subsection{AND, OR, and parity}\label{comp:elementary-section}

The following elementary gate bounds will also be useful.

\begin{lemma}[Elementary gate constants]\label{comp:elementary-gates}

Under every product measure, \begin{equation}\label{comp:eq-7-1}
\HH_\mu^+(\mathrm{AND}_r),\,\HH_\mu^+(\mathrm{OR}_r)
       \le\left(2+\frac4{\ln2}\right)\Delta_\mu,
 \qquad
 \HH_\mu^+(\mathrm{XOR}_r)\le\frac2{\ln2}\Delta_\mu.
\end{equation} The AND/OR constant \(2+4/\ln2=7.770780\ldots\) is sharp over product laws and arities.

\end{lemma}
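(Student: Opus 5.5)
The plan is to compute everything in closed form, since each of these gates has an explicit product-measure Fourier expansion. Write $y_i=q_i+\tfrac{\sigma_i}{2}\phi_i$, which follows from the definition of $\phi_i$ in Section~\ref{bias:normalization}.

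\textbf{AND, setting up the closed form.} Put $f=2\prod_i y_i-1$, $Q=\prod_iq_i$ and $c_i=(1-q_i)/q_i=\sigma_i^2/(4q_i^2)$. Expanding the product gives $\wh f_\mu(S)^2=4Q^2\prod_{i\in S}c_i$ for $S\ne\varnothing$. Summing the product generating function $\prod(1+c_i)=1/Q$, and its derivatives in the $c_i$, yields
\[
 V_\mu=4Q(1-Q),\qquad
 \sum_{S\ni i}P(S)=4Q(1-q_i),\qquad
 \I_\mu=4Q\sum_i(1-q_i).
\]
The logarithmic weight is $\ln(\sigma_S^2/P(S))=-\ln4-2\sum_{i\notin S}\ln q_i$. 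Substituting, with $\ell_i=-\ln q_i\ge0$, gives
\[
 \mathcal E_\mu(f)=-V_\mu\ln4+\sum_i\ell_i\bigl(2V_\mu-8Q(1-q_i)\bigr),
 \qquad
 \Delta_\mu(f)=4Q\Bigl(\sum_i(1-q_i)-(1-Q)\Bigr).
\]

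\textbf{AND, the scalar inequality.} The claim becomes an explicit inequality in $q_1,\dots,q_r$:
\[
 \mathcal E_\mu\le(2\ln2+4)\Delta_\mu .
\]
I would first reduce it to a comparison that is additive in the coordinates. Set $x_i=1-q_i$, and note $\ell_i\ge x_i$ and $1-Q\le\sum x_i$. Fixing $Q$ and applying Lagrange/convexity in the $\ell_i$ under $\sum\ell_i=-\ln Q$, I expect the ratio $\mathcal E_\mu/\Delta_\mu$ to be maximized either at $r=2$ with one coordinate degenerate, or in the limit where all $q_i\to1$ with $Q$ fixed. The latter gives $\sum x_i\approx\ln(1/Q)$, $\Delta\approx4Q(\ln(1/Q)-1+Q)$ and $\mathcal E\approx 8Q(1-Q)\ln(1/Q)-4Q(1-Q)\ln4$. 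Then the limit $Q\to1$ should produce exactly the constant $2\ln2+4$, which would certify sharpness; the expansion $\ln(1/Q)-1+Q\approx(1-Q)^2/2$ has to be matched against the entropy expansion. This one-variable optimization, and checking that the extremal configuration is really the symmetric or limiting one rather than an unbalanced one, is the main obstacle. What I would try first is to show that, for fixed $Q$, the difference $(2\ln2+4)\Delta-\mathcal E$ is convex in each $\ell_i$ along the constraint, so that its minimum sits at the boundary.

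\textbf{OR.} This follows from the AND case by negating all inputs and the output. That map replaces $q_i$ by $1-q_i$, permutes nothing in the spectrum, and preserves $\sigma_i$, $\mathcal E_\mu$ and $\Delta_\mu$.

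\textbf{XOR.} Write each $\pm1$ literal as $m_i+\sigma_i\phi_i$, with $m_i=2q_i-1$ and $m_i^2+\sigma_i^2=1$. Then $P(S)=\prod_{i\in S}\sigma_i^2\prod_{i\notin S}m_i^2$, which is the law of a random set including $i$ independently with probability $s_i=\sigma_i^2$. This gives $\mathcal E_\mu=\sum_i(1-s_i)\ln\frac1{1-s_i}$ restricted to nonempty sets, together with $\I_\mu=\sum s_i$ and $V_\mu=1-\prod(1-s_i)$. I would bound $\mathcal E_\mu\le2\Delta_\mu$ (in nats) by induction on $r$, using the exact composition identity of Proposition~\ref{comp:identities} with $\mathrm{XOR}_r=\mathrm{XOR}_2(\mathrm{XOR}_{r-1},y_r)$ and Corollary~\ref{comp:closure}. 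This reduces the XOR case to the two-input gate, where it is a two-variable calculus inequality.
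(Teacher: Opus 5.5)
\textbf{There is a genuine gap in the AND/OR part, and a computational slip feeds into it.}

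\emph{The slip.} Since $\sigma_i^2=4q_i(1-q_i)$ and $P(S)=4\prod_{i\in S}(\sigma_i^2/4)\prod_{i\notin S}q_i^2$, the weight is $\ln(\sigma_S^2/P(S))=(|S|-1)\ln4+2\sum_{i\notin S}\ell_i$. You dropped the $|S|\ln4$. The correct closed form is
\[
 \mathcal E_\mu=(\ln4)\Delta_\mu+8Q\Phi,\qquad
 \Phi=\sum_i(q_i-Q)\ell_i,\qquad
 \Delta_\mu=4QR,\quad R=\sum_i(1-q_i)-(1-Q).
\]
Your expression is smaller than this by $(\ln4)\I_\mu$. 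Two consequences follow.
\begin{itemize}
\item Proving your displayed target would not prove the lemma.
\item Your sharpness limit contradicts your own formula. There the term $-4Q(1-Q)\ln4$ is of order $1-Q$, while $\Delta_\mu$ is of order $(1-Q)^2$, so the ratio tends to $-\infty$. With the corrected formula the same limit does give $\ln4+4$ nats, i.e.\ $2+4/\ln2$ bits. That is a valid alternative to the paper's two-bit example.
\end{itemize}

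\emph{The gap.} The inequality itself is never proved. You call it the main obstacle and only conjecture where the extremum lies. Once the formula is corrected, the claim is exactly the additive scalar inequality $\Phi\le2R$. It can be closed in any of three ways.
\begin{itemize}
\item \emph{Paper's route (induction on coordinates).} Appending a coordinate with parameter $b$ to a family with product $Z$ increases $R$ by $(1-b)(1-Z)$. It increases $\Phi$ by $(1-b)Z\ln(1/Z)+b(1-Z)\ln(1/b)$, which is at most $2(1-b)(1-Z)$ by $s\ln(1/s)\le1-s$.
\item \emph{Your convexity idea, completed.} One has $\Phi-2R=\sum_ig(\ell_i)-g\bigl(\sum_i\ell_i\bigr)$ for $g(\ell)=\ell e^{-\ell}-2(1-e^{-\ell})$. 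Since $g(0)=0$ and $g''(\ell)=\ell e^{-\ell}\ge0$, $g$ is superadditive, which gives $\Phi\le2R$.
\item \emph{Your XOR device, reused.} Writing $\mathrm{AND}_r=\mathrm{AND}_2(\mathrm{AND}_{r-1},y_r)$ and applying Corollary~\ref{comp:closure} reduces everything to $r=2$. There $R=(1-q_1)(1-q_2)$, and $\Phi\le2R$ is $s\ln(1/s)\le1-s$ applied twice.
\end{itemize}

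\textbf{Your XOR part is sound.} The spectrum is right, and the induction via Proposition~\ref{comp:identities} and Corollary~\ref{comp:closure} is valid. The two-input base case is immediate from the same inequality. With $a_i=m_i^2$,
\[
\mathcal E_\mu=a_1(1-a_2)\ln(1/a_1)+a_2(1-a_1)\ln(1/a_2),\qquad \Delta_\mu=(1-a_1)(1-a_2),
\]
and each term of $\mathcal E_\mu$ is at most $\Delta_\mu$. This composition induction is in effect the paper's one-coordinate induction. The paper applies that induction to parity directly, since for parity $\mathcal E_\mu=\Phi$ and $\Delta_\mu=R$. Your OR reduction by negating inputs and output is also fine.
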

\begin{proof}

For \(a_i\in(0,1]\), put \[
 Z=\prod_i a_i,\quad
 R=\sum_i(1-a_i)-1+Z,\quad
 \Phi=\sum_i(a_i-Z)\ln(1/a_i).
\] The scalar inequality \begin{equation}\label{comp:eq-7-2}
\Phi\le2R
\end{equation} follows by induction. Appending \(b\in(0,1]\) changes these quantities by \[
 \Delta R=(1-b)(1-Z),
\] \[
 \Delta\Phi=(1-b)Z\ln(1/Z)+b\ln(1/b)(1-Z)
            \le2(1-b)(1-Z),
\] using \(s\ln(1/s)\le1-s\).

For an OR gate take \(a_i=1-q_i\), so the complementary \(0/1\) indicator is \(G=\prod_i(1-y_i)\). Its \(\pm1\) Fourier probabilities on nonempty sets are \[
 P(S)=4\prod_{i\in S}\frac{\sigma_i^2}{4}
              \prod_{i\notin S}a_i^2.
\] Direct multiplication gives \[
 \I=4Z\sum_i(1-a_i),\qquad V=4Z(1-Z),\qquad \Delta=4ZR,
\] and \[
 \mathcal E_\mu=(2\ln2)\Delta+8Z\Phi
        \le(2\ln2+4)\Delta.
\] Input and output negations handle AND. For OR on two bits with \(q_1=q_2=q\to0\), \(R=q^2\) and \(\Phi=2q(1-q)\ln(1/(1-q))\sim2q^2\). The bit ratio tends to \(2+4/\ln2\), proving sharpness.

For parity write \(\nu_i=2q_i-1\) and \(a_i=\nu_i^2\). Its spectrum is \[
 P(S)=\sigma_S^2\prod_{i\notin S}a_i.
\] The same scalar quantities satisfy \(\Delta=R\), \(\mathcal E_\mu=\Phi\), and \eqref{comp:eq-7-2} proves the parity estimate. Zero \(a_i\) follow by continuity.

\end{proof}

\subsection{From FEI+ to full strong entropy}\label{comp:conversion-section}

\begin{lemma}[The empty coefficient and the spectral deficit]\label{comp:empty-atom}

For every uniform-input Boolean \(f\), \begin{equation}\label{comp:eq-8-1}
h_0(f):=(\mathbb Ef)^2\log_2\frac1{(\mathbb Ef)^2}
       \le\kappa \Delta(f),
 \qquad \kappa=\sqrt{\frac2{1-\ln2}}<2.553.
\end{equation}

\end{lemma}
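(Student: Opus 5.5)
The plan is to reduce \eqref{comp:eq-8-1} to a one-variable inequality in the minority probability, using only the edge-isoperimetric bound already used in Lemma~\ref{lem:levelentropy}. Put $p=\min\{\P(f=1),\P(f=-1)\}\in[0,1/2]$. Then
\[
 (\E f)^2=(1-2p)^2,\qquad \Var(f)=4p(1-p),
\]
so $h_0(f)$ depends only on $p$. Write $x=1-2p$, so that $h_0(f)=x^2\log(1/x^2)$. The cases $p=0$ (constant $f$) and $p=1/2$ (balanced $f$) give $h_0=0$ and are immediate.

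For the deficit I will use $\Delta(f)=\I(f)-\Var(f)$ together with the edge-isoperimetric inequality $\I(f)\ge2p\log(1/p)$, the same inequality invoked in Lemma~\ref{lem:levelentropy}. This gives
\[
 \Delta(f)\ge\psi(p):=2p\log\frac1p-4p(1-p).
\]
Equality holds for the conjunction of two bits at $p=1/4$. It therefore suffices to prove the scalar inequality
\[
 (1-2p)^2\log\frac1{(1-2p)^2}\le\kappa\,\psi(p)\qquad(0<p<1/2).
\]

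I will verify this by splitting $(0,1/2)$ into regions.

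\textbf{Small $p$.} The left side is at most $4p\log e+O(p^2)$, while $\psi(p)\sim2p\log(1/p)$. A crude estimate therefore suffices for $p\le1/8$, say.

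\textbf{$p$ near $1/2$.} Expand $\psi$ in $x$. One has $2p\log(1/p)=(1-x)\bigl(1-\log(1-x)\bigr)$ and $4p(1-p)=1-x^2$, so
\[
 \psi=x\Bigl(\tfrac1{\ln2}-1\Bigr)+O(x^2),
\]
which is linear in $x$, whereas the left side is $2x^2\log(1/x)$. The inequality therefore holds with room to spare near $x=0$.

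\textbf{Middle range.} Numerically the ratio is about $2$ at $p=1/4$ and about $2.2$ at $p=0.4$, so the ratio of the two sides stays below $\kappa$ throughout. I expect the exact constant $\kappa=\sqrt{2/(1-\ln2)}$ to come out of a cleaner rearrangement. In the variable $x$, I would divide both sides by $x$ and bound $-\ln(1-x)\ge x+x^2/2$ in $\psi$. I would then look for an AM--GM step that balances the term $x\ln(1/x)$ against the two surviving terms, which are the one proportional to $(1-\ln2)$ and a quadratic one. An AM--GM of this kind naturally produces a square root of $2/(1-\ln2)$.

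The main obstacle is this middle-range scalar inequality: making it rigorous with the stated constant, rather than with a numerically checked one. I would first try the substitution $x=1-2p$, reduce the claim to $2x\ln(1/x)\le\kappa\ln2\cdot\psi/x$, and use convexity of $\ln(1/x)$ together with a Taylor lower bound for $\psi/x$. If the isoperimetric lower bound alone turns out to be too weak somewhere near $p\approx0.4$, the fallback is a second lower bound for $\Delta$. It comes from $\Delta\ge\sum_i\Inf_i(1-\Inf_i)/2$, combined with $\Inf_i\le2p$ and the level-one mass estimate $W_1\le2p\,\I$ from the proof of Lemma~\ref{union:smallmass}. The plan is to take the maximum of the two lower bounds on the problematic interval.
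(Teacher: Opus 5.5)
Your reduction matches the paper's. With $x=1-2p$, the edge-isoperimetric inequality gives $\Delta(f)\ge\psi=\frac{1-x}{\ln2}\bigl(-\ln(1-x)-x\ln2\bigr)$, and the lemma becomes a scalar inequality on $0<x<1$, to be attacked with $-\ln(1-x)\ge x+x^2/2$ and an AM--GM step. The gap is that you never prove that scalar inequality, and it is the whole content of the lemma beyond isoperimetry. The two end regions are only asymptotic sketches, the middle range rests on numerics, and you leave the exact constant as an expectation. The tool you name for the middle range also points the wrong way: the tangent lines supplied by convexity of $\ln(1/x)$ are lower bounds. What is needed is an upper bound on $\ln(1/x)$ that carries the factor $1-x$ present in $\psi$, since both sides of your divided inequality vanish at $x=1$.

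The missing estimate is
\[
 x\ln\frac1x\le\sqrt x\,(1-x)\qquad(0<x<1),
\]
which under $x=e^{-2t}$ reads $t\le\sinh t$. Put $a=1-\ln2$ and use your bound $-\ln(1-x)\ge x+x^2/2$. Then
\[
 \frac{h_0(f)}{\Delta(f)}\le\frac{2x\ln(1/x)}{(1-x)(a+x/2)}
 \le\frac{2\sqrt x}{a+x/2}\le\sqrt{\frac2a}=\kappa,
\]
where the last step is AM--GM, $a+x/2\ge\sqrt{2ax}$. This holds uniformly on $(0,1)$, so you need neither the region splitting nor the fallback lower bound for $\Delta$. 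The ratio $h_0/\psi$ actually peaks at about $2.10$ near $p\approx0.32$. At $p=0.4$ it is about $1.90$, not $2.2$.
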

\begin{proof}

Let \(p=\min(\Pr(f=1),\Pr(f=-1))\), \(x=1-2p=|\mathbb Ef|\). The edge-isoperimetric bound is \begin{equation}\label{comp:eq-8-2}
\I(f)\ge2p\log_2(1/p).
\end{equation} One direct proof of \eqref{comp:eq-8-2} uses entropy tensorization for the minority indicator \(u\): \(\operatorname{Ent}(u)\le\sum_i\mathbb E\operatorname{Ent}_i(u)\). Its left side is \(p\ln(1/p)\); each pivotal fiber contributes \((\ln2)/2\), giving the right side \((\ln2)\I(f)/2\). Entropy tensorization itself follows by induction from the conditional entropy identity and convexity of the entropy functional; it is the standard finite product entropy inequality.

As \(V=1-x^2\), \eqref{comp:eq-8-2} gives \[
 \Delta(f)\ge
 \frac{1-x}{\ln2}
      [-\ln(1-x)-x\ln2].
\] For \(0<x<1\), put \(a=1-\ln2>0\). Then \[
 -\ln(1-x)-x\ln2\ge ax+x^2/2.
\] Also \[
 -x\ln x\le\sqrt x(1-x).
\] To check the latter, set \(x=e^{-2t}\): it becomes \(t\le\sinh t\). Therefore \[
 \frac{h_0(f)}{\Delta(f)}
 \le\frac{-2x\ln x}{(1-x)(a+x/2)}
 \le\frac{2\sqrt x}{a+x/2}
 \le\sqrt{\frac2a}=\kappa,
\] where the last step is the arithmetic-geometric mean inequality. The endpoints \(x=0,1\) have \(h_0=0\), and follow directly or by continuity.

\end{proof}

\begin{corollary}[Full strong entropy]\label{comp:full-entropy}

If \(\HH^+(f)\le C \Delta(f)\) on the uniform cube, then \[
 \HH(f)\le(C+\kappa)\Delta(f).
\] Also \[
 \HH(f)\le\max(C,\log_2e)\,\I(f),
\] since \(h_0(f)\le(\log_2e)V(f)\).

\end{corollary}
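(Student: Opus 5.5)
The plan is to split the full uniform-cube entropy into its empty-set term and its nonempty-set part,
\[
 \HH(f)=h_0(f)+\HH^+(f),\qquad h_0(f)=(\E f)^2\log_2\frac1{(\E f)^2},
\]
and to bound the two pieces separately, using the hypothesis $\HH^+(f)\le C\Delta(f)$ for the second. On the uniform cube $\sigma_i=1$, so $\HH^+$ is exactly $\sum_{S\ne\varnothing}\wh f(S)^2\log(1/\wh f(S)^2)$. The constant coefficient contributes precisely $h_0(f)$, and this gives the displayed decomposition.

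\emph{Deficit bound.} Lemma~\ref{comp:empty-atom} gives $h_0(f)\le\kappa\Delta(f)$. Adding this to the hypothesis yields $\HH(f)\le(C+\kappa)\Delta(f)$ directly.

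\emph{Influence bound.} Here I would use $\Delta(f)=\I(f)-V(f)$, where $V=V(f)=1-(\E f)^2$. The stated auxiliary inequality $h_0(f)\le(\log_2e)V$ follows from $x\ln(1/x)\le 1-x$ with $x=(\E f)^2$, after dividing by $\ln2$. Then
\[
 \HH(f)\le C(\I-V)+(\log_2 e)V\le\max(C,\log_2 e)\bigl((\I-V)+V\bigr)=\max(C,\log_2e)\,\I(f).
\]
The middle step is valid because both $\I-V=\Delta\ge0$ and $V\ge0$, so each term may be bounded using the larger of the two coefficients.

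\emph{Edge cases.} If $f$ is constant, then all quantities vanish. The only nontrivial ingredient is Lemma~\ref{comp:empty-atom}, which is already proved, so I expect no real obstacle beyond checking the elementary inequality $x\ln(1/x)\le1-x$.
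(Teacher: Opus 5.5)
Your proposal is correct and matches the paper's proof: you split off the empty coefficient, $\HH=h_0+\HH^+$, and apply Lemma~\ref{comp:empty-atom} for the deficit bound. For the influence bound you combine $h_0\le(\log_2 e)V$, which you correctly derive from $x\ln(1/x)\le 1-x$, with $\Delta+V=\I$ and the nonnegativity of both terms.
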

\begin{proof}

The first estimate follows from Lemma~\ref{comp:empty-atom}. The second follows from \(\HH=\HH^++h_0\), \(h_0\le(\log_2e)V\), and \(\Delta+V=\I\).

\end{proof}

Apply these facts with \(C=C_T<744\) to obtain the stated clean formula bounds \(\HH(f)\le744I(f)\) and \(\HH(f)\le747\Delta(f)\).

\subsection{Balanced regular formulas and mixed gate classes}

The uniform-input results also give a smaller constant when every nonconstant output below the root is balanced.

\begin{corollary}\label{comp:balanced-regular}
Let $f$ be a read-once formula whose gates are regular Boolean functions or character multiples of regular Boolean functions. Suppose every nonconstant gate output below the root is balanced under the uniform measure on the original inputs. Then
\[
 \HH(f)\le4\Creg\bigl(\I(f)-\Var(f)\bigr).
\]
In particular this applies to formulas of arbitrary depth and fan-in over balanced weighted threshold gates, with an arbitrary weighted threshold at the root.
\end{corollary}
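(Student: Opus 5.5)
The plan is to combine the uniform-measure FEI+ bound for each gate with the exact composition identities of Proposition~\ref{comp:identities}, and then pass from $\HH^+$ to the full entropy $\HH$ at the end.

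First I will dispose of constant gate outputs. Such a child is fixed and then removed, as in the proof of Corollary~\ref{comp:closure}; this restriction keeps the gate in the regular class, or in the class of character multiples of regular functions. Every remaining child output below the root is balanced. So at each internal gate $g$, the induced product law $\eta$ on the outputs of its children is uniform.

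Second, the gate estimate. Each gate $g$ is $\chi_U h$ with $h$ regular, so Theorem~\ref{thm:A} gives $\HH(g)\le4\Creg\bigl(\I(g)-\Var(g)\bigr)$ under the uniform law. This bounds $\HH^+(g)\le\HH(g)$. On the uniform cube, $\HH^+(g)$ coincides with $\HH^+_\eta(g)$, because every factor $4q_i(1-q_i)$ equals $1$. Hence every gate satisfies biased FEI+ with bit constant $4\Creg$ under its induced law. Literals have zero corrected entropy and zero deficit.

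Third, I will apply Corollary~\ref{comp:closure} inductively from the leaves to the root. At every node the children are balanced, so the hypotheses are exactly those supplied by the second step. The conclusion is
\[
 \HH^+(f)\le4\Creg\,\Delta(f)=4\Creg\bigl(\I(f)-\Var(f)\bigr)
\]
under the uniform measure on the original inputs. The root itself may be unbalanced: its own law plays no role, because only the input laws of each gate enter the argument.

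The main obstacle is the final step, from $\HH^+(f)$ to $\HH(f)$. Corollary~\ref{comp:full-entropy} would add $\kappa$ to the constant, which the statement does not allow. My first attempt avoids the loss by working with full entropy only at the root gate $g_0$. All children of $g_0$ are balanced, so substituting into \eqref{comp:eq-6-4} gives $\wh f(\varnothing)=\wh g_0(\varnothing)$. Therefore
\[
 \HH(f)=h_0(g_0)+\HH^+_\eta(g_0)+\sum_i a_i\HH^+(f_i).
\]
The first two terms together equal $\HH(g_0)$, which is at most $4\Creg\Delta(g_0)$ by Theorem~\ref{thm:A}. Each subformula $f_i$ is handled by the induction from the third step. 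Adding the pieces and using \eqref{comp:eq-6-3} then gives $\HH(f)\le4\Creg\bigl(\I(f)-\Var(f)\bigr)$ with no extra additive constant.
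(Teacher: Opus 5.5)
Your proposal is correct and is essentially the paper's argument. The paper applies the full-entropy identity $\HH(g(f_1,\ldots,f_m))=\HH(g)+\sum_i\Inf_i(g)\HH(f_i)$ at every gate, together with \eqref{comp:eq-6-3} and Theorem~\ref{thm:A}; since balanced children satisfy $\HH(f_i)=\HH^+(f_i)$, your FEI+ induction below the root followed by a full-entropy step at the root is the same computation organized in two stages.
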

\begin{proof}
Remove constant children by restriction; the gate classes are closed under this operation. If the nonconstant children $f_i$ are balanced, then $v_i=1$, and the spectral substitution in \eqref{comp:eq-6-4} gives the full entropy identity
\[
 \HH(g(f_1,\ldots,f_m))
   =\HH(g)+\sum_i\Inf_i(g)\HH(f_i).
\]
Combine it with \eqref{comp:eq-6-3} and Theorem~\ref{thm:A}, and proceed from the leaves to the root.
\end{proof}

The same argument allows the other uniform gate classes proved above. More generally, consider a read-once tree with uniform original inputs. At a gate whose nonconstant children are balanced, one may use any of the following classes with the indicated FEI+ constant:
\[
\begin{array}{ll}
 \text{character multiples of regular functions:}
    &4\Creg,\\[2pt]
 \text{character multiples of the combinations in Theorem~\ref{thm:B}:}
    &4\Cmulti,\\[2pt]
 \text{unions or intersections as in Theorem~\ref{thm:readk-intro}:}
    &132+\dfrac{118}{5}\log k+\dfrac{106}{5}\log M.
\end{array}
\]
For a chain width $w$, the first two constants increase by $4\log w$ and the last by $2\log w$. Symmetric threshold and parity gates may occur at the remaining vertices without a balance assumption. Corollary~\ref{comp:closure} gives FEI+ for the resulting function with the maximum of the applicable gate constants. Corollary~\ref{comp:full-entropy} then supplies the corresponding full entropy bounds.

\section{Joint selector entropy}\label{sec:selectors}

The communication interpretation of FEI and coding arguments for
composition were developed by Wan, Wright and Wu~\cite{WWW14}.
Bounds through average unambiguous parity certificate complexity
were proved by Arunachalam, Chakraborty, Kouck\'y, Saurabh and
de Wolf~\cite{ACKSW21}. The conditional calculations below give exact
entropy and influence identities for selectors, including affine cells
and overlapping leaf variable sets.

Let $A_1,\ldots,A_m$ partition a uniform control cube. Write
$a_j=\mathbf 1_{A_j}$ and $p_j=\mathbb E a_j>0$.
Let $g_j$ be balanced Boolean functions on a data cube disjoint from the
control coordinates. Assume that their nonempty Fourier supports are
pairwise disjoint. The data variable sets themselves need not be disjoint.
Set
\[
 F(z,y)=\sum_{j=1}^m a_j(z)g_j(y).
\]
This is Boolean because exactly one selector is active.
Define
\[
 q_j(T)=\frac{\widehat a_j(T)^2}{p_j},\qquad
 d_j=\sum_T |T|q_j(T),\qquad D=\sum_jp_jd_j.
\]
Each $q_j$ is a probability law, by Parseval.

\begin{proposition}[Exact selector identities]\label{prop:selector-identities}
Under these assumptions,
\begin{align*}
 \HH(F)&=\HH(p)+\sum_jp_j\bigl(\HH(q_j)+\HH(g_j)\bigr),\\
 \I(F)&=D+\sum_jp_j \I(g_j).
\end{align*}
\end{proposition}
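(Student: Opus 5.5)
Since the control and data cubes are disjoint, every Fourier set of $F$ has the form $T\cup U$, with $T$ a set of control coordinates and $U$ a set of data coordinates. The first step is to read off the coefficient directly:
\[
 \wh F(T\cup U)=\sum_j\wh a_j(T)\,\wh g_j(U).
\]
For $U=\varnothing$ this vanishes, because $\wh g_j(\varnothing)=\E g_j=0$ by balance. For $U\ne\varnothing$, the disjointness of the nonempty Fourier supports means at most one index $j$ has $\wh g_j(U)\ne0$. Call that index $j(U)$. Then
\[
 \wh F(T\cup U)=\wh a_{j(U)}(T)\,\wh g_{j(U)}(U),
\]
so there is no interference between different selectors. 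This is the step where both hypotheses, balance and disjoint supports, are used, and it is the only point requiring care. Overlapping data variable sets do no harm, since only the Fourier supports need to be disjoint.

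Next I would write the squared coefficient in product form. Balance together with Parseval gives $\sum_{U\ne\varnothing}\wh g_j(U)^2=1$, so $\wh g_j(U)^2$ is a probability law on the nonempty sets $U$. Hence
\[
 \wh F(T\cup U)^2=p_j\,q_j(T)\,\wh g_j(U)^2,\qquad j=j(U).
\]
This describes the spectral law of $F$ as a three-stage process: choose $j$ with probability $p_j$, then choose $T\sim q_j$ and $U\sim\wh g_j^2$ independently. As a check, the total mass is $\sum_jp_j\cdot1\cdot1=1$.

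The map $(j,T,U)\mapsto T\cup U$ is injective on the support. Indeed, $j$ is recovered from $U$ as $j(U)$, and $T$ and $U$ are recovered by intersecting with the two cubes. So $\HH(F)$ equals the joint entropy of $(j,T,U)$. The chain rule, together with conditional independence of $T$ and $U$ given $j$, then gives
\[
 \HH(F)=\HH(p)+\sum_jp_j\bigl(\HH(q_j)+\HH(g_j)\bigr),
\]
where $\HH(g_j)$ is the entropy of its spectral law. Since $g_j$ is balanced, its empty coefficient contributes nothing, so this is the usual Fourier entropy.

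For the influence identity, I would take the expectation of $|T\cup U|=|T|+|U|$ under the same three-stage law. This gives
\[
 \I(F)=\sum_jp_j\bigl(d_j+\I(g_j)\bigr)=D+\sum_jp_j\I(g_j).
\]
The main obstacle is really the first step, verifying that the cross terms vanish. After that, everything is bookkeeping with the chain rule and linearity of expectation.
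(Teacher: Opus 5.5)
Your proposal is correct and follows essentially the same route as the paper. Both arguments factor $\wh F(T\cup U)=\wh a_j(T)\,\wh g_j(U)$, with $j$ determined by the nonempty set $U$ via balance and disjoint supports, then apply the entropy chain rule to the label-then-independent-parts description of the spectral law and use $|T\cup U|=|T|+|U|$; you just spell out the injectivity and Parseval checks that the paper leaves implicit.
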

\begin{proof}
For a nonempty data Fourier set $U$ in the support of $\widehat g_j$,
\[
 \widehat F(T\cup U)=\widehat a_j(T)\widehat g_j(U).
\]
The support assumption makes $j$ unique, and balancedness excludes the
empty data set. Consequently the spectral sample identifies its selector
label $J$, with probability $p_j$. Conditional on $J=j$, the control and
data parts are independent with laws $q_j$ and $\widehat g_j^2$.
The two identities follow from the entropy chain rule and additivity of
cardinality across the two coordinate blocks.
\end{proof}

\begin{corollary}[Affine-subspace selectors]\label{cor:affine-selectors}
Suppose every $A_j$ is an affine subspace of codimension $c_j$ over
$\mathbb F_2$. Let $r_j$ be the number of coordinates occurring in its
dual linear subspace. Then
\[
 \HH(F)=2\sum_jp_jc_j+\sum_jp_j\HH(g_j),\qquad
 \I(F)=\frac12\sum_jp_jr_j+\sum_jp_j\I(g_j).
\]
In particular, if
$\HH(g_j)\le C\bigl(\I(g_j)-1\bigr)$ for all $j$, then
\[
 \HH(F)\le\max\{4,C\}\bigl(\I(F)-1\bigr).
\]
For coordinate subcubes, $r_j=c_j$. For distinct nonconstant character
leaves $g_j=\pm\chi_{B_j}$, the entropy is exactly
$2\sum_jp_jc_j$ and the influence is
$\frac12\sum_jp_jr_j+\sum_jp_j|B_j|$.
\end{corollary}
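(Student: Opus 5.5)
The approach is to specialize Proposition~\ref{prop:selector-identities}. We only need the laws $q_j$ of the affine-subspace indicators $a_j$, together with their entropies $\HH(q_j)$ and mean sizes $d_j$.

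\textbf{Spectrum of one cell.} Write $A_j=\{z: \langle \ell,z\rangle=\beta_\ell \text{ for } \ell\in L_j\}$ in $\mathbb F_2$-notation, where $L_j$ is the dual linear subspace, of dimension $c_j$. Then $p_j=2^{-c_j}$ and
\[
 a_j=2^{-c_j}\sum_{\ell\in L_j}(-1)^{\beta_\ell}\chi_{\ell},
\]
where $\chi_\ell$ is the character indexed by the support of $\ell$. Distinct $\ell\in L_j$ have distinct supports, so $\widehat a_j(T)^2=4^{-c_j}$ on the $2^{c_j}$ sets $T$ that are supports of elements of $L_j$, and $\widehat a_j(T)=0$ elsewhere. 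Hence $q_j$ is uniform on these $2^{c_j}$ sets, and $\HH(q_j)=c_j$.

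\textbf{Mean size.} For $d_j$, use that a uniform element $\ell$ of $L_j$ has each coordinate $\ell_i$ either identically zero, if $i$ does not occur in $L_j$, or a uniform bit, since it is a nonzero linear functional on $L_j$. Thus $d_j=\E|\supp\ell|=r_j/2$.

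\textbf{Entropy of the cell labels.} Since $p_j=2^{-c_j}$, we get $\HH(p)=\sum_jp_jc_j$. Substituting into the proposition gives
\[
 \HH(F)=2\sum_jp_jc_j+\sum_jp_j\HH(g_j),\qquad
 \I(F)=\tfrac12\sum_jp_jr_j+\sum_jp_j\I(g_j).
\]

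\textbf{The inequality.} Since $\sum_jp_j=1$, we have $\I(F)-1=\frac12\sum_jp_jr_j+\sum_jp_j(\I(g_j)-1)$. The key observation is $r_j\ge c_j$: a $c_j$-dimensional subspace of $\mathbb F_2^{r}$ with support of size $r$ needs $r\ge c_j$. Hence $2p_jc_j\le4\cdot\frac12p_jr_j$. Combined with the hypothesis $p_j\HH(g_j)\le Cp_j(\I(g_j)-1)$, summing the two pieces gives the $\max\{4,C\}$ bound. Note that $\I(g_j)\ge1$ for balanced Boolean $g_j$, since $\Var(g_j)=1$; this makes both pieces nonnegative.

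\textbf{Special cases.} For subcubes, the dual is spanned by $c_j$ coordinate vectors, so $r_j=c_j$. For $g_j=\pm\chi_{B_j}$ with $B_j\ne\varnothing$, we have $\HH(g_j)=0$ and $\I(g_j)=|B_j|$. Distinctness of the $B_j$ gives the disjoint-support hypothesis, and nonemptiness gives balancedness, so the two identities specialize directly.

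The only mildly delicate step is the computation $d_j=r_j/2$, which requires that each coordinate occurring in $L_j$ is equidistributed over a uniform element of $L_j$. This holds because a nonzero linear functional on $L_j$ has a kernel of index $2$. Everything else is direct substitution.
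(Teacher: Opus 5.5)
Your proposal is correct and follows the paper's proof. You specialize Proposition~\ref{prop:selector-identities} using that each cell's normalized spectrum is uniform on the $2^{c_j}$ supports of its dual subspace (entropy $c_j$, mean size $r_j/2$), that $p_j=2^{-c_j}$, and that $c_j\le r_j$ for the inequality, while also filling in details the paper leaves implicit, such as the nonnegativity of $\I(g_j)-1$ needed to combine the two pieces under $\max\{4,C\}$.
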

\begin{proof}
The normalized indicator spectrum is uniform on the dual subspace, so
its entropy is $c_j$. Every coordinate present in that subspace is
occupied with probability $1/2$, giving mean degree $r_j/2$.
Also $p_j=2^{-c_j}$, whence $\HH(p)=\sum_jp_jc_j$.
The inequality follows from $c_j\le r_j$, $\sum_jp_j=1$, and the exact
selector identities.
\end{proof}

For ordinary addressing with $k$ controls and $2^k$ distinct data bits,
\[
 \HH(F)=2k,\qquad \I(F)=1+\frac k2,\qquad \HH(F)=4(\I(F)-1).
\]
Thus the constant $4$ in this class is sharp. Its Fourier-set description
uses the selected label once and then describes the whole control subset.

\subsection{A closure statement for arbitrary selector cells}

Put $h_j=2a_j-1$ and
$\HH^+(h_j)=\sum_{T\ne\varnothing}\widehat h_j(T)^2
\log(1/\widehat h_j(T)^2)$. Then
\[
 D=\frac14\sum_j \I(h_j),\qquad
 B:=\sum_jp_j(1-p_j)=\frac14\sum_j\operatorname{Var}(h_j).
\]
Direct substitution gives the exact selector cost
\[
 \HH(p)+\sum_jp_j\HH(q_j)
   =\frac14\sum_j \HH^+(h_j)+R(p),\qquad
 R(p)=2\sum_jp_j(1-p_j)+2\sum_jp_j^2\log\frac1{p_j}.
\]

Two elementary bounds suffice:
\[
 R(p)\le 2\HH(p)\le 4D.
\]
Indeed,
\[
 \sum_jp_j(1-p_j)\log\frac1{p_j}
 =\sum_{i<j}p_ip_j\log\frac1{p_ip_j}
 \ge 2\sum_{i<j}p_ip_j=\sum_jp_j(1-p_j),
\]
using $p_ip_j\le1/4$. This proves the first inequality. The second is
the edge-isoperimetric inequality \cite{OD14} applied to each cell:
$\I(h_j)\ge2p_j\log(1/p_j)$.

\begin{corollary}[Selector closure from full FEI bounds]\label{cor:selector-full}
If $\HH(h_j)\le C_s \I(h_j)$ and
$\HH(g_j)\le C_g(\I(g_j)-1)$ for all $j$, then
\[
 \HH(F)\le\max\{C_s+4,C_g\}(\I(F)-1).
\]
No bound on the number of selectors or the overlap of their control
variable sets is needed.
\end{corollary}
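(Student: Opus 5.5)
The plan is to combine the exact selector identities of Proposition~\ref{prop:selector-identities} with the cost decomposition displayed just before the statement. This requires the hypotheses of that proposition: the $g_j$ balanced, with pairwise disjoint nonempty Fourier supports. The corollary is understood in that setting, with arbitrary cells $A_j$. By the proposition,
\[
 \HH(F)=\HH(p)+\sum_jp_j\HH(q_j)+\sum_jp_j\HH(g_j),
 \qquad
 \I(F)-1=D+\sum_jp_j\bigl(\I(g_j)-1\bigr),
\]
where the second identity uses $\sum_jp_j=1$. The data part is handled directly. The hypothesis gives $\sum_jp_j\HH(g_j)\le C_g\sum_jp_j(\I(g_j)-1)$, and each term $\I(g_j)-1\ge0$ because $g_j$ is balanced, so that $\Var(g_j)=1\le\I(g_j)$. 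It therefore remains to bound the selector cost $\HH(p)+\sum_jp_j\HH(q_j)$ by $(C_s+4)D$.

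For the selector cost, I use the displayed identity
\[
 \HH(p)+\sum_jp_j\HH(q_j)=\tfrac14\sum_j\HH^+(h_j)+R(p),
\]
together with $R(p)\le4D$. Since $\HH^+(h_j)\le\HH(h_j)\le C_s\I(h_j)$ and $\tfrac14\sum_j\I(h_j)=D$, the first sum is at most $C_sD$, and the selector cost is at most $(C_s+4)D$.

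Adding the two parts gives
\[
 \HH(F)\le(C_s+4)D+C_g\sum_jp_j\bigl(\I(g_j)-1\bigr)\le\max\{C_s+4,C_g\}\bigl(\I(F)-1\bigr),
\]
since both $D$ and every $\I(g_j)-1$ are nonnegative.

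No step uses the number of cells or any overlap among the control variables, which confirms the final remark of the statement.

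The only step with real content is the identity for the exact selector cost, which is stated before the corollary and may be assumed. If I had to verify it, I would write $\wh h_j(T)=2\wh a_j(T)$ for $T\ne\varnothing$ and $\wh h_j(\varnothing)=2p_j-1$, and substitute $q_j(T)=\wh a_j(T)^2/p_j$ and $q_j(\varnothing)=p_j$ into $p_j\HH(q_j)$. Expanding the logarithms and adding $\HH(p)=\sum_jp_j\log(1/p_j)$, the $\log(1/p_j)$ terms should cancel against the factor $4$. What remains gives $2p_j(1-p_j)$ from the nonempty mass $\sum_{T\ne\varnothing}\wh a_j(T)^2=p_j(1-p_j)$, and $2p_j^2\log(1/p_j)$ from the empty atom. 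This bookkeeping is the one place where I expect an error could hide.
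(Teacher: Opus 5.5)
Your proof is correct and is the same argument as the paper's: you bound the selector cost $\frac14\sum_j\HH^+(h_j)+R(p)$ by $(C_s+4)D$ using $\HH^+(h_j)\le C_s\I(h_j)$ and $R(p)\le 4D$, bound the leaf part directly, and combine the two with the exact influence identity. One small correction to your bookkeeping sketch: the factor $4$ contributes $\log 4=2$ times the nonempty mass $p_j(1-p_j)$, while the three $\log(1/p_j)$ contributions (coefficients $p_j$ from $\HH(p)$, $-p_j(1-p_j)$ from the nonempty terms, and $p_j^2$ from the empty atom) combine to $2p_j^2\log(1/p_j)$ rather than cancel, so the identity you assume is indeed right.
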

\begin{proof}
The selector entropy is at most $(C_s+4)D$.
The leaf entropy is at most $C_g\sum_jp_j(\I(g_j)-1)$.
Their sum has the stated bound by the exact influence identity.
\end{proof}

The corresponding strengthened inequalities compose without the
additional $4$ in the selector coefficient.

\begin{lemma}[Quadratic probability correction]
Every probability vector satisfies
\[
 \sum_jp_j^2\log\frac1{p_j}\le\sum_jp_j(1-p_j),
 \qquad R(p)\le4B.
\]
\end{lemma}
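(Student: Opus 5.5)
The first inequality, $\sum_jp_j^2\log(1/p_j)\le\sum_jp_j(1-p_j)$, is termwise. I will show $p^2\log(1/p)\le p(1-p)$ for each $p\in(0,1]$, which amounts to $p\log(1/p)\le 1-p$. Writing $\log=\ln/\ln2$, this reads $p\ln(1/p)\le(\ln2)(1-p)$.

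This is \emph{not} the elementary bound $s\ln(1/s)\le1-s$ used in Lemma~\ref{comp:elementary-gates}. That bound lacks the factor $\ln 2<1$, so I check the needed inequality directly. Put $\psi(p)=(\ln2)(1-p)-p\ln(1/p)$.
\begin{itemize}
\item At the endpoints, $\psi(1)=0$ and $\psi(0^+)=\ln2>0$.
\item The function is concave, since $\psi''(p)=-1/p<0$.
\item Hence $\psi\ge0$ on $[0,1]$ only if it is nonnegative at its interior minimum.
\end{itemize}
That condition fails near $p=1$. There $\psi'(1)=-\ln2+1>0$, and since $\psi(1)=0$, this gives $\psi<0$ just below $1$. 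So the termwise statement is false, for example at $p=0.9$.

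The summed inequality must therefore use $\sum_jp_j=1$. If some $p_j$ is large, the remaining mass is small. My plan is to split the comparison as
\[
\sum_jp_j\bigl[(1-p_j)-p_j\log(1/p_j)\bigr]\ge0.
\]
Termwise the bracket is nonnegative once $p_j\le p_*$. Here $p_*$ is the nonzero root of $\psi$, and $p_*\ge1/2$ because $\psi(1/2)=\tfrac12\ln2-\tfrac12\ln2=0$. So every $p_j\le1/2$ contributes nonnegatively, and at most one index has $p_j>1/2$. For that index, write $p_1=1-\varepsilon$ with $\varepsilon<1/2$. Its negative contribution is
\[
p_1\bigl[\varepsilon-(1-\varepsilon)\log\tfrac1{1-\varepsilon}\bigr]\ge-(1-\varepsilon)\varepsilon\,c,
\]
with $c$ small. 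The other terms sum to $\sum_{j\ge2}p_j(1-p_j)-p_j^2\log(1/p_j)$. With total mass $\varepsilon$ and each $p_j\le\varepsilon$, this is at least $\varepsilon\bigl(1-\varepsilon-\varepsilon\log(1/\varepsilon)\bigr)$ only in crude form. I will instead use $\log(1/p_j)\ge1$ together with $p_j(1-p_j)\ge\ldots$.

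The cleaner route is the pairwise identity already used above:
\[
\sum_jp_j(1-p_j)\log\tfrac1{p_j}=\sum_{i<j}p_ip_j\log\tfrac1{p_ip_j}.
\]
By the same token, $\sum_jp_j^2\log(1/p_j)=\sum_jp_j\log(1/p_j)-\sum_jp_j(1-p_j)\log(1/p_j)=\HH(p)-\sum_{i<j}p_ip_j\log\frac{1}{p_ip_j}$. It then suffices to show
\[
\HH(p)\le\sum_{i<j}p_ip_j\Bigl(2+\log\tfrac1{p_ip_j}\Bigr).
\]
I expect this to be the main obstacle. I would prove it by induction on the number of cells, merging two cells and tracking both sides, in the style of the AND/OR scalar induction in Lemma~\ref{comp:elementary-gates}. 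The two-cell case $p,1-p$ reduces to a one-variable inequality, to be checked by calculus with the endpoints as the binding cases.

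The second claim then follows immediately:
\[
R(p)=2B+2\sum_jp_j^2\log\tfrac1{p_j}\le2B+2B=4B,
\]
using $B=\sum_jp_j(1-p_j)$ and the first inequality.
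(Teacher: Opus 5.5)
Your derivation of $R(p)\le4B$ from the first inequality is correct, and so is your observation that the first inequality is not termwise. (Note, though, that $\psi''(p)=1/p>0$: $\psi$ is convex, not concave, with zeros exactly at $p=1/2$ and $p=1$.)

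The gap is that the first inequality itself is never proved. With $u(t)=t^2(1+\log(1/t))$, it says exactly $\sum_ju(p_j)\le1$. Your ``cleaner route'' $\HH(p)\le\sum_{i<j}p_ip_j(2+\log\frac1{p_ip_j})$ is only an equivalent restatement: its right side minus its left side is $1-\sum_ju(p_j)$. Merging cells $a,b$ preserves the claim precisely when $u(a)+u(b)\le u(a+b)$, which you neither state nor verify, and the two-cell base case is deferred to ``calculus''.

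That base case, $u(p)+u(1-p)\le1$, is the real content. It holds with equality at $p=1$ and at $p=1/2$, where the deficit also has a critical point. So neither an endpoint check nor crude estimates of the kind you began (your ``$c$ small'') can settle it. As $p_1\downarrow1/2$ in the binary case, the large cell's loss and the other cell's gain both tend to zero.

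The paper sets $r=q/p$ and proves $A(r)=(1+r^2)\ln(1+r)-r^2\ln r\le2r\ln2$. Since $A'''(r)=2(r-1)/(r(1+r)^3)\le0$, $A''$ changes sign once. Hence $B_0=rA'-A$, which vanishes at $0$ and $1$ and satisfies $B_0'=rA''$, is nonnegative. Therefore $A(r)/r$ increases to $A(1)=2\ln2$.

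For the reduction to two cells, your abandoned first approach is missing one estimate: $u(t)/t=t(1+\log(1/t))$ is increasing on $(0,1/2]$ with value $1$ at $t=1/2$. This settles the case where all $p_j\le1/2$. If one cell has $p>1/2$, it bounds the other cells' total by $u(1-p)$, reducing everything to the binary inequality; this is the paper's argument.

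Alternatively, your merging induction does work in any merge order once the binary inequality is known. For $a=sx$ and $b=s(1-x)$, one has $u(s)-u(a)-u(b)=s^2\bigl[2x(1-x)\log(1/s)+1-u(x)-u(1-x)\bigr]\ge0$. So $u$ is superadditive on $[0,1]$, and $\sum_ju(p_j)\le u(1)=1$. This avoids the case split, but either route still requires an actual proof of $u(p)+u(1-p)\le1$.
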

\begin{proof}
Set $u(t)=t^2(1+\log(1/t))$ with $u(0)=0$.
On $0<t\le1/2$, the ratio $u(t)/t$ is increasing and at most $1$.
If all probabilities are at most $1/2$, then $\sum_j u(p_j)\le1$.
The case of a probability equal to $1$ is immediate. Otherwise, if there is one probability $p>1/2$, put $q=1-p$.
Monotonicity of $u(t)/t$ gives
$\sum_{j:p_j\ne p}u(p_j)\le u(q)$.
It remains to show $u(p)+u(q)\le1$.

Let $r=q/p\in(0,1]$ and
\[
 A(r)=(1+r^2)\ln(1+r)-r^2\ln r.
\]
The binary inequality is equivalent to $A(r)\le2r\ln2$.
Direct differentiation gives
\[
 A''(r)=2\ln\frac{1+r}{r}-\frac2{1+r}-\frac2{(1+r)^2},
 \qquad
 A'''(r)=\frac{2(r-1)}{r(1+r)^3}\le0.
\]
Thus $A''$ decreases from $+\infty$ to $2\ln2-3/2<0$,
and changes sign exactly once. The function
$B_0(r)=rA'(r)-A(r)$ has derivative $rA''(r)$ and satisfies
$B_0(0)=B_0(1)=0$, where the value at zero is by continuity.
It follows that $B_0(r)\ge0$.
Hence $A(r)/r$ is increasing and at most $A(1)=2\ln2$.
This proves $\sum_j u(p_j)\le1$, which is the first inequality.
The bound for $R$ follows from its exact formula.
\end{proof}

\begin{corollary}[Constant-preserving strong selector closure]\label{cor:selector-strong}
If
\[
 \HH^+(h_j)\le C_s\bigl(\I(h_j)-\operatorname{Var}(h_j)\bigr),
 \qquad \HH(g_j)\le C_g\bigl(\I(g_j)-1\bigr)
\]
for every selector and leaf, then
\[
 \HH(F)\le\max\{C_s,C_g,4\}\bigl(\I(F)-1\bigr).
\]
\end{corollary}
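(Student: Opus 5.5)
We follow the pattern of Corollary~\ref{cor:selector-full}, now using the exact decomposition of the selector cost into $\frac14\sum_j\HH^+(h_j)+R(p)$ instead of the cruder bound through $\HH(h_j)$. By Proposition~\ref{prop:selector-identities},
\[
 \HH(F)=\HH(p)+\sum_jp_j\HH(q_j)+\sum_jp_j\HH(g_j),
 \qquad
 \I(F)-1=D+\sum_jp_j\bigl(\I(g_j)-1\bigr),
\]
where the second identity uses $\sum_jp_j=1$. The leaf part is immediate from the hypothesis: it is at most $C_g\sum_jp_j(\I(g_j)-1)$. It therefore suffices to show that the selector part satisfies
\[
 \HH(p)+\sum_jp_j\HH(q_j)\le\max\{C_s,4\}\,D.
\]

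For the selector part, we write $D=\frac14\sum_j\I(h_j)$ and $B=\frac14\sum_j\Var(h_j)$, so that
\[
 D=\frac14\sum_j\bigl(\I(h_j)-\Var(h_j)\bigr)+B.
\]
The exact selector cost identity gives
\[
 \HH(p)+\sum_jp_j\HH(q_j)=\frac14\sum_j\HH^+(h_j)+R(p).
\]
We bound the first term by the hypothesis on each cell: it is at most $C_s\cdot\frac14\sum_j(\I(h_j)-\Var(h_j))=C_s(D-B)$. We bound the second term by the quadratic probability correction lemma just proved, which gives $R(p)\le4B$. Adding the two bounds, the selector cost is at most
\[
 C_s(D-B)+4B\le\max\{C_s,4\}\bigl((D-B)+B\bigr)=\max\{C_s,4\}D.
\]
Here we use $D-B\ge0$ and $B\ge0$. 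Nonnegativity of $D-B$ holds because each $\I(h_j)\ge\Var(h_j)$, as $\Delta\ge0$ for every Boolean function.

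Combining the selector and leaf bounds gives
\[
 \HH(F)\le\max\{C_s,4\}D+C_g\sum_jp_j\bigl(\I(g_j)-1\bigr)\le\max\{C_s,C_g,4\}\bigl(\I(F)-1\bigr),
\]
by the influence identity. The substantive input is the inequality $R(p)\le4B$, which is already established; the rest is bookkeeping. The only points needing attention are the degenerate cases. A cell with $p_j=1$ makes $h_j$ constant and contributes zero to both sides. Nonnegativity of each term is what allows the weighted sum to be bounded by the maximum constant.
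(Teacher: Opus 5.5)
Your proposal is correct and is the paper's proof. You use the exact selector cost identity, apply the cell hypothesis to get $C_s(D-B)$ and $R(p)\le 4B$ to get $4B$, combine these into $C_s(D-B)+4B\le\max\{C_s,4\}D$ using $0\le B\le D$ (Poincar\'e for the cells), and finish with the leaf term and the influence identity. Make one point explicit in the final step: $\I(g_j)-1\ge 0$, by Poincar\'e since each $g_j$ is balanced, and this is what lets you absorb $C_g$ into the maximum.
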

\begin{proof}
The exact correction identity bounds the selector entropy by
$C_s(D-B)+4B$. Poincar\'e's inequality for the cells gives $0\le B\le D$,
so this is at most $\max\{C_s,4\}D$.
Add the leaf contribution and use the exact influence identity.
\end{proof}

For example, if each cell function $h_j$ is regular and the leaves are
distinct nonconstant characters, then
\[
 \HH(F)\le(\Creg+4)\bigl(\I(F)-1\bigr),
 \qquad \Creg+4<12.937.
\]
For regular cell functions and balanced regular leaves with pairwise
disjoint nonempty Fourier supports, Theorem~\ref{thm:A} and
Corollary~\ref{cor:selector-strong} give
\[
 \HH(F)\le4\Creg\bigl(\I(F)-1\bigr).
\]

\subsection{Multiplexers with shared data}\label{sec:shared-multiplexers}

Let $x\in\bits^a$ be a uniform selector, independent of a uniform data
vector $y\in\bits^n$, and let $(g_b)_{b\in\bits^a}$ be arbitrary Boolean
functions on that common data cube. Put
\[
 F(x,y)=g_x(y),\qquad N=2^a.
\]
The branches may share data variables and need not be balanced. Write
$\I_{\rm sel}(F)$ and $\I_{\rm data}(F)$ for the influence sums over
selector and data coordinates, respectively.

\begin{proposition}[Exact shared-data identities]\label{prop:shared-data-collision}
For each data Fourier set $T$, put
\[
 R_T=\frac1N\sum_b\wh g_b(T)^2.
\]
For $R_T>0$, define
\[
 r_T(b)=\frac{\wh g_b(T)^2}{NR_T},\qquad
 u_T(b)=\frac{\wh g_b(T)}{\sqrt{NR_T}},\qquad
 \nu_T(S)=\left(\frac1{\sqrt N}\sum_b\chi_S(b)u_T(b)\right)^2.
\]
Then $r_T$ and $\nu_T$ are probability vectors, and
\begin{equation}\label{eq:shared-data-collision}
 \HH(F)=a+\frac1N\sum_b\HH(g_b)
       +\sum_{T:R_T>0}R_T\bigl[\HH(\nu_T)-\HH(r_T)\bigr].
\end{equation}
Here $\HH$ on probability vectors denotes Shannon entropy. In particular,
\begin{equation}\label{eq:shared-data-upper}
 \HH(F)\le2a+\frac1N\sum_b\HH(g_b).
\end{equation}
If $b^{(j)}$ denotes $b$ with selector coordinate $j$ flipped, then
\begin{align}
 \I_{\rm data}(F)&=\frac1N\sum_b\I(g_b),\label{eq:shared-data-influence}\\
 \Inf_{x_j}(F)&=\frac1N\sum_b\P_y[g_b(y)\ne g_{b^{(j)}}(y)],
       \label{eq:shared-selector-influence}\\
 \I_{\rm sel}(F)&=\sum_{T:R_T>0}R_T\sum_S|S|\nu_T(S).
       \label{eq:shared-selector-spectrum}
\end{align}
\end{proposition}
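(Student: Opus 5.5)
The plan is to compute the Fourier expansion of $F$ explicitly and read off its spectral law as a mixture. Expanding each branch as $g_b(y)=\sum_T\wh g_b(T)\chi_T(y)$ and writing the selection as $F(x,y)=\sum_b\mathbf 1[x=b]g_b(y)$ with $\mathbf 1[x=b]=N^{-1}\sum_S\chi_S(b)\chi_S(x)$, we get
\[
 \wh F(S\cup T)=\frac1N\sum_b\chi_S(b)\wh g_b(T),
\]
for a selector set $S$ and a data set $T$. When $R_T>0$ this equals $\sqrt{R_T}\cdot\frac1{\sqrt N}\sum_b\chi_S(b)u_T(b)$, so $\wh F(S\cup T)^2=R_T\nu_T(S)$, and it vanishes when $R_T=0$. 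Since $\sum_bu_T(b)^2=1$, the vector $u_T$ is a unit vector, and orthonormality of the normalized characters $N^{-1/2}\chi_S$ on $\bits^a$ (a Hadamard transform) shows that $\nu_T$ is a probability vector. The vector $r_T$ is a probability vector trivially. Also $\sum_TR_T=N^{-1}\sum_b\sum_T\wh g_b(T)^2=1$.

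Next apply the chain rule. The spectral sample of $F$ has data part $T$ with law $R_T$, and given $T$ its selector part has law $\nu_T$. Hence $\HH(F)=H(R)+\sum_TR_T\HH(\nu_T)$. For $H(R)$, consider the joint law on pairs $(b,T)$ with mass $N^{-1}\wh g_b(T)^2$. Its entropy equals $a+N^{-1}\sum_b\HH(g_b)$, computed by conditioning on $b$, which is uniform. It also equals $H(R)+\sum_TR_T\HH(r_T)$, computed by conditioning on $T$. Equating the two expressions gives $H(R)$, and substituting proves \eqref{eq:shared-data-collision}. The upper bound \eqref{eq:shared-data-upper} then follows from $\HH(\nu_T)\le a$, since $\nu_T$ lives on $2^a$ sets, and $\HH(r_T)\ge0$.

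For the influences, the data influence sum is $\sum_T|T|R_T=N^{-1}\sum_b\I(g_b)$, which is \eqref{eq:shared-data-influence}. Likewise the selector influence sum is $\sum_TR_T\sum_S|S|\nu_T(S)$, which is \eqref{eq:shared-selector-spectrum}; these hold because $\I$ splits over coordinate blocks. Identity \eqref{eq:shared-selector-influence} is direct from the definition: flipping $x_j$ at $x=b$ changes the output exactly when $g_b(y)\ne g_{b^{(j)}}(y)$, and $x$ is uniform.

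The only delicate point is the joint-entropy bookkeeping. One has to handle the terms with $R_T=0$, which carry zero mass, and confirm that the sign and phase of $\wh g_b(T)$ do not enter $r_T$ but do enter $\nu_T$. That is why the collision term $\HH(\nu_T)-\HH(r_T)$ can take either sign.
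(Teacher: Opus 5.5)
Your proposal is correct and follows the paper's proof essentially step for step: the coefficient formula $\wh F(S,T)=N^{-1}\sum_b\chi_S(b)\wh g_b(T)$, Parseval over the selector cube, the auxiliary joint law $N^{-1}\wh g_b(T)^2$ on pairs $(b,T)$ conditioned both ways via the chain rule, and the bounds $\HH(\nu_T)\le a$, $\HH(r_T)\ge0$. The only difference is cosmetic: you derive the data-influence identity spectrally from $\sum_T|T|R_T$, whereas the paper obtains it by conditioning on the physical selector, and both are immediate.
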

\begin{proof}
The Fourier coefficients are
\[
 \wh F(S,T)=\frac1N\sum_b\chi_S(b)\wh g_b(T).
\]
Parseval in the selector coordinates gives
$\sum_S\wh F(S,T)^2=R_T$. Hence $R$ is the data-set marginal of the
squared Fourier law of $F$, and $\nu_T$ is its conditional selector-set
law. Consider also the probability law
\[
 \P(B=b,T)=\frac1N\wh g_b(T)^2.
\]
Its entropy is $a+N^{-1}\sum_b\HH(g_b)$, its data-set marginal is $R$,
and its conditional label law is $r_T$. The entropy chain rule gives
\[
 \HH(R)=a+\frac1N\sum_b\HH(g_b)-\sum_T R_T\HH(r_T),
 \qquad
 \HH(F)=\HH(R)+\sum_T R_T\HH(\nu_T),
\]
which proves \eqref{eq:shared-data-collision}, including the terms with
$T=\varnothing$. The inequalities $\HH(r_T)\ge0$ and
$\HH(\nu_T)\le a$ give \eqref{eq:shared-data-upper}.
Conditioning on the physical selector proves
\eqref{eq:shared-data-influence} and \eqref{eq:shared-selector-influence}.
Taking the mean selector-set size in the conditional spectral law gives
\eqref{eq:shared-selector-spectrum}.
\end{proof}

\begin{corollary}[Orthogonal neighboring branches]\label{cor:orthogonal-multiplexers}
Suppose that
\[
 \E[g_b g_{b^{(j)}}]=0\quad\text{for every }b,j,
 \qquad \HH(g_b)\le C\I(g_b)\quad\text{for every }b.
\]
Then $\I_{\rm sel}(F)=a/2$ and
\[
 \HH(F)\le\max\{4,C\}\I(F).
\]
More generally, the assumption $\I_{\rm sel}(F)\ge ca$, with $c>0$,
gives the constant $\max\{2/c,C\}$.
\end{corollary}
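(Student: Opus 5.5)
Everything here comes from Proposition~\ref{prop:shared-data-collision}. The approach is to bound the entropy by \eqref{eq:shared-data-upper}, namely
\[
 \HH(F)\le 2a+\frac1N\sum_b\HH(g_b).
\]
The influence splits as $\I(F)=\I_{\rm sel}(F)+\I_{\rm data}(F)$, and \eqref{eq:shared-data-influence} gives $\I_{\rm data}(F)=N^{-1}\sum_b\I(g_b)$. By the branch hypothesis, the second term of the entropy bound is therefore at most $C\,\I_{\rm data}(F)$. It remains to bound $2a$ by a multiple of $\I_{\rm sel}(F)$.

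To get $\I_{\rm sel}(F)$ under orthogonality, we use \eqref{eq:shared-selector-influence}. For Boolean $g,g'$ we have $\P[g\ne g']=(1-\E[gg'])/2$. Hence $\E[g_bg_{b^{(j)}}]=0$ gives $\Inf_{x_j}(F)=1/2$ for every $j$. Summing over the $a$ selector coordinates gives $\I_{\rm sel}(F)=a/2$. Then $2a=4\I_{\rm sel}(F)$, and so
\[
 \HH(F)\le4\I_{\rm sel}(F)+C\,\I_{\rm data}(F)\le\max\{4,C\}\I(F).
\]

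For the general statement, the assumption $\I_{\rm sel}(F)\ge ca$ gives $2a\le(2/c)\I_{\rm sel}(F)$. The same splitting then yields the constant $\max\{2/c,C\}$. The argument needs no orthogonality beyond this influence lower bound.

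There is essentially no obstacle here. The only point to check is that the entropy upper bound \eqref{eq:shared-data-upper} requires nothing of the branches: it holds for arbitrary branches, which need not be balanced and may share data variables. The identity $\P[g\ne g']=(1-\E gg')/2$ is immediate for $\pm1$-valued functions.
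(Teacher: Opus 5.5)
Your proposal is correct and follows the paper's proof essentially step for step. You bound the entropy by \eqref{eq:shared-data-upper}, get $\I_{\rm sel}(F)=a/2$ from \eqref{eq:shared-selector-influence} using the fact that orthogonality of $\pm1$-valued branches means disagreement probability $1/2$, and bound the average branch entropy by $C\,\I_{\rm data}(F)$ via \eqref{eq:shared-data-influence}; the general case likewise just replaces $2a=4\I_{\rm sel}(F)$ by $2a\le(2/c)\I_{\rm sel}(F)$.
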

\begin{proof}
For Boolean branches, orthogonality is equivalent to disagreement
probability $1/2$. Equation~\eqref{eq:shared-selector-influence} therefore
gives $\I_{\rm sel}(F)=a/2$. In \eqref{eq:shared-data-upper}, the selector
term is $4\I_{\rm sel}(F)$ and the average branch entropy is at most
$C\I_{\rm data}(F)$. Adding proves the first assertion. The second uses
$2a\le(2/c)\I_{\rm sel}(F)$ instead.
\end{proof}

\begin{corollary}[Recursive regular multiplexers]\label{cor:recursive-multiplexers}
Start with regular Boolean functions as leaves. At each later node,
select between previously constructed branches using a fresh uniform
selector whose coordinates occur in none of those branches. Allow the
branches to share their data variables, and suppose neighboring branches
at each node are orthogonal under the common uniform data law. Every
function obtained by finitely many such steps satisfies
\[
 \HH(F)\le\Creg\I(F),\qquad \Creg<8.937.
\]
The constant is independent of the number of branches, the depth, and
the pattern of data reuse.
\end{corollary}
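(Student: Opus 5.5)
Induction on the number of construction steps, proving the invariant $\HH(F)\le\Creg\I(F)$ for every function produced. For the base case, a regular leaf satisfies $\HH(h)\le\Creg\D(h)\le\Creg\I(h)$ by Theorem~\ref{thm:A}, since $\D(h)\le\I(h)$ by \eqref{eq:D}.

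For the inductive step, take a node $F(x,y)=g_x(y)$ with a fresh uniform selector $x\in\bits^a$ and branches $g_b$ already satisfying $\HH(g_b)\le\Creg\I(g_b)$. The data cube $y$ is the union of all variables used by the branches; each branch may be regarded as a function on this common cube, and adding irrelevant coordinates changes neither entropy nor influence. Freshness means the selector coordinates appear in none of the branches, so $x$ is uniform and independent of $y$, which is exactly the setting of Proposition~\ref{prop:shared-data-collision}.

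The neighboring-branch orthogonality hypothesis $\E[g_bg_{b^{(j)}}]=0$ is the hypothesis of Corollary~\ref{cor:orthogonal-multiplexers}. With $C=\Creg$, that corollary gives $\HH(F)\le\max\{4,\Creg\}\I(F)=\Creg\I(F)$, because $\Creg>4$. The underlying accounting is simple. The bound \eqref{eq:shared-data-upper} gives $\HH(F)\le 2a+N^{-1}\sum_b\HH(g_b)$. The selector term equals $4\I_{\rm sel}(F)$, since orthogonality gives $\I_{\rm sel}(F)=a/2$ via \eqref{eq:shared-selector-influence}. The branch term is at most $\Creg\I_{\rm data}(F)$ by \eqref{eq:shared-data-influence}.

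The only point needing care is that the constant does not accumulate with depth. It does not, because the selector contribution is paid at rate $4<\Creg$ by its own influence, and the branch entropies are averaged with weight $1/N$, which is exactly the weight with which their influences enter $\I_{\rm data}(F)$. The induction therefore closes with the same constant $\Creg$, independent of the number of branches, the depth, and the pattern of data reuse. I would also check two degenerate cases. If $a=0$, the node is trivial. If some branch is constant, orthogonality forces its neighbors to be balanced, but constant branches cause no problem since they contribute zero to both sides of every identity used.
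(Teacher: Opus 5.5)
Your proof is correct and follows the paper's own argument. It applies Theorem~\ref{thm:A} at the regular leaves, regards the branches as functions on their common data cube via irrelevant coordinates, and closes the induction with Corollary~\ref{cor:orthogonal-multiplexers} at $C=\Creg$, using $\max\{4,\Creg\}=\Creg$. One small inaccuracy in your degenerate-case aside: a constant branch does not contribute zero to \eqref{eq:shared-selector-influence}, since it still disagrees with each neighbor with probability $1/2$; this is harmless, because Corollary~\ref{cor:orthogonal-multiplexers} covers arbitrary Boolean branches and constant branches satisfy its entropy hypothesis trivially.
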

\begin{proof}
Theorem~\ref{thm:A} gives the constant $\Creg$ at every regular leaf.
Extending a branch by irrelevant data coordinates changes neither its
entropy nor its influence. Thus the branches at each node can be regarded
as functions on their common data cube. Corollary~\ref{cor:orthogonal-multiplexers}
preserves the constant because $\Creg>4$. Induction proves the claim.
\end{proof}

For a concrete pair with shared data, take
\[
 g_+(y)=\operatorname{Maj}_3(y_1,y_2,y_3),\qquad
 g_-(y)=\operatorname{Maj}_3(y_1,-y_2,y_4).
\]
Their common nonzero Fourier indices are $\{1\}$ and $\{2\}$, whose
inner-product contributions cancel. The two branches are therefore
orthogonal despite sharing two inputs. Their one-bit multiplexer has
two spectral atoms of mass $1/4$ and eight of mass $1/16$, and hence
$\HH(F)=3$ and $\I(F)=2$.

\subsection{Unrestricted selectors and the optimal FEI constant}

Given a balanced Boolean $g$, let
\[
 M_g(x,y_1,y_2)=\frac{1+g(x)}2y_1+\frac{1-g(x)}2y_2.
\]
The exact identities give
\[
 \HH(M_g)=2+\frac12\HH(g),\qquad \I(M_g)=1+\frac12\I(g).
\]
Consequently FEI on the class of these two-cell selectors is equivalent
to full FEI, with the same optimal universal full-FEI constant.
To prove the nontrivial implication without an additive loss, take an
arbitrary Boolean $f$, its tensor power on $t$ independent input blocks,
and balance it with a fresh sign:
\[
 g_t(x^{(1)},\ldots,x^{(t)},z)
   =z\prod_{r=1}^t f(x^{(r)}).
\]
Then
\[
 \HH(g_t)=t\HH(f),\quad \I(g_t)=t\I(f)+1,
\quad \HH(M_{g_t})=2+\tfrac t2\HH(f),\quad
 \I(M_{g_t})=\tfrac32+\tfrac t2\I(f).
\]
If $\HH(M_g)\le C \I(M_g)$ for every balanced $g$, divide the corresponding
inequality by $t/2$ and let $t\to\infty$ to obtain $\HH(f)\le C \I(f)$.
The converse holds because every $M_g$ is Boolean.

The entropy of choosing a selector label is always bounded by $2D$.
For arbitrary cells, the unresolved term is the normalized Fourier
entropy inside the selected cell. A cheap label description alone does
not establish a uniform bound for that term.

\subsection{Rare activation and the optimal FEI constant}\label{sec:rare-equivalence}

Rare-activation reductions to small-influence functions appear in
Wan--Wright--Wu~\cite{WWW14} and Shalev~\cite{Shalev18}.
The following exact calculation records a version that preserves the
optimal universal constant.

Let $g$ be balanced and Boolean, let $A_r$ be the indicator of one point
on $r\ge1$ fresh control bits, and set
\[
 p=2^{-r},\qquad F=1+A_r(g-1).
\]
The function $F$ is Boolean and $\P(F=-1)=p/2$.
\begin{proposition}[Exact activation identities]\label{prop:activation}
With this notation,
\begin{align}
 \HH(F)&=p\HH(g)+2rp(2-p)
       +(1-p)^2\log\frac1{(1-p)^2},\label{eq:activation-entropy}\\
 \I(F)&=p\bigl(\I(g)+r\bigr).\label{eq:activation-influence}
\end{align}
\end{proposition}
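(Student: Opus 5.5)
The plan is to compute the Fourier expansion of $F$ explicitly and read both identities off it, exactly as in the selector identities of Proposition~\ref{prop:selector-identities}.

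\textbf{Step 1: the expansion.} Let $u\in\bits^r$ be the distinguished control point. Then
\[
 A_r(z)=\prod_{i=1}^r\frac{1+u_iz_i}{2}=p\sum_{T\subseteq[r]}\chi_T(u)\chi_T(z).
\]
Write $F=(1-A_r)+A_rg$. Because $g$ is balanced, $\wh g(\varnothing)=0$, and since the control bits are fresh the product $A_rg$ expands blockwise. This gives three families of coefficients:
\[
 \wh F(\varnothing)=1-p,\qquad
 \wh F(T)=-p\chi_T(u)\ (T\ne\varnothing),\qquad
 \wh F(T\cup U)=p\chi_T(u)\wh g(U)\ (U\ne\varnothing,\ T\text{ arbitrary}).
\]
Different families have disjoint index sets, so there is no interference between them.

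\textbf{Step 2: the entropy identity.} We sum $\wh F^2\log(1/\wh F^2)$ over the three families.

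\begin{enumerate}[label=(\roman*)]
\item The empty set contributes $(1-p)^2\log\frac1{(1-p)^2}$.
\item The $2^r-1$ nonempty control-only sets each carry mass $p^2$. Since $2^r=1/p$, they contribute $(2^r-1)p^2\cdot2r=2rp(1-p)$.
\item The mixed sets carry mass $p^2\wh g(U)^2$. Summing over the $2^r$ choices of $T$, and using $\sum_U\wh g(U)^2=1$, they contribute
\[
 p\sum_U\wh g(U)^2\Bigl(2r+\log\frac1{\wh g(U)^2}\Bigr)=2rp+p\HH(g).
\]
\end{enumerate}

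Adding the three contributions gives $p\HH(g)+2rp(2-p)+(1-p)^2\log\frac1{(1-p)^2}$, which is \eqref{eq:activation-entropy}.

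\textbf{Step 3: the influence identity.} We weight the same masses by $|T|+|U|$.

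\begin{enumerate}[label=(\roman*)]
\item The control-only sets give $p^2\sum_T|T|=p^2\,r2^{r-1}=rp/2$.
\item The mixed sets give $p^2\sum_T\sum_U(|T|+|U|)\wh g(U)^2=p\bigl(r/2+\I(g)\bigr)$.
\end{enumerate}

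Their sum is $p(\I(g)+r)$, which is \eqref{eq:activation-influence}.

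As a check, the Boolean claims can be confirmed directly: $F=1$ off the point $u$, and $F=g$ on it, so $\P(F=-1)=p/2$. As a further check, $p(\I(g)+r)$ agrees with a direct influence count: each control bit is pivotal exactly when the other controls agree with $u$ and $g=-1$, which happens with probability $p$, giving $rp$, and each data influence is scaled by $p$.

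There is no real obstacle. The only point needing care is balancedness: it guarantees that the constant part $1-A_r$ and the data part $A_rg$ never share a Fourier index, which is what makes the three families separate cleanly.
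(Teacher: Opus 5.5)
Your proof is correct and matches the paper's: you use the same three families of Fourier coefficients ($1-p$, $-p\chi_T(u)$, and $p\chi_T(u)\wh g(U)$), with balancedness keeping them disjoint, and obtain the same three entropy contributions. The one small difference is that you compute $\I(F)$ by weighting the spectrum by set size, while the paper counts pivotal events directly (each control bit is pivotal with probability $p$, and each data influence is scaled by $p$); you also record that count as a check.
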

\begin{proof}
Write $T$ for a data Fourier set and $U$ for a control Fourier set.
For $T\ne\varnothing$, the coefficient is
$p\varepsilon_U\wh g(T)$, where $\varepsilon_U\in\{-1,1\}$ depends
only on the selected control point. For $T=\varnothing$ and
$U\ne\varnothing$, it is $-p\varepsilon_U$; the constant coefficient is
$1-p$. Summing the nonempty-data terms over the $2^r$ control sets gives
$p\HH(g)+2rp$. The nonempty-control, empty-data terms contribute
$2rp(1-p)$, and the constant term gives the last term in
\eqref{eq:activation-entropy}.

Each data influence is multiplied by $p$. A control bit is pivotal
precisely when all other controls match the specified point and $g=-1$,
an event of probability $2^{-(r-1)}/2=p$. Summing proves
\eqref{eq:activation-influence}.
\end{proof}

Let $C_{\rm FEI}$ be the supremum of $\HH(f)/\I(f)$ over nonconstant
Boolean functions, allowing the value $+\infty$.
For $\varepsilon,\delta>0$, let $C_{\varepsilon,\delta}^{\rm act}$ be
the same supremum restricted to the above balanced-core activations
satisfying $0<\I(F)\le\varepsilon$ and $\P(F=-1)\le\delta$.
\begin{corollary}\label{cor:activation-equivalence}
For every $\varepsilon,\delta>0$,
\[
 C_{\varepsilon,\delta}^{\rm act}=C_{\rm FEI}.
\]
\end{corollary}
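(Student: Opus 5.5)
We prove the two inequalities separately. The inequality $C_{\varepsilon,\delta}^{\rm act}\le C_{\rm FEI}$ is immediate, since every activation $F$ is a nonconstant Boolean function. For the reverse inequality, take an arbitrary nonconstant Boolean $f$ and build from it a family of balanced-core activations whose ratio $\HH(F)/\I(F)$ tends to $\HH(f)/\I(f)$, while $\I(F)$ and $\P(F=-1)$ become arbitrarily small.

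\emph{Core.} As in the two-cell selector reduction, use the balanced tensor power
\[
 g_t=z\prod_{s=1}^t f(x^{(s)}).
\]
It satisfies $\HH(g_t)=t\HH(f)$ and $\I(g_t)=t\I(f)+1$. The fresh sign $z$ makes $g_t$ balanced for every $f$.

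\emph{Activation identities.} Form $F_{t,r}=1+A_r(g_t-1)$ with $p=2^{-r}$. Proposition~\ref{prop:activation} gives
\[
 \HH(F_{t,r})=p\,t\HH(f)+2rp(2-p)+(1-p)^2\log\frac1{(1-p)^2},
 \qquad
 \I(F_{t,r})=p\bigl(t\I(f)+1+r\bigr).
\]
The last entropy term satisfies $(1-p)^2\log(1/(1-p)^2)\le 2(\log e)\,p$, since $-\ln(1-p)\le p/(1-p)$. Dividing the entropy identity by $p$ therefore gives
\[
 \frac{\HH(F_{t,r})}{\I(F_{t,r})}
 \ge\frac{t\HH(f)}{t\I(f)+1+r}.
\]
More precisely, the ratio equals $\bigl(t\HH(f)+O(r+1)\bigr)/\bigl(t\I(f)+r+1\bigr)$; only the lower bound is needed.

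\emph{Choice of parameters.} We need $\I(F_{t,r})\le\varepsilon$ and $\P(F_{t,r}=-1)=p/2\le\delta$. This requires $2^{-r}(t\I(f)+1+r)\le\varepsilon$, so $r$ must grow like $\log t$ plus a constant. Take $r=r(t)=\lceil\log_2\bigl(t^2/\varepsilon\bigr)\rceil+c$ for a suitable constant $c$. Then $2^{-r}(t\I(f)+1+r)\to0$, so both smallness constraints hold for large $t$. Since $r(t)=O(\log t)$, we have $r/t\to0$, and the ratio bound tends to $\HH(f)/\I(f)$ as $t\to\infty$.

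\emph{Conclusion.} For every nonconstant $f$ this gives $C_{\varepsilon,\delta}^{\rm act}\ge\HH(f)/\I(f)$. Taking the supremum over $f$ gives $C_{\varepsilon,\delta}^{\rm act}\ge C_{\rm FEI}$, including the case $C_{\rm FEI}=+\infty$.

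The main obstacle is the tension between the two requirements. The influence $p(t\I(f)+1+r)$ must be small, yet the additive overhead $r$ in the influence and $2rp$ in the entropy must be negligible relative to $t\I(f)$. The construction handles this by letting the tensor power $t$ absorb a logarithmic $r$. The only remaining checks are that $g_t$ is nonconstant, which holds because $f$ is, and that the continuity and limit statements are routine.
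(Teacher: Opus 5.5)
Your proposal is correct and follows the paper's proof: you balance the tensor powers of $f$ with a fresh sign, activate on $r=O(\log t)$ fresh controls, and let $t\to\infty$ so that $r/t\to0$ while $\I(F)$ and $\P(F=-1)$ tend to zero. The differences are cosmetic. You use the one-sided bound $\HH(F)\ge p\,t\HH(f)$, which needs only nonnegativity of the extra entropy terms and not the estimate $(1-p)^2\log(1/(1-p)^2)\le2(\log e)p$ that you cite for it. The paper instead computes the exact limit of the ratio, with the $\varepsilon$-independent choice $r_t=\lceil2\log(t+1)\rceil$.
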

\begin{proof}
The upper bound follows from inclusion of classes. For the reverse
bound, fix any nonconstant Boolean $f$ and balance its tensor powers by
\[
 g_t=z\prod_{j=1}^t f(x^{(j)}),
 \qquad \HH(g_t)=t\HH(f),\qquad \I(g_t)=t\I(f)+1.
\]
The blocks and the sign $z$ are independent. Take
$r_t=\lceil2\log(t+1)\rceil$, $p_t=2^{-r_t}$, and activate $g_t$ on a
point of $r_t$ fresh controls. Then $\I(F_t)\to0$ and
$\P(F_t=-1)\to0$. Also
\[
 \frac{(1-p)^2\log(1/(1-p)^2)}p\longrightarrow2\log e
 \qquad(p\downarrow0).
\]
Dividing \eqref{eq:activation-entropy} and
\eqref{eq:activation-influence} by $p_t t$ therefore gives
\[
 \frac{\HH(F_t)}{\I(F_t)}\longrightarrow
 \frac{\HH(f)}{\I(f)},
\]
since $r_t/t\to0$. For all sufficiently large $t$, the prescribed
$\varepsilon,\delta$ bounds hold. Taking the supremum over $f$ proves
the assertion, including the possibility of an infinite supremum.
\end{proof}

Thus a uniform estimate for arbitrary sparsely activated cores would
resolve full FEI. Rarity and small total influence alone impose no
smaller optimal constant. This conclusion concerns arbitrary cores;
the structured gate hypotheses in the preceding theorems supply
additional information.

\section{Further consequences and questions}\label{sec:consequences}

\subsection{Spectral concentration for read-\texorpdfstring{$k$}{k} DNF}

Entropy bounds give sparse Fourier approximations directly. We record the consequence for DNF to make the dependence on the approximation error explicit. This is the concentration problem studied by Mansour \cite{Mansour95}; sharper sparsity bounds as a function of the read number are known from Lecomte and Tan \cite{LT21}.

\begin{corollary}\label{cor:dnf-concentration}
Let $f$ have a read-$k$ DNF representation with $s\ge1$ terms. For every $0<\eps<1$, there is a family $\mathcal A$ of Fourier sets such that
\[
 \sum_{S\in\mathcal A}\wh f(S)^2\ge1-\eps,
 \qquad
 |\mathcal A|\le(2s)^{\,2(41+7\log k)/\eps}.
\]
Thus deleting the coefficients outside $\mathcal A$ gives a polynomial with squared $L^2$ error at most $\eps$.
\end{corollary}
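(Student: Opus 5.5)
A read-$k$ DNF with $s$ terms is a union of $s$ conjunctions. Each conjunction is a single regular function (an AND is a threshold function), and every variable is relevant to at most $k$ terms. So the plan is to apply Theorem~\ref{union:theorem} with $M=1$, which gives
\[
 \HH(f)\le C\,\I(f),\qquad C=41+7\log k .
\]
The remaining work is to turn this entropy bound into Fourier concentration on a small family. We do this with the standard Markov-type argument, in two steps. First we bound the influence. Then we truncate the spectrum both by coefficient size and by level.

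\textbf{Influence bound.} We need $\I(f)$ controlled by $\log s$. The route I would try first is to truncate wide terms and avoid the exact influence altogether. Deleting every term of width greater than $w=\log(2s)$ changes $f$ on a set of probability at most $s2^{-w}\le1/2$. That is too crude as it stands, so instead I would use the known bound $\I(f)\le 2\log(2s)$ for $s$-term DNF. This follows because a variable is pivotal only when the term containing it is nearly satisfied; equivalently, use the pivotality/width argument of Boppana type, $\I\le 2w_{\rm avg}$ with a width truncation. I would simply invoke $\I(f)\le 2\log(2s)$ as a known fact (Boppana's bound extended to DNF size via width reduction). If needed, the factor can be absorbed by adjusting constants. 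This step is the main obstacle: the exponent in the target, $2C/\eps$ times $\log(2s)$, requires exactly $\I(f)\le\log(2s)$ up to the factor already encoded, so I must check which form of the influence bound matches the constant $2$ in the exponent.

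\textbf{Truncation by coefficient size.} Put $\HH=\HH(f)\le C\I(f)$ and let
\[
 \mathcal A=\{S:\wh f(S)^2\ge 2^{-\HH/\eps}\}.
\]
Since the squared coefficients form a probability distribution, $|\mathcal A|\le 2^{\HH/\eps}$. For the mass outside $\mathcal A$, every $S\notin\mathcal A$ has $\log(1/\wh f(S)^2)>\HH/\eps$. Markov's inequality applied to the random variable $\log(1/\wh f(\mathcal S)^2)$, whose mean is $\HH$, gives
\[
 \sum_{S\notin\mathcal A}\wh f(S)^2\le\eps .
\]
Combining the pieces,
\[
 |\mathcal A|\le 2^{C\I(f)/\eps}\le 2^{2C\log(2s)/\eps}=(2s)^{2C/\eps}.
\]
This is the claimed bound, provided $\I(f)\le2\log(2s)$. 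The final sentence of the corollary is just Parseval applied to the truncated polynomial $\sum_{S\in\mathcal A}\wh f(S)\chi_S$.
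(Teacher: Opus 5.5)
Your architecture matches the paper's proof. You apply Theorem~\ref{union:theorem} with $M=1$: each term is a conjunction, hence a threshold function, and read-$k$ is exactly the overlap hypothesis. You then truncate at $\wh f(S)^2\ge2^{-\HH(f)/\eps}$ via Markov. Those parts are fine.

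The gap is the influence bound $\I(f)\le2\log(2s)$, which you flag as the main obstacle and then only assert. The exponent $2(41+7\log k)/\eps$ is explicit, so nothing can be ``absorbed'': you need this exact constant, not merely $\I(f)=O(\log s)$. The routes you sketch do not give it.
\begin{itemize}
\item Deleting wide terms yields a function close to $f$ in measure. Total influence is not stable under such changes: altering $f$ on a $\delta$-fraction of the cube can change $\I$ by order $\delta n$. So no width-reduction step bounds $\I(f)$ itself.
\item Charging each pivotal edge to a nearly satisfied term containing $i$ gives only $\I(f)\le\sum_T|T|2^{1-|T|}$. This can be of order $s$, for example for $s$ terms of width $1$ or $2$.
\end{itemize}
Your phrase $\I\le2w_{\rm avg}$ points in the right direction. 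The relevant average, however, is of the least width of a \emph{satisfied} term over random inputs, not an average over terms.

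The paper's argument is short. First dispose of constant $f$, and delete contradictory terms and repeated literals. Suppose $f(x)=1$ and flipping $x_i$ makes $f$ false. Then $i$ lies in every term satisfied by $x$. So the sensitivity at a true input is at most $L(x)$, the least width of a satisfied term; put $L=0$ at false inputs. Every boundary edge has exactly one true endpoint, so counting boundary edges from their true endpoints gives $\I(f)\le2\E L$. A union bound over the at most $s$ terms gives $\P(L\ge t)\le\min\{1,s2^{-t}\}$. Write $s=2^mr$ with $m=\lfloor\log s\rfloor$ and $1\le r<2$. Then
\[ \E L\le\sum_{t\ge1}\min\{1,s2^{-t}\}=m+r\le m+1+\log r=\log(2s), \]
using $r\le1+\log r$ on $[1,2]$ by concavity. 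Hence $\I(f)\le2\log(2s)$ and $\HH(f)\le2(41+7\log k)\log(2s)$, after which your Markov truncation completes the proof.
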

\begin{proof}
The result is immediate if $f$ is constant. Otherwise, after deleting contradictory terms and repeated literals, at most $s$ nonempty terms remain. On an input satisfying the DNF, let $L$ be the least width of a satisfied term, and put $L=0$ on other inputs. Every coordinate on which a positive input is sensitive belongs to each satisfied term, so its sensitivity is at most $L$. Every boundary edge has one positive endpoint. Therefore
\[
 \I(f)\le2\E L,
 \qquad
 \P(L\ge t)\le\min\{1,s2^{-t}\}\quad(t\ge1).
\]
For the latter bound, the event $L\ge t$ requires some term of width at least $t$ to be satisfied. A union bound suffices. Write $s=2^m r$, where $m=\lfloor\log s\rfloor$ and $1\le r<2$. Then
\[
 \E L\le\sum_{t\ge1}\min\{1,s2^{-t}\}
 =m+r\le m+\log r+1=\log s+1.
\]
The scalar inequality $r\le1+\log r$ on $[1,2]$ follows by concavity. Theorem~\ref{thm:readk-intro} with $M=1$ now gives
\[
 \HH(f)\le2(41+7\log k)\log(2s).
\]
If $\HH(f)=0$, one coefficient carries all the mass. Otherwise apply Markov's inequality to $\log(1/\wh f(\mathcal S)^2)$ under the squared Fourier law. The sets with $\wh f(S)^2\ge2^{-\HH(f)/\eps}$ carry mass at least $1-\eps$, and their number is at most $2^{\HH(f)/\eps}$. Parseval gives the approximation assertion.
\end{proof}

\subsection{The product-measure question for weighted thresholds}

The next composition question is whether an absolute constant $C$ satisfies
\[
 \HH_\mu^+(f)\le C\bigl(\I_\mu(f)-\Var_\mu(f)\bigr)
\]
for every weighted linear threshold function and every product measure. Theorem~\ref{bias:strong} proves this for symmetric thresholds. A positive answer for arbitrary weights would remove the balance assumption on weighted internal gates in Corollary~\ref{comp:hybrid}, by the same exact composition identities.

The antichain argument raises a second question. Can a spectral reference measure control the cost of many incomparable coordinate orders without a factor depending on their number? The width bound of Theorem~\ref{thm:width} and the overlap bound of Theorem~\ref{thm:readk-intro} provide two ways to limit that cost. The unrestricted FEI conjecture requires an argument without either restriction.

\section*{Acknowledgements and funding}
The final manuscript was audited by Claude Opus 5.5 (Anthropic), which checked the proofs, recomputed the numerical constants in exact rational and high-precision arithmetic, and reran the finite computations in the repository.

The research presented in this paper was supported by the European Research
Council (ERC) under the European Union's Horizon Europe research and
innovation programme (grant agreement No.~101041711), by the Simons
Foundation as part of the Collaboration on the Mathematical and Scientific
Foundations of Deep Learning, by Heights Labs, by Convex Nexus Capital, by
the Israel Science Foundation (grant number 2258/19), and by the Israel
Science Foundation (ISF Grant 4101/25). Computational resources were
provided in part by Google Cloud through its education credits program
(credit 530550103).

\section*{Declaration of competing interest}
The author holds equity in Heights Labs and in Convex Nexus Capital and
has a management role at Convex Nexus Capital. These entities are included
in the funding declaration above.

\begin{samepage}
\section*{Declaration of generative AI and AI-assisted technologies}
The paper comes out of Math Brain, an ongoing project on the use of AI systems for research mathematics. Earlier versions of the rank, bounded-overlap, and biased-threshold arguments were developed in September 2026 with Claude (Anthropic). In October 2026, OpenAI's GPT-6 Astra Pro, working in ChatGPT with collaborating agents, was used to extend the results, improve the constants, reconstruct and check the proofs, search the literature, carry out finite computations, and prepare the manuscript. Separate agents reviewed the rank, bounded-overlap, biased-threshold, joint-witness, selector, and rare-activation arguments. Claude Opus 5.5 (Anthropic) audited the final manuscript, including the proofs, the numerical constants and the finite computations. These checks are not peer review. The author directed the work and is responsible for the content of the paper and for decisions concerning its dissemination.
\par
\end{samepage}

\section*{Data availability}
\url{https://github.com/yspennstate/fourier-entropy-threshold-functions}

\end{document}